\newtheorem{claim}{Claim}
\newcommand{\nn}{\mbox{${\mathbb N}$}}
\newcommand{\Domain}{\mbox{$\textsf{Dom}$}}
\newcommand{\val}{\mbox{$\textrm{\em va}\ell$}}
\newcommand{\lab}{\mbox{$\ell ab$}}
\newcommand{\eda}{\mbox{$E_{\downarrow}$}}
\newcommand{\era}{\mbox{$E_{\rightarrow}$}}
\renewcommand{\root}{\mbox{{\rm root}}}
\newcommand{\ls}{\mbox{{\rm last-sibling}}}
\newcommand{\fs}{\mbox{{\rm first-sibling}}}
\newcommand{\leaf}{\mbox{{\rm leaf}}}
\newcommand{\A}{\mbox{$\mathcal{A}$}}
\newcommand{\C}{\mbox{$\mathcal{C}$}}
\newcommand{\D}{\mbox{$\mathcal{D}$}}
\newcommand{\F}{\mbox{$\mathcal{F}$}}
\renewcommand{\L}{\mbox{$\mathcal{L}$}}
\newcommand{\M}{\mbox{$\mathcal{M}$}}
\newcommand{\N}{\mbox{$\mathcal{N}$}}
\renewcommand{\O}{\mbox{$\mathcal{O}$}}
\renewcommand{\P}{\mbox{$\mathcal{P}$}}
\renewcommand{\S}{\mbox{$\mathcal{S}$}}
\newcommand{\T}{\mbox{$\mathcal{T}$}}
\newcommand{\V}{\mbox{$\mathcal{V}$}}
\newcommand{\Val}{\mbox{$\textrm{Val}$}}
\newcommand{\sA}{\mbox{\scriptsize $\mathcal{A}$}}
\newcommand{\sM}{\mbox{\scriptsize $\mathcal{M}$}}
\newcommand{\sP}{\mbox{\scriptsize $\mathcal{P}$}}
\newcommand{\sT}{\mbox{\scriptsize $\mathcal{T}$}}
\newcommand{\MSO}{\mbox{$\textrm{MSO}$}}
\newcommand{\FO}{\mbox{$\textrm{FO}$}}
\newcommand{\EMSO}{\mbox{$\exists\MSO$}}
\newcommand{\frU}{\mbox{$\mathfrak{U}$}}
\newcommand{\sfrU}{\mbox{\scriptsize $\mathfrak{U}$}}
\newcommand{\bbN}{\mbox{$\mathbb{N}$}}
\newcommand{\bbS}{\mbox{$\mathbb{S}$}}
\newcommand{\sbbS}{\mbox{\scriptsize $\mathbb{S}$}}
\newcommand{\np}{\mbox{$\textsc{NP}$}}
\newcommand{\nexp}{\mbox{$\textsc{NExpTime}$}}
\newcommand{\exptime}{\mbox{$\textsc{ExpTime}$}}
\newcommand{\sfMSOqr}{\mbox{$\textsf{MSO-qr}$}}
\newcommand{\sfFOqr}{\mbox{$\textsf{FO-qr}$}}
\newcommand{\scroot}{\mbox{$\textsc{root}$}}
\newcommand{\dvSucc}{\mbox{$\prec_{suc}$}}
\newcommand{\Parikh}{\mbox{$\textsf{Parikh}$}}
\newcommand{\sfProfile}{\mbox{$\textsf{Profile}$}}
\newcommand{\sfProj}{\mbox{$\textsf{Proj}$}}
\newcommand{\sem}[1]{\llbracket #1 \rrbracket}
\newcommand{\semt}[1]{{\sem #1}_t}
\newcommand{\LTL}{\mbox{$\textrm{LTL}$}}
\newcommand{\ttU}{\mbox{$\texttt{U}$}}
\newcommand{\ttX}{\mbox{$\texttt{X}$}}
\newcommand{\RA}{\mbox{$\textrm{RA}$}}
\newcommand{\OMIT}[1]{}
\title{Extending two-variable logic on data trees
with order on data values and its automata}
\author{Tony Tan\\
Hasselt University and Transnational University of Limburg}
\begin{abstract} 
Data trees are trees in which each node, besides carrying a label from a finite alphabet, 
also carries a data value from an infinite domain. 
They have been used as an abstraction model for reasoning tasks on {XML} and verification. 
However, most existing approaches consider the case where 
only equality test can be performed on the data values.

In this paper we study data trees in which the data values come from a linearly ordered domain, 
and in addition to equality test, 
we can test whether the data value in a node is greater than the one in another node. 
We introduce an automata model for them which 
we call {\em ordered-data tree automata} (ODTA), provide its logical characterisation,
and prove that its non-emptiness problem is decidable in 3-$\nexp$. 
We also show that the two-variable logic on unranked data trees, 
studied by Bojanczyk, Muscholl, Schwentick and Segoufin in 2009, corresponds precisely to
a special subclass of this automata model.

Then we define a slightly weaker version of ODTA, which we call {\em weak ODTA}, 
and provide its logical characterisation. 
The complexity of the non-emptiness problem drops to $\np$.
However, a number of existing formalisms and models studied in the literature 
can be captured already by weak ODTA.
We also show that the definition of ODTA can be easily modified,
to the case where the data values come from a tree-like partially ordered domain, 
such as strings.
\end{abstract}
\keywords{Finite-state automata, Two-variable logic, Data trees, Ordered data values}
\begin{document}
            
\begin{bottomstuff} 
The extended abstract of this article has been published in
the proceedings of LICS 2012, 
under the title: ``An Automata Model for Trees with Ordered Data Values.''
The author is FWO Pegasus Marie Curie Fellow.
\end{bottomstuff}
            
\maketitle

\section{Introduction}
\label{s: intro}

Classical automata theory studies words and trees over finite alphabets.
Recently there has been a growing interest in the so-called ``data'' words and trees,
that is, words and trees in which each position,
besides carrying a label from a finite alphabet,
also carries a data value from an infinite domain.

Interest in such structures with data springs due to their connection to
XML~\cite{tova-typing-one,AFL-sicomp,dortmund-mfcs08,edt,FL-jacm,figueira-pods09,neven-csl},
as well as system 
specifications~\cite{therien-and-co,demri-souza-gascon-lfcs07,segoufin-torunczyk-stacs11},
where many properties simply cannot be captured by finite alphabets.
This has motivated various works on 
data words~\cite{benedikt-ley-puppis-csl10,BDMSS11,demri-lazic-tocl,grumber-kupferman-sheinvald-lata10,kaminski-francez,NSV04},
as well as on 
data trees~\cite{henrik-mikolaj,BMSS09,Fig12b,figueira-segoufin-stacs11,lazic-lics07}.
The common feature of these works is 
the addition of equality test on the data values to the logic on trees.
While for finitely-labeled trees many
logical formalisms (e.g., the monadic second-order logic $\MSO$) 
are decidable by converting formulae to automata,
even $\FO$ (first-order logic) on data words extended with data-equality
is already undecidable.
See, e.g.,~\cite{BDMSS11,FL-jacm,NSV04}.

Thus, there is a need for expressive enough, while computationally well-behaved, 
frameworks to reason about structures with data values. 
This has been quite a common theme in XML and 
system specification research.
It has largely followed two routes. 
The first takes a specific reasoning task, or a set of similar tasks, 
and builds algorithms for them (see,
e.g.,~\cite{AFL-sicomp,Fig11,dortmund-mfcs08,schwentick-sigmodr,FL-jacm,figueira-pods09}).
The second looks for sufficiently general automata models
that can express reasoning tasks of interest, but are still decidable
(see, e.g., \cite{demri-lazic-tocl,BMSS09,lazic-lics07,segoufin-torunczyk-stacs11}).

Both approaches usually assume that data values 
come from an abstract set equipped only with the equality predicate.
This is already sufficient to capture a wide range of interesting applications
both in databases and verification.
However, it has been advocated in~\cite{deutsch-HPV-data-icdt09}
that comparisons based on a linear order over the data values could be useful
in many scenarios, including data centric applications built on top of a database.

So far, not many works have been done in this direction.
A few works such as~\cite{fo2-amal,Fig11,schwentick-zeume,segoufin-torunczyk-stacs11}
are on words, while in most applications we need to consider trees.
Moreover, these works are incomparable to some interesting existing formalisms
\cite{FL-jacm,BMSS09,AFL-sicomp,edt,lazic-lics07,demri-lazic-tocl,Lazic11}
known to be able to capture various interesting scenarios common in practice.
On top of that many useful techniques,
notably those introduced in~\cite{FL-jacm,BDMSS11,BMSS09,lazic-lics07},
can deal only with data equality, and are highly dependent
on specific combinatorial properties of the formalisms.
They are rather hard to adapt to other more specific tasks,
let alone being generalised to include more relations on data values,
and they tend to produce extremely high complexity bounds,
such as non-primitive-recursive, or 
at least as hard as the reachability problem in Petri nets.
Furthermore, many known decidability results are lost
as soon as we add the order relation on data values.
Some exceptions are~\cite{FHL10,Fig12b}.

In this paper we study the notion of data trees in which the data values
come from a linearly ordered domain, which we call {\em ordered-data trees}.
In addition to equality tests on the data values,
in ordered-data trees we are allowed to test whether 
the data value in a node is greater than the data value in another node.
To the extent it is possible,
we aim to unify various ad hoc methods introduced to reason about data trees,
and generalise them to ordered-data trees to
make them more accessible and applicable in practice.
This paper is the first step,
where we introduce an automata model for ordered-data trees,
provide its logical characterisation, and prove that it has decidable non-emptiness problem.
Moreover, we also show that it can capture various well known formalisms.

\paragraph*{Brief description of the results in this paper}
The trees that we consider are {\em unranked} trees where
there is no a priori bound in the number of children of a node.
Moreover, we also have an order on the children of each node.
We consider a natural logic for ordered-data trees,
which consists of the following relations.
\begin{itemize}\itemsep=0pt
\item
The parent relation $\eda$, where
$\eda(x,y)$ means that node $x$ is the parent of node $y$. 
\item
The next-sibling relation $\era$, where 
$\era(x,y)$ means that nodes $x$ and $y$ have the same parent
and $y$ is the next sibling of $x$.
\item
The labeling predicates $a(\cdot)$'s,
where $a(x)$ means that node $x$ is labeled with symbol $a$.
\item
The data equality predicate $\sim$,
where $x\sim y$ means that nodes $x$ and $y$ have the same data value.
\item
The order relation on data $\prec$,
where $x \prec y$ means that the data value in node $x$ is less than
the one in node $y$.
\item
The successive order relation on data $\dvSucc$,
where $x \dvSucc\ y$ means that the data value in node $y$
is the minimal data value in the tree greater than the one in node $x$.
\end{itemize}

We introduce an automata model for ordered-data trees, 
which we call {\em ordered-data tree automata} (ODTA), 
and provide its logical characterisation.
Namely, we prove that the class of languages accepted by ODTA
corresponds precisely to those expressible by formulas of the form:
\begin{equation}
\label{eq: formula}
\exists X_1 \cdots \exists X_n 
\ \varphi \wedge \psi,
\end{equation}
where 
\begin{itemize}\itemsep=0pt
\item
$X_1,\ldots,X_n$ are monadic second-order predicates;
\item
$\varphi$ is an FO formula restricted to two variables
and using only the predicates $\eda$, $\era$, $\sim$,
as well as the unary predicates $X_1,\ldots,X_n$ and $a$'s.
\item
$\psi$ is an FO formula using only the predicates $\sim$, $\prec$, $\dvSucc$,
as well as the unary predicates $X_1,\ldots,X_n$ and $a$'s.
\end{itemize}
We show that the logic $\EMSO^2(\eda,\era,\sim)$,
first studied in~\cite{BMSS09}, corresponds precisely
to a special subclass of ODTA,
where $\EMSO^2(\eda,\era,\sim)$ denotes
the set of formulas of the form~(\ref{eq: formula})
in which $\psi$ is a true formula.
We then prove that the non-emptiness problem of ODTA 
is decidable in 3-$\nexp$.
Our main idea here is to show how to convert
the ordered-data trees back to a string over {\em finite} alphabets.
(See our notion of {\em string representation of data values} in Section~\ref{s: ordered-data tree}.)
Such conversion enables us to use
the classical finite state automata to reason about data values.

Then we define a slightly weaker version of ODTA,
which we call {\em weak ODTA}.
Essentially the only feature of ODTA missing in weak ODTA
is the ability to test whether two adjacent nodes have the same data value.
Without such simple feature, the complexity of the non-emptiness problem
surprisingly drops three-fold exponentially to $\np$.
We provide its logical characterisation by showing that
it corresponds precisely to the languages 
expressible by the formulas of the form~(\ref{eq: formula})
where $\varphi$ does not use the predicate $\sim$.
We show that a number of existing formalisms and models 
can be captured already by weak ODTA, i.e.
those in~\cite{FL-jacm,edt,fo2-amal}.

We should remark that~\cite{edt} studies a formalism
which consists of tree automata and a collection of
{\em set} and {\em linear} constraints.\footnote{We will later
define formally what set and linear constraints are.}
It is shown that the satisfiability problem of such formalism 
is $\np$-complete. In fact, it is also shown in~\cite{edt} that
a single set constraint (without tree automaton and linear constraint)
already yields $\np$-hardness.
Weak ODTA are essentially
equivalent to the formalism in~\cite{edt} extended 
with the full expressive power of the first-order logic $\FO(\sim,\prec,\dvSucc)$.
It is worth to note that despite such extension,
the non-emptiness problem remains in $\np$.

Finally we also show that the definition of ODTA can be easily modified
to the case where the data values come from a partially ordered domain, such as strings.
This work can be seen as a generalisation of the works 
in~\cite{fo2-lpar} and~\cite{fo2-lata12}.
However, it must be noted that~\cite{fo2-lpar,fo2-lata12}
deal only with {\em data words}, where
only equality test is allowed on the data values
and there is no order on them.

\paragraph*{Related works}
Most of the existing works in this area are on data words.
In the paper~\cite{BDMSS11} the model {\em data automata} was introduced, and 
it was shown that it captures the logic $\exists\MSO^2(\sim,<,+1)$,
the fragment of existential monadic second order logic in which
the first order part uses only two variables
and the predicates: the data equality $\sim$,
as well as the order $<$ and the successor $+1$ on the domain.

An important feature of data automata is
that their non-emptiness problem is decidable, even for infinite words,
but is at least as hard as reachability for Petri nets.
It was also shown that the satisfiability problem for
the three-variable first order logic is undecidable.
Later in~\cite{fo2-lpar} an alternative proof was given
for the decidability of the weaker logic $\EMSO^2(+1,\sim)$.
The proof gives a decision procedure 
with an elementary upper bound for the satisfiability
problem of $\EMSO^2(+1,\sim)$ on strings.
Recently in~\cite{fo2-lata12} an automata model that captures precisely
the logic $\EMSO^2(+1,\sim)$, both on finite and infinite words, is proposed.

Another logical approach is via the so called
{\em linear temporal logic} with freeze quantifier,
introduced in~\cite{demri-lazic-tocl}.
Intuitively, these are LTL formulas equipped
with a finite number of registers to store the data values.
We denote by $\LTL_n^{\downarrow}[\texttt{X},\texttt{U}]$,
the LTL with freeze quantifier,
where $n$ denotes the number of registers and
the only temporal operators allowed are 
the neXt operator $\ttX$ and the Until operator $\ttU$.
It was shown that alternating register automata with $n$ registers ($\RA_n$)
accept all LTL$_n^{\downarrow}[\texttt{X},\texttt{U}]$ languages
and the non-emptiness problem for alternating $\RA_1$ is decidable.
However, the complexity is non primitive recursive.
Hence, the satisfiability problem for
LTL$_1^{\downarrow}(\texttt{X},\texttt{U})$
is decidable as well.
Adding one more register or past time operator $\texttt{U}^{-1}$
to LTL$_1^{\downarrow}(\texttt{X},\texttt{U})$
makes the satisfiability problem undecidable.
In~\cite{FHL10,Fig12b} it is shown that alternating $\RA_1$
can be extended to strings with linearly ordered data values,
and the emptiness problem is still decidable.
In~\cite{Lazic11} a weaker version of alternating $\RA_1$,
called safety alternating $\RA_1$, is considered,
and the non-emptiness problem is shown to be EXPSPACE-complete.

A model for data words with linearly ordered data values
was proposed in~\cite{segoufin-torunczyk-stacs11}.
The model consists of an automaton equipped with a finite number of registers,
and its transitions are based on constraints on the data values
stored in the registers.
It is shown that the non-emptiness problem for this model
is decidable in PSPACE.
However, no logical characterisation is provided
for such model.

In~\cite{aut-group-lics11} another type of register automata for words
was introduced and studied, which is a generalisation
of the original register automata introduced by Kaminski and Francez~\cite{kaminski-francez},
where the data values also can come from a linearly ordered domain.
Thus, the order comparison, not just equality,
can be performed on data values.
The generalisation is done via the notion of monoid for data words,
and is incomparable with our model here.
In the terminology of the original register automata defined in~\cite{kaminski-francez},
it is simply register automata extended with testing whether 
the data value currently read is bigger/smaller than 
those in the registers.

It is shown in~\cite{fo2-amal} that
the satisfiability problem for $\FO^2(+1,\dvSucc)$
over {\em text} is decidable.
A {\em text} is simply a data word in which
all the data values are different and
they range over the positive integers from $1$ to $n$,
for some $n \geq 1$.
We will see later that the satisfiability problem for $\FO^2(+1,\dvSucc)$
can be reduced to the non-emptiness problem of our model.

In~\cite{schwentick-zeume}
it is shown that the satisfiability problem of the logic $\FO^2(<,\prec)$
on {\em words} is decidable.
This logic is incomparable with our model.
However, it should be noted that $\FO^2(<)$ {\em cannot} capture
the whole class of regular languages.

The work on data trees that we are aware of is in~\cite{BMSS09,lazic-lics07}.
In~\cite{BMSS09} it was shown that
the satisfiability problem for the logic $\EMSO^2(\eda,\era,\sim)$
over unranked trees is decidable in 3-$\nexp$.
However, no automata model is provided.
We will see later how this logic corresponds precisely to
a special subclass of ODTA.

In~\cite{lazic-lics07} alternating tree register automata were introduced for trees.
They are essentially the generalisation of the alternating $\RA_1$
to the tree case.
It was shown that this model captures the forward XPath queries.
However, no logical characterisation is provided
and the non-emptiness problem, though decidable, is non primitive recursive.

As mentioned earlier, the main idea in this paper is
the conversion of the data values from an infinite domain back to string over a finite alphabet.
Roughly speaking, it works as follows.
Given an ordered-data tree $t$, we show how to 
construct a string $w$ over a finite alphabet
whose domain corresponds precisely to the data values in $t$.
We then use the classical finite state automaton
to reason about $w$, and thus, also about the data values in $t$.
This idea is the main difference between our paper and the existing works.
Most of the existing techniques rely on some specific combinatorial properties
of the formalisms considered, which make them highly independent of one another.
As we will see later, our model captures quite a few other formalisms
without significant jump in complexity.

\paragraph*{Organisation}
This paper is organised as follows.
In Section~\ref{s: preliminaries} we give some preliminary background.
In Section~\ref{s: ordered-data tree}
we formally define the logic for ordered-data trees and present a few examples
as well as notations that we need in this paper.
In Section~\ref{s: tool} we present two lemmas that we are going to need later on. 
We prove them in a quite general setting,
as we think they are interesting in their own.
We introduce the ordered-data tree automata (ODTA) in Section~\ref{s: S-automata}
and weak ODTA in Section~\ref{s: weak S-automata}.
In Section~\ref{s: undecidable}
we discuss a couple of the undecidable extensions of weak ODTA.
In Section~\ref{s: string data value}
we describe how to modify the definition of ODTA
when the data values are strings, that is,
when they come from a partially ordered domain.
Finally we conclude with some concluding remarks in Section~\ref{s: conclusion}.

\section{Preliminaries}
\label{s: preliminaries}

In this section we review some definitions that
we are going to use later on.
We usually use $\Gamma$ and $\Sigma$ 
to denote finite alphabets.
We write $2^{\Gamma}$ to denote
an alphabet in which each symbol corresponds to a subset of $\Gamma$.
In some cases, we may need the alphabet $2^{2^{\Gamma}}$ --
an alphabet in which each symbol corresponds to a set of subsets of $\Gamma$.
We denote the set of natural numbers $\{0,1,2,\ldots\}$ by $\nn$.

Usually we write $\L$ to denote a language, for both string and tree languages.
When it is clear from the context, 
we use the term {\em language} to mean either a string language,
or a tree language.

\subsection{Finite state automata over strings and commutative regular languages}

We usually write $\M$ to denote a finite state automaton on strings.
The language accepted by the automaton $\M$ is denoted by $\L(\M)$.

Let $\Sigma = \{a_1,\ldots,a_{\ell}\}$.
For a word $w \in \Sigma^{\ast}$,
the Parikh image of $w$ is $\Parikh(w)=(n_1,\ldots,n_{\ell})$,
where $n_i$ is the number of appearances of $a_i$ in $w$.
For a vector $\bar{n}$,
the inverse of the Parikh image of $\bar{n}$ is 
$\Parikh^{-1}(\bar{n}) = \{w \mid w \in \Sigma^{\ast} \ \mbox{and} \ \Parikh(w)=\bar{n}\}$.

For $1\leq i \leq \ell$,
a vector $\bar{v} = (n_1,\ldots,n_{\ell}) \in \nn^{\ell}$
is called an {\em $i$-base},
if $n_i \neq 0$ and $n_j = 0$, for all $j\neq i$.
A language $\L$ is {\em periodic},
if there exist $(\ell+1)$ vectors $\bar{u},\bar{v}_1,\ldots,\bar{v}_{\ell}$
such that $\bar{u} \in \nn^{\ell}$ and each $\bar{v}_i$ is an $i$-base
and 
$$
\L = \bigcup_{h_1,\ldots,h_{\ell}\geq 0} 
\Parikh^{-1}(\bar{u} + h_1\bar{v}_1 + \cdots + h_{\ell} \bar{v}_{\ell}).
$$
We denote such language $\L$ by $\L(\bar{u},\bar{v}_1,\ldots,\bar{v}_{\ell})$.

A language $\L$ is {\em commutative} if it is closed under reordering.
That is, if $w = b_1\cdots b_m \in \L$,
and $\sigma$ is a permutation on $\{1,\ldots,m\}$,
then $b_{\sigma(1)}\cdots b_{\sigma(m)} \in \L$.

The following result is a kind of folklore and can be proved easily.
\begin{theorem}
\label{t: commutative regular}
A language is commutative and regular if and only if
it is a finite union of periodic languages.
\end{theorem}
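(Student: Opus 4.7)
For the easy direction ($\Leftarrow$), I would observe that each periodic language $\L(\bar{u},\bar{v}_1,\ldots,\bar{v}_\ell)$ is commutative by construction (membership depends only on the Parikh image) and regular: writing $\bar{v}_i = p_i \cdot e_i$ where $e_i$ is the $i$-th standard basis vector, a word $w$ lies in the language iff for every $i$ the count $|w|_{a_i}$ equals $u_i + h_i p_i$ for some $h_i \geq 0$, which is a per-letter count condition that a finite automaton can check (interpreting $p_i = 0$ as pinning the count to $u_i$). Closure of the regular and of the commutative languages under finite union then finishes this direction.

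For the hard direction ($\Rightarrow$), suppose $\L \subseteq \Sigma^*$ is commutative and regular with $\Sigma = \{a_1,\ldots,a_\ell\}$. I would first restrict to $\L_0 = \L \cap a_1^* a_2^* \cdots a_\ell^*$, which is still regular and, by the commutativity of $\L$, satisfies $\Parikh(\L_0) = \Parikh(\L)$. It then suffices to show that $\Parikh(\L_0)$ is a finite union of Cartesian products of ultimately periodic subsets of $\nn$, since each such product translates directly, after expanding each factor as a finite union of arithmetic progressions, into a finite union of periodic languages in the sense of the statement.

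To obtain the product decomposition, fix a DFA $M = (Q,\Sigma,\delta,q_{\mathrm{init}},F)$ for $\L_0$. For any pair of states $q, q'$ and any letter $a_i$, the set $R_i^{q,q'} = \{n \geq 0 : \delta^*(q, a_i^n) = q'\}$ is an ultimately periodic subset of $\nn$, because the sequence $n \mapsto \delta^*(q, a_i^n)$ is ultimately periodic by pigeonhole over $Q$. A word $a_1^{n_1} a_2^{n_2} \cdots a_\ell^{n_\ell}$ belongs to $\L_0$ precisely when there is a state trajectory $q_0 = q_{\mathrm{init}}, q_1, \ldots, q_\ell$ with $q_\ell \in F$ and $n_i \in R_i^{q_{i-1},q_i}$ for every $i$. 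Summing over the finitely many such trajectories,
\[
\Parikh(\L) \;=\; \bigcup_{(q_0,\ldots,q_\ell)\,:\,q_0=q_{\mathrm{init}},\,q_\ell\in F} R_1^{q_0,q_1} \times R_2^{q_1,q_2} \times \cdots \times R_\ell^{q_{\ell-1},q_\ell},
\]
and distributing the finite union-of-arithmetic-progressions expansion of each factor over the Cartesian products yields the desired finite union of periodic sets.

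The point that requires the most care is ensuring the resulting periods really are coordinate-aligned $i$-bases rather than mixed combinations of several coordinates, which is the whole reason for passing to $\L_0$: once the state trajectory through the block boundaries is fixed, the constraint on $n_i$ depends only on $q_{i-1}$ and $q_i$ and is otherwise independent of the other $n_j$'s, so the product structure $R_1 \times \cdots \times R_\ell$ is automatic and no mixed periods can appear. Everything else is routine bookkeeping to rewrite a product of arithmetic progressions $\prod_i \{u_i + h_i p_i : h_i \geq 0\}$ as the Parikh image of $\L(\bar{u},\bar{v}_1,\ldots,\bar{v}_\ell)$ with $\bar{u} = (u_1,\ldots,u_\ell)$ and $\bar{v}_i = p_i e_i$.
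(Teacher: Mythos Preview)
The paper does not actually prove this theorem; it declares it folklore and gives no argument. Your proof is correct and is in fact the standard one: intersect with $a_1^*\cdots a_\ell^*$, fix the DFA's state at each block boundary, and observe that the set of block lengths carrying one state to another under a single letter is ultimately periodic.

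One small point worth flagging. The paper's definition of an $i$-base requires the $i$-th entry to be strictly positive, so a periodic language in the paper's literal sense takes infinitely many values in \emph{every} coordinate of its Parikh image. Read that way the theorem is actually false: the commutative regular language of all words over $\{a_1,a_2\}$ with exactly three occurrences of $a_1$ has Parikh image $\{3\}\times\nn$, which no finite union of such sets can hit. Your parenthetical ``interpreting $p_i=0$ as pinning the count to $u_i$'' is exactly the right fix, and is clearly what the paper intends (and needs in the proof of Lemma~\ref{l: fo2 capture commutative}); this is a slip in the paper's definition, not in your argument.
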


\subsection{Unranked trees, tree automata and transducers}
\label{ss: unranked trees}

An unranked
finite tree domain is a prefix-closed finite subset $D$ of $\nn^*$
(words over $\nn$) such that $u \cdot i \in D$ implies $u\cdot j\in D$
for all $j < i$ and $u\in\nn^*$. Given a finite labeling alphabet
$\Sigma$, a $\Sigma$-labeled unranked tree $t$ is a structure 
$$
\langle D,\eda, \era, \{a(\cdot)\}_{a\in\Sigma}\rangle,
$$ where
\begin{itemize}\itemsep=0pt
\item
$D$ is an unranked tree domain,
\item
$\eda$ is the child relation:
$(u, u\cdot i) \in \eda$ for all $u, u\cdot i\in D$,
\item
$\era$ is the next-sibling relation:
$(u \cdot i, u \cdot (i+1))\in\era$ 
for all $u\cdot i, u\cdot (i+1)\in D$, and
\item
the $a(\cdot)$'s are labeling predicates, i.e. for each node $u$,
exactly one of $a(u)$, with $a\in\Sigma$, is true.
\end{itemize}
We write $\Domain(t)$ to denote the domain $D$.
The label of a node $u$ in $t$ is denoted by $\lab_t(u)$.
If $\lab_t(u) = a$, then we say that $u$ is an $a$-node.

An {\em unranked tree automaton}~\cite{tata,thatcher67} over
$\Sigma$-labeled trees is a tuple $\A=\langle Q,\Sigma,\delta,F\rangle$, where $Q$
is a finite set of states, $F\subseteq Q$ is the set of final states,
and $\delta: Q\times\Sigma\to 2^{(Q^*)}$ is a transition function; we
require $\delta(q,a)$'s to be  regular languages over $Q$ for all
$q\in Q$ and $a\in\Sigma$.

A run of $\A$ over a tree $t$ is a function $\rho_{\sA}: \Domain(t) \to
Q$ such that for each node $u$ with $n$ children 
$u\cdot 0,\ldots, u \cdot (n-1)$, 
the word $\rho_{\sA}(u\cdot 0)\cdots \rho_{\sA}(u\cdot (n-1))$ 
is in the language $\delta(\rho_{\sA}(u),\lab_t(u))$.  
For a leaf $u$ labeled $a$, this means that $u$ could be assigned a state
$q$ if and only if the empty word $\epsilon$ is in $\delta(q,a)$.  A run is
accepting if $\rho_{\sA}(\epsilon)\in F$, i.e., if the root is assigned
a final state.  A tree $t$ is accepted by $\A$ if there exists
an accepting run of $\A$ on $t$.  The set of all trees accepted by $\A$ is
denoted by $\L(\A)$.

An unranked tree (letter-to-letter) transducer with the input alphabet $\Sigma$
and output alphabet $\Gamma$
is a tuple $\T=\langle \A,\mu \rangle$, where $\A$ is a tree automaton
with the set of states $Q$,
and $\mu\subseteq Q\times\Sigma\times\Gamma$ is an output relation.
We call such $\T$ a transducer from $\Sigma$ to $\Gamma$.

Let $t$ be a $\Sigma$-labeled tree, and $t'$ a $\Gamma$-labeled tree
such that $\Domain(t) = \Domain(t')$.
We say that a tree $t'$ is an output of $\T$ on $t$, if
there is an accepting run $\rho_{\sA}$ of $\A$ on $t$ and 
for each $u \in \Domain(t)$, 
it holds that $(\rho_{\sA}(u),\lab_{t}(u),\lab_{t'}(u)) \in \mu$.
We call $\T$ an identity transducer, if
$\lab_{t}(u)=\lab_{t'}(u)$ for all $u\in \Domain(t)$.
We will often view an automaton $\A$ as an identity transducer.

\subsection{Automata with Presburger constraints (APC)}

An automaton with Presburger constraints (APC) 
is a tuple $\langle \A,\xi \rangle$, where $\A$ is an unranked tree automaton with
states $q_0,\ldots,q_m$ and $\xi$ is an existential Presburger formula with 
free variables $x_0,\ldots,x_m$.
A tree $t$ is accepted by $\langle\A,\xi\rangle$, denoted by $t \in \L(\A,\xi)$, 
if there is an accepting run $\rho_{\sA}$ of $\A$ on $w$ such that 
$\xi(n_0,\ldots,n_m)$ is true, where $n_i$ is the number of
appearances of $q_i$ in $\rho_{\sA}$.

\begin{theorem}
\label{t: non-emptiness and closure presburger}
{\em \cite{schwentick-icalp04,Schwentick-cade-05}}
The non-emptiness problem for APC is decidable in $\np$.
\end{theorem}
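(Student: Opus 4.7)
The plan is to reduce non-emptiness of an APC $\langle \A, \xi \rangle$ to satisfiability of a single existential Presburger formula of polynomial size in the description of $\A$ and $\xi$, and then invoke the classical fact that satisfiability of existential Presburger arithmetic is in $\np$. Writing $Q = \{q_0, \ldots, q_m\}$ for the states of $\A$ and $\Sigma$ for its alphabet, I would construct a formula $\Phi(x_0, \ldots, x_m)$ whose satisfying assignments are exactly the vectors $(n_0, \ldots, n_m)$ such that some accepting run of $\A$ uses state $q_i$ exactly $n_i$ times. The conjunction $\Phi \wedge \xi$ is then satisfiable iff $\L(\A,\xi) \neq \emptyset$, and, being itself an existential Presburger sentence, is decidable in $\np$.

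To build $\Phi$, I would first introduce auxiliary variables $z_{q,a}$ for each pair $(q,a) \in Q \times \Sigma$, with the intended meaning that $z_{q,a}$ counts the number of $a$-labeled nodes assigned state $q$ by the run; then impose $x_q = \sum_{a \in \Sigma} z_{q,a}$. For each such pair the horizontal transition language $\delta(q,a)$ is regular over $Q$, so by the efficient form of Parikh's theorem its Parikh image is definable by a polynomial-size existential Presburger formula $\psi_{q,a}$. Next I would introduce variables $y_{q,a,q'}$ counting the total number of children in state $q'$ across all $(q,a)$-nodes, and write a horizontal-flow subformula asserting that $(y_{q,a,q'})_{q'}$ is the Minkowski sum of $z_{q,a}$ Parikh vectors in $\psi_{q,a}$; this too is expressible by a polynomial-size existential Presburger formula, by quantifying over how many times each base and period vector of the semilinear representation of $\psi_{q,a}$ is used. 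Finally, a vertical-flow subformula demands that, for every $q'$, the total inflow $\sum_{q,a} y_{q,a,q'}$ equals the number of non-root nodes in state $q'$, while a separate clause isolates a unique root whose state lies in the final set $F$.

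The subtlest point is to enforce that the guessed counts really come from a single connected tree, rather than from a disjoint union of configurations that merely balance the flow equations; a solution in which the root component and some ``floating'' cycle of transitions both have positive counts must be ruled out. The standard remedy is to attach, for each state $q$ with $x_q > 0$, a polynomially-bounded reachability certificate asserting that $q$ is reachable from the root in the ``active'' dependency graph on states with nonzero $z_{q,a}$, which can itself be encoded in existential Presburger essentially as in~\cite{schwentick-icalp04,Schwentick-cade-05}. With this connectivity guarantee in place $\Phi$ has polynomial size, so that $\Phi \wedge \xi$ is a polynomial-size existential Presburger sentence, and the result follows from the $\np$ upper bound for its satisfiability.
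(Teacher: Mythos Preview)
The paper does not prove this theorem at all: it is stated with a citation to \cite{schwentick-icalp04,Schwentick-cade-05} and used as a black box throughout. Your proposal is therefore not competing with a proof in the paper but reconstructing the argument from those references, and in outline it is the same reduction Seidl, Schwentick and Muscholl give: encode the possible state-count vectors of accepting runs by an existential Presburger formula of polynomial size and conjoin $\xi$.

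One technical point deserves care. You say the horizontal-flow constraint for $(q,a)$-nodes is handled ``by quantifying over how many times each base and period vector of the semilinear representation of $\psi_{q,a}$ is used.'' Taken literally this is dangerous: the semilinear representation of the Parikh image of a regular language can be exponentially large, so quantifying over its base/period multiplicities need not yield a polynomial-size formula. The fix, which is what the cited papers actually do, is to stay with the flow-based construction of $\psi_{q,a}$ itself. The polynomial-size formula for $\Parikh(\delta(q,a))$ is obtained from Kirchhoff-style flow conservation on the NFA for $\delta(q,a)$, with one unit of flow entering at the initial state and one unit leaving at a final state, together with a reachability clause; replacing ``one'' by the variable $z_{q,a}$ in both places yields, still in polynomial size, exactly the Minkowski sum of $z_{q,a}$ vectors from $\Parikh(\delta(q,a))$. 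With this adjustment your $\Phi$ really is polynomial, and the remaining ingredients you name (vertical flow balance, a distinguished root in $F$, and the reachability certificate to exclude spurious disconnected components) are precisely the standard ones.
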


It is worth noting also that
the class of languages accepted by APC is closed under union and intersection. 

Oftentimes, instead of counting the number of states
in the accepting run,
we need to count the number of occurrences of alphabet symbols in the tree.
Since we can easily embed the alphabet symbols inside the states,
we always assume that the Presburger formula $\xi$
has the free variables $x_a$'s to denote the 
number of appearances of the symbol $a$ in the tree.

As in the word case, we let $\Parikh(t)$ denote the Parikh image of the tree $t$.
We will need the following proposition.

\begin{proposition}~{\em \cite{schwentick-icalp04,Schwentick-cade-05}}
\label{p: aut-to-presburger}
Given an unranked tree automaton $\A$,
one can construct, in polynomial time, an existential Presburger formula 
$\xi_{\sA}(x_1,\ldots,x_{\ell})$ such that
\begin{itemize}\itemsep=0pt
\item 
for every tree $t\in \L(\A)$, $\xi_{\sA}(\Parikh(t))$ holds;
\item
for every $\bar{n}=(n_1,\ldots,n_{\ell})$ such that $\xi_{\sA}(\bar{n})$ holds,
there exists a tree $t \in \L(\A)$ with $\Parikh(t)=\bar{n}$.
\end{itemize}
\end{proposition}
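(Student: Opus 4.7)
The plan is to encode the existence of an accepting run of $\A$ directly as a system of existential Presburger constraints over auxiliary counting variables, and take $\xi_\sA$ to be the existential closure of this system.

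Let $Q=\{q_0,\ldots,q_m\}$ and $\Sigma=\{a_1,\ldots,a_\ell\}$. The first step is to handle the regular transition languages. For each $(p,a)\in Q\times\Sigma$, Parikh's theorem in its polynomial-size constructive form (Verma--Seidl--Schwentick) yields an existential Presburger formula $\pi_{p,a}(\vec{z})$ with $\vec{z}=(z_q)_{q\in Q}$, defining $\Parikh(\delta(p,a))$ as a union of linear sets $\{u_i+\sum_j k_j v_{i,j}:k_j\in\nn\}$ of polynomial total size.

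Next, I would introduce auxiliary existential variables intended, for any accepting run $\rho_\sA$ on a tree $t$, to record: $y_{p,a}$, the number of $a$-nodes in state $p$; $Z_{p,a,q}$, the total number of $q$-children appearing in the children sequences of all $(p,a)$-nodes; and $r_p\in\{0,1\}$, indicating the root's state. The constraints to impose are (i) the alphabet counts, $x_a=\sum_p y_{p,a}$; (ii) a unique final root, $\sum_{p\in F}r_p=1$, $r_p=0$ for $p\notin F$, and $r_p\le\sum_a y_{p,a}$; (iii) flow conservation, $\sum_a y_{q,a}=r_q+\sum_{p,a}Z_{p,a,q}$ for each $q$; and (iv) for each $(p,a)$, that $(Z_{p,a,q})_q$ lies in the $y_{p,a}$-fold sumset of $\Parikh(\delta(p,a))$, which is existential Presburger-definable in polynomial size by guessing how many child-strings come from each linear component $i$ (nonnegative counts summing to $y_{p,a}$) together with their generator coefficients, and then summing the corresponding $u_i$'s and $v_{i,j}$'s.

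The forward direction, that $t\in\L(\A)$ implies $\xi_\sA(\Parikh(t))$, is direct: an accepting run supplies witness values for all auxiliaries and satisfies (i)--(iv) by inspection. The main obstacle is the converse: given a satisfying assignment, build an actual tree. Flow conservation alone admits phantom components disconnected from the root (for instance, two mutually-parent nodes forming an isolated cycle), so I would need a combinatorial realisability lemma. The plan is to construct the tree bottom-up, starting from types $(p,a)$ with $\epsilon\in\delta(p,a)$ and iteratively attaching already-built subtrees as the children sequences witnessed by clause (iv); any phantom cycle can be broken and its mass re-glued into the main tree by a pumping argument using the period vectors $v_{i,j}$, which preserves the Parikh image while restoring connectivity. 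This realisability lemma is the combinatorial heart of the argument; the Presburger bookkeeping itself is syntactically polynomial and mechanical once Parikh's theorem is invoked in its polynomial-size form.
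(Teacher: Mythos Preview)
The paper does not prove this proposition; it is quoted from \cite{schwentick-icalp04,Schwentick-cade-05} and used as a black box throughout. So there is no in-paper proof to compare against, and your proposal is an attempt to reconstruct the cited result.

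Your encoding via node-type counts $y_{p,a}$, child-multiplicity counts $Z_{p,a,q}$, and flow conservation is the right shape, and you correctly identify the phantom-component problem as the crux of the converse direction. The gap is in your proposed fix. ``Breaking a phantom cycle and regluing its mass into the main tree by a pumping argument using the period vectors $v_{i,j}$'' does not obviously preserve the Parikh image: pumping a period vector \emph{changes} the counts, and if the phantom component uses a type $(p,a)$ that never occurs in the rooted component, there is nowhere to reglue it without altering $\bar{n}$. As written, the realisability lemma is asserted rather than proved.

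The constructions in the cited references sidestep this post-hoc repair. One clean route is to encode the unranked tree automaton as a context-free grammar (e.g.\ via the first-child/next-sibling binarisation, with nonterminals for state pairs) and then apply the polynomial-size Parikh construction for CFGs directly; a grammar derivation is rooted and connected by definition, so phantom components cannot arise. An alternative, closer to your flow formulation, is to add explicit reachability constraints to the Presburger system: for each type $(p,a)$ with $y_{p,a}>0$, require that $(p,a)$ is reachable from the root type along the finite transition graph, which is expressible by a polynomial-size existential guess of a spanning subgraph. Either of these is what actually closes the argument; your sketch would need one of them in place of the pumping-and-regluing step.
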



\section{Ordered-data trees and Their Logic}
\label{s: ordered-data tree}

An ordered-data tree over the alphabet $\Sigma$ is a tree
in which each node, besides carrying a label from the finite alphabet
$\Sigma$, also carries a data value from $\bbN = \{0,1,\ldots\}$.\footnote{Here
we use the natural numbers as data values just to be concrete.
The results in our paper apply trivially for any linearly ordered domain.}

Let $t$ be an ordered-data tree over $\Sigma$
and $u \in \Domain(t)$.
We write $\val_t(u)$ to denote the data value in the node $u$. 
The set of all data values in the $a$-nodes in $t$ is denoted by $V_t(a)$.
That is, $V_t(a)=\{\val_t(u)\ | \ \lab_t(u)=a \ \mbox{and} \ u \in \Domain(t)\}$.
We write $V_t$ to denote the set of data values
found in the tree $t$.
We also write $\#_t(a)$ to denote the number of $a$-nodes in $t$.

The profile of a node $u$ is a triplet 
$(l,p,r) \in \{\top,\bot,*\}\times\{\top,\bot,*\}\times\{\top,\bot,*\}$,
where $l=\top$ and $l=\bot$ indicate that 
the node $u$ has the same data value and different data value as its left sibling, respectively;
$l=*$ indicates that $u$ does not have a left sibling.
Similarly, $p=\top$, $p=\bot$, and $p=*$
have the same meaning in relation to the parent of the node $u$,
while $r=\top$, $r=\bot$, and $r=*$ means the same in relation to the right sibling of the node $u$.
For an ordered-data tree $t$ over $\Sigma$,
the profile tree of $t$, denoted by $\sfProfile(t)$,
is a tree over $\Sigma\times\{\top,\bot,*\}^3$
obtained by augmenting to each node of $t$ its profile. 

We write $\sfProj(t)$ to denote the $\Sigma$ projection of 
the ordered-data tree $t$,
that is, $\sfProj(t)$ is $t$ without the data values.
When we say that an ordered-data tree $t$ is accepted by
an automaton $\A$, we mean that $\sfProj(t)$ is accepted by $\A$.
An ordered-data tree $t'$ is an output of a transducer $\T$ on
an ordered-data tree $t$, if $\sfProj(t')$ is an output of $\T$ on $\sfProj(t)$,
and for all $u \in \Domain(t')$, 
we have $\val_{t'}(u) = \val_{t}(u)$.

Figure~\ref{fig: profile} shows an example of an
ordered-data tree $t$ over the alphabet $\{a,b,c\}$
with its profile tree.
The notation ${a \choose d}$ means that
the node is labeled with $a$ and has data value $d$.

\begin{figure*}
{\footnotesize
\begin{picture}(300,175)(-140,-120)

\put(-50,40){${a \choose 2}$}
\put(-44,32){\vector(-2,-1){60}}
\put(-44,32){\vector(-3,-4){24}}
\put(-44,32){\vector(3,-4){24}}
\put(-44,32){\vector(2,-1){60}}

\put(-113,-10){${b \choose 1}$}
\put(-75,-10){${c \choose 2}$}
\put(-27,-10){${a \choose 4}$}
\put(10,-10){${a \choose 6}$}

\put(-69,-18){\vector(-3,-4){24}}
\put(-69,-18){\vector(3,-4){24}}
\put(-69,-18){\vector(2,-1){64}}

\put(-99,-62){${b \choose 2}$}
\put(-51,-62){${b \choose 4}$}
\put(-10,-62){${a \choose 7}$}

\put(-93,-70){\vector(0,-1){30}}
\put(-45,-70){\vector(0,-1){30}}
\put(-4,-70){\vector(0,-1){30}}

\put(-100,-110){${c \choose 1}$}
\put(-52,-110){${c \choose 6}$}
\put(-10,-110){${b \choose 7}$}


\put(127,38){\small ${a,(*,*,*) \choose 2}$}
\put(146,32){\vector(-2,-1){60}}
\put(146,32){\vector(-3,-4){24}}
\put(146,32){\vector(3,-4){24}}
\put(146,32){\vector(2,-1){60}}

\put(55,-12){\small ${b,(*,\bot,\bot) \choose 1}$}
\put(102,-12){\small ${c,(\bot,\top,\bot) \choose 2}$}
\put(150,-12){\small ${a,(\bot,\bot,\bot) \choose 4}$}
\put(198,-12){\small ${a,(\bot,\bot,*) \choose 6}$}

\put(121,-18){\vector(-3,-4){24}}
\put(121,-18){\vector(3,-4){24}}
\put(121,-18){\vector(2,-1){64}}

\put(76,-64){\small ${b,(*,\top,\bot) \choose 2}$}
\put(128,-64){\small ${b,(\bot,\bot,\bot) \choose 4}$}
\put(180,-62){\small ${a,(\bot,\bot,*) \choose 7}$}

\put(97,-70){\vector(0,-1){30}}
\put(145,-70){\vector(0,-1){30}}
\put(197,-70){\vector(0,-1){30}}

\put(76,-110){\small ${c,(*,\bot,*) \choose 1}$}
\put(128,-110){\small ${c,(*,\bot,*) \choose 6}$}
\put(180,-110){\small ${b,(*,\top,*) \choose 7}$}

\end{picture}
}
\caption{An example of an ordered-data tree (on the left) and its profile (on the right).}
\label{fig: profile}
\end{figure*}
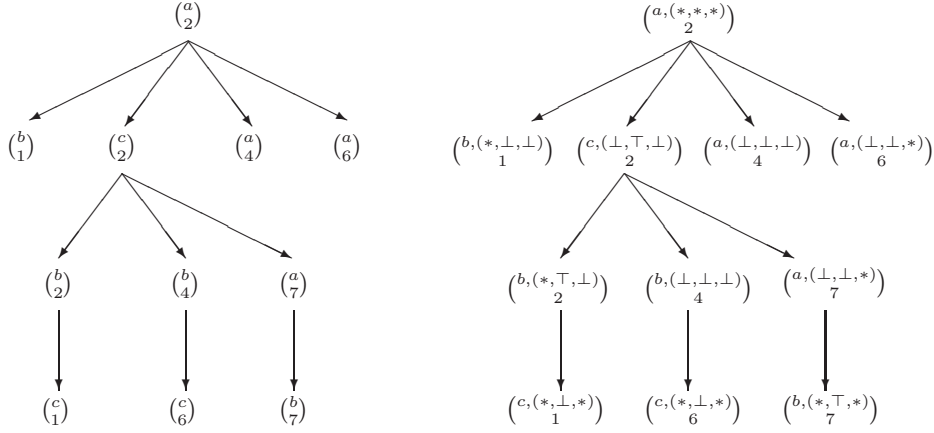

\subsection{String representations of data values}
\label{ss: string representation}

Let $t$ be an ordered-data tree over $\Gamma$.
For a set $S \subseteq \Gamma$, let
$$
[S]_t = \bigcap_{a \in S} V_t(a) \cap \bigcap_{b \notin S} \overline{V_t(b)}.
$$
That is, $[S]_t$ is the set of data values that are found in
$a$-positions for all $a\in S$ but are not found in any $b$-position for
$b\not\in S$. Note that the sets $[S]_t$'s are disjoint, and that
for each $a \in \Gamma$, 
$$
V_t(a) = \bigcup_{S \ \mbox{\scriptsize s.t.} \ a \in S} \ [S]_t.
$$
Moreover,
$|V_t(a)| = \sum_{S \ \mbox{\scriptsize s.t.} \ a \in S} \ |[S]_t|$.

Let $d_1 < \cdots < d_m$ be all the data values found in $t$.
The {\em string representation} of the data values in $t$, denoted by $\V_{\Gamma}(t)$, 
is the string $S_1\cdots S_m$ over the alphabet $2^{\Gamma} - \{\emptyset\}$ of length $m$
such that $d_i \in [S_i]_t$, for each $i=1,\ldots,m$.
The notation $[S]_t$ is already introduced in~\cite{fo2-lpar,edt},
but not $\V_{\Gamma}(t)$.

Consider the example of the tree $t$ in Figure~\ref{fig: profile}.
The data values in $t$ are $1,2,4,6,7$, where 
\begin{eqnarray*}
~[\{b,c\}]_t  & = &  \{1\},
\\ 
~[\{a,b,c\}]_t  & = &  \{2\},
\\
~[\{a,b\}]_t & = &  \{4,7\},
\\
~[\{a,c\}]_t  & = &  \{6\},
\\
~[S]_t & = & \emptyset, \ \mbox{for all the other} \ S\mbox{'s}.
\end{eqnarray*}
The string $\V_{\Gamma}(t)$ is $S_1 \ S_2 \ S_3 \ S_4 \ S_5$,
where $S_1 = \{b,c\}$, $S_2 = \{a,b,c\}$, $S_3 = S_5 = \{a,b\}$ and $S_4 = \{a,c\}$.

\subsection{A logic for ordered-data trees}
\label{ss: logic}

An ordered-data tree $t$ over the alphabet $\Sigma$ 
can be viewed as a structure
$$
t\ = \langle D, \{a(\cdot)\}_{a\in\Sigma}, \eda,\era, \sim, \prec, \dvSucc \rangle,
$$
where 
\begin{itemize}
\item
the relations $\{a(\cdot)\}_{a\in\Sigma}, \eda,\era$ 
are as defined before in Subsection~\ref{ss: unranked trees}, 
\item
$u \sim v$ holds, if $\val_t(u) = \val_t(v)$,
\item
$u \prec v$ holds, if $\val_t(u) < \val_t(v)$,
\item
$u \dvSucc\ v$ holds, if $\val_t(v)$ is the minimal data value in $t$
greater than $\val_t(u)$.
\end{itemize}
Obviously, $x \dvSucc\ y$ can be expressed equivalently 
as $x \prec y \wedge \forall z (\neg (x \prec z \wedge  z \prec y))$.
We include $\dvSucc$ for the sake of convenience.
We also assume that we have the predicates
$\root(x)$, $\fs(x)$, $\ls(x)$, and $\leaf(x)$
which stand for $\forall y (\neg \eda(y,x))$, 
$\forall y (\neg\era(y,x))$,
$\forall y (\neg\era(x,y))$, and
$\forall y (\neg\eda(x,y))$, respectively.
We also write $x\nsim y$ to denote $\neg (x\sim y)$.

For $\O \subseteq \{\eda,\era,\sim,\prec,\dvSucc\}$,
we let $\FO(\O)$ stand for the first-order logic with the vocabulary $\O$, 
$\MSO(\O)$ for its monadic second-order logic 
(which extends $\FO(\O)$ with quantification over sets of nodes),
and $\EMSO(\O)$ for its existential monadic second order logic, i.e., 
formulas of the form $\exists X_1 \ldots \exists X_m\ \psi$, 
where $\psi$ is an $\FO(\O)$ formula over the vocabulary $\O$
extended with the unary predicates $X_1,\ldots,X_m$.

We let $\FO^2(\O)$ stand for $\FO(\O)$ with two variables,
i.e., the set of $\FO(\O)$ formulae that only use two variables $x$ and $y$.
The set of all formulae of the form $\exists X_1 \ldots \exists X_m\ \psi$, 
where $\psi$ is an $\FO^2(\O)$ formula is denoted by $\EMSO^2(\O)$. 
Note that $\EMSO^2(\eda,\era)$ is equivalent in expressive power 
to $\MSO(\eda,\era)$ over the usual (without data) trees.
That is, it defines precisely the regular tree languages~\cite{thomas-handbook}.

As usual, we define $\L_{data}(\varphi)$ as the
set of ordered-data trees that satisfy the formula $\varphi$.
In such case, we say that the formula $\varphi$ 
expresses the language $\L_{data}(\varphi)$.\footnote{To avoid confusion,
we put the subscript $data$ on $\L_{data}$ to denote a 
language of ordered-data trees. We use the symbol $\L$ without
the subscript $data$ to denote the usual language of trees/strings without data.}

The following theorem is well known. 
It shows how even extending $\FO(\eda,\era)$ with 
equality test on data values
immediately yields undecidability. 

\begin{theorem}
\label{t: fo undecidable}
{\em (See, for example,~\cite{NSV04})}
The satisfiability problem for the logic $\FO(\eda,\era,\sim)$ is undecidable.
\end{theorem}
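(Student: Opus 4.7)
The plan is to reduce from the known undecidable satisfiability problem for $\FO(+1,\sim)$ over data words (as established in~\cite{NSV04} and mentioned in the Introduction), using the observation that data words embed naturally into data trees as single-branch paths. A data word $w$ of length $n$ over $\Sigma$ corresponds to the data tree whose domain is $\{\epsilon,0,0^2,\ldots,0^{n-1}\}$, in which the node $0^{i-1}$ carries the label and the data value of position $i$ of $w$. Under this embedding, the word-successor $+1$ coincides with the parent-child relation $\eda$, and $\sim$ is preserved.

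Concretely, the reduction proceeds in three steps. First, write a formula $\chi\in\FO(\eda,\era,\sim)$ whose models are exactly the single-branch trees:
\[
\chi \;:=\; \forall x\,\forall y\,\neg\era(x,y) \;\wedge\; \forall x\,\forall y\,\forall z\,\bigl(\eda(x,y)\wedge\eda(x,z)\to y=z\bigr).
\]
Second, given any $\FO(+1,\sim)$ formula $\varphi$ over data words, define its translation $\varphi^{\ast}\in\FO(\eda,\era,\sim)$ by syntactically replacing every atom $+1(x,y)$ with $\eda(x,y)$, and leaving labelling atoms and $\sim$ unchanged. Third, observe that $\varphi$ is satisfiable over data words if and only if $\chi\wedge\varphi^{\ast}$ is satisfiable over data trees: one direction turns a satisfying word into its single-branch tree encoding, while the other direction reads off the unique branch of a model of $\chi\wedge\varphi^{\ast}$ as a data word satisfying $\varphi$.

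The main obstacle is to make sure the cited undecidability on data words already holds with the restricted signature $(+1,\sim)$, since the strict position-order $<$ is not $\FO$-definable from $\eda$ alone (transitive closure is unavailable in $\FO$). The proofs in~\cite{NSV04} (and in the related references cited alongside it) do indeed work with only successor and data-equality, via an encoding of an undecidable combinatorial problem such as PCP or Turing-machine halting where the ``horizontal'' adjacency inside a configuration uses $+1$ and the ``vertical'' matching between consecutive configurations uses $\sim$. Should a given reference rely essentially on $<$, a self-contained fallback is to do the reduction directly on data trees: encode a two-dimensional grid by taking one axis to be the child relation $\eda$ and the other axis to be given by $\sim$, and write an $\FO(\eda,\era,\sim)$ formula asserting that the tree encodes a valid halting computation.
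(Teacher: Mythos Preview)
The paper does not supply its own proof of this theorem: it is stated as a well-known result with only the citation to~\cite{NSV04}. Your proposal therefore goes beyond what the paper does by actually sketching an argument. The reduction you outline---encoding a data word as a single-branch data tree, forcing that shape with $\chi$, and syntactically replacing $+1$ by $\eda$---is correct and standard. Your caution about whether the cited word-level undecidability genuinely avoids $<$ is well placed, since $<$ is not $\FO$-definable from $\eda$ alone; the fallback you describe (encoding a grid or a machine computation directly on data trees, with $\eda$ for one axis and $\sim$ for the other) is precisely how one makes the argument self-contained if the word-level reference turns out to rely on the linear order.
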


One of the deepest results in this area is the following decidability
result for the logic $\EMSO^2(\eda,\era,\sim)$.
\begin{theorem}
\label{t: emso2 decidable}
{\em \cite{BMSS09}}
The satisfiability problem for the logic $\EMSO^2(\eda,\era,\sim)$ is decidable.
\end{theorem}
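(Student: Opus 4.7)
The plan is to reduce satisfiability of an \EMSOTWO{} sentence to a decidable combinatorial problem on unranked data trees. First I would apply the standard Scott normal form transformation so that the sentence becomes
\begin{equation*}
\exists X_1 \cdots \exists X_n\, \bigl(\forall x\, \forall y\, \alpha(x,y)\bigr) \wedge \bigwedge_{i=1}^{k} \forall x\, \exists y\, \beta_i(x,y),
\end{equation*}
with quantifier-free $\alpha$ and $\beta_i$ in \FOTWO{} over the vocabulary extended with the unary predicates $X_1,\ldots,X_n$. From here it suffices to decide satisfiability of sentences of this form.

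Second, I would classify nodes by their 1-types: a 1-type is a maximal consistent conjunction of the labels in $\Sigma$, the $X_i$'s and the derived unary predicates $\root$, $\fs$, $\ls$, $\leaf$. Every 2-type of a pair $(x,y)$ is then determined by the two 1-types together with the tree-structural relationship (parent-child, sibling, or unrelated by $\eda$ and $\era$) and the Boolean data atom ($x \sim y$ or $x \nsim y$). The universal conjunct $\alpha$ splits into two kinds of constraints: local ones, on parent-child and sibling pairs, which can be enforced by a standard tree automaton on the profile tree $\sfProfile(t)$ from Section \ref{s: ordered-data tree}; and global ones, on unrelated pairs, which depend only on the collection of 1-types used in the tree.

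Third, and this is the heart of the argument, each existential requirement $\forall x\, \exists y\, \beta_i$ is handled by guessing into which of three kinds the witness $y$ falls: a tree-neighbour of $x$, an unrelated node with $x\sim y$, or an unrelated node with $x\nsim y$. The tree-neighbour case is local and can be absorbed into the tree automaton. The unrelated-and-$\nsim$ case reduces to requiring that certain 1-types occur somewhere in the tree with suitable frequencies, which is expressible as an automaton with Presburger constraints in the sense of Theorem \ref{t: non-emptiness and closure presburger}. The delicate case is the same-data witness, since it couples every node to the rest of its data class. To attack it, I would introduce a data-class profile that records, for each $\sim$-class, the multiset of 1-types of its members and the tree-structural interactions among them. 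Satisfiability of the original sentence then amounts to realising a tree whose nodes can be partitioned into data classes whose profiles jointly satisfy a finite collection of counting and coverage constraints.

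The main obstacle is closing the gap between the existence of such a coherent family of class-profiles and the existence of an actual ordered-data tree realising them. This calls for a small-model argument bounding the number of distinct class-profiles that need to be considered, together with a gluing step showing that any mutually consistent collection of class-profiles can be stitched into a single tree without violating the local automaton constraints. Once both ingredients are in place, the whole question collapses to non-emptiness of an automaton with Presburger constraints, and decidability follows from Theorem \ref{t: non-emptiness and closure presburger}.
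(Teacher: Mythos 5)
Your overall architecture is the right one and matches the route taken here and in the cited source: pass to a normal form (compare Proposition~\ref{p: normal form EMSO2}, which refines your Scott normal form into the constraint shapes N1--N7), enforce the local constraints by a tree automaton running on the profile tree, turn the global data constraints into counting conditions on the data values attached to each 1-type, and finish by non-emptiness of an automaton with Presburger constraints. Your case split of witnesses into tree-neighbours, same-class non-neighbours and different-class nodes is also essentially how the normal form separates the automaton conditions N1--N5 from the data conditions N6--N7.

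There is nonetheless a genuine gap, and it sits exactly where you place your ``main obstacle.'' The small-model and gluing steps are not finishing touches to be supplied later; they are the entire mathematical content of the theorem, and nothing in the sketch indicates how to obtain them. The difficulty is that your ``data-class profile'' is not a finite object: a single $\sim$-class can be scattered across the tree as arbitrarily many zones (maximal connected sets of nodes sharing a data value), each of arbitrary size and shape, and the constraints interact with this geometry through the profile annotations. So the space of class-profiles you propose to enumerate is infinite, and the collapse to Presburger constraints does not yet go through. Closing this requires two concrete ingredients that your plan only names: first, a cut-and-paste analysis bounding the number of zones with large outdegree while preserving every set $V_t(a)$ and the accepting run of the automaton --- this is Lemma~\ref{l: small model}, itself a chain of six successive tree surgeries adapted from the source, and it is where essentially all of the combinatorial work lives; and second, a reassignment argument in the style of Lemma~\ref{l: canonical lemma} guaranteeing that a numerically consistent solution of the counting constraints can be realised by an actual assignment of data values in which adjacent zones receive distinct values. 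Without arguments for both, what you have is a correct plan rather than a proof.
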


\subsection{A few examples}
\label{ss: logic example}

In this subsection we present a few examples of 
properties of ordered-data trees.
Some of them are special cases
of more general techniques that will be used later on.

\begin{example}
\label{eg: two a-nodes}
Let $\Sigma = \{a,b\}$.
Consider the language $\L_{data}^a$ of ordered-data trees over $\Sigma$ where
an ordered-data tree $t\in \L_{data}^{a}$ if and only if
there exist two $a$-nodes $u$ and $v$ such that
$u$ is an ancestor of $v$ and either $v \sim u$ or $v \prec u$.
This language can be expressed with the formula $\exists X \exists Y \exists Z \ \varphi$,
where $\varphi$ states that
$X$ contains only the node $u$,
$Y$ contains only the node $v$,
$Z$ contains precisely the nodes in the path from $u$ to $v$,
and $v \sim u$ or $v \prec u$.
\end{example}

\begin{example}
\label{eg: m data values}
For a fixed set $S\subseteq \Sigma$ and an integer $m \geq 1$,
we consider the language $\L_{data}^{S,m}$ such that
$t\in \L_{data}^{S,m}$ if and only if $|[S]_t| = m$.

We pick an arbitrary symbol $a\in S$.
The language $\L_{data}^{S,m}$ can be expressed in $\EMSO^2(\eda,\era,\sim)$
with the formula of the form
$\exists X_1 \ \cdots \exists X_m \ \varphi$,
where $\varphi$ is a conjunction of the following.
\begin{itemize}
\item
That the predicates $X_1,\ldots,X_m$ are disjoint and
each of them contains exactly one node, which is an $a$-node. 
\item
That the data values found in nodes in $X_1,\ldots,X_m$ are all different.
\item
That for each $i\in\{1,\ldots,m\}$,
if a data value is found in a node in $X_i$, 
then it must also be found in some $b$-node, for every $b\in S$.
\item
That for each $i\in\{1,\ldots,m\}$,
if a data value found in a node in $X_i$, 
then it must {\em not} be found in any $b$-node, for every $b\notin S$.
\item
That for every $a$-node (recall that $a\in S$)
that does not belong to the $X_i$'s, either
it has the same data value as the data value in a node belongs to one of the $X_i$'s, or
it has the data value {\em not} in $[S]_t$.
\\
That its data value does not belong to $[S]_t$ can be stated as the negation of  
\begin{itemize}
\item
for each $b\in S$,
there is a $b$-node with the same data value; and
\item
the data value cannot be found in any $b$-node, for every $b\notin S$. 
\end{itemize}
\end{itemize}
To express all these intended meanings,
it is sufficient that $\varphi \in \FO^2(\eda,\era,\sim)$.
\end{example}

\begin{example}
\label{eg: mod m data values}
For a fixed set $S\subseteq \Sigma$ and an integer $m \geq 1$,
we consider the language $\L_{data}^{S,\pmod{m}}$ such that
$t\in  \L_{data}^{S,\pmod{m}}$ if and only if $|[S]_t| \equiv 0 \pmod{m}$.

This language $\L_{data}^{S,\pmod{m}}$ can be expressed in $\EMSO^2(\eda,\era,\sim)$
with a formula of the form
$$
\exists X_0 \cdots \exists X_{m-1} 
\exists Y_0 \cdots \exists Y_{m-1}  \exists Z \
\psi,
$$
where the intended meanings of $X_0,\ldots,X_{m-1},Y_0,\ldots,Y_{m-1},Z$ are as follows.
For a node $u$ in an ordered-data tree $t \in \L_{data}$, 
\begin{itemize}
\item
the number of nodes belonging to $Z$ is precisely $|[S]_t|$; and
if $Z(u)$ holds in $t$, then
the data value in the node $u$ belongs to $[S]_t$;
\item 
$X_i(u)$ holds in $t$ if and only if
in the subtree $t'$ rooted in $u$
we have $|[S]_{t'}| \equiv i \pmod{m}$;
\item
if $v_1,\ldots,v_k$ are all the left-siblings of $u$,
and $X_{i_1}(v_1),\ldots,X_{i_k}(v_k)$ holds,
then 
$Y_i(u)$ holds if and only if
$i_1 + \cdots + i_k \equiv i \pmod{m}$.
\end{itemize}
To express all these intended meanings,
it is sufficient that $\psi \in \FO^2(\eda,\era,\sim)$.
\end{example}

\begin{example}
\label{eg: max a-node}
Let $\Sigma = \{a,b\}$.
Consider the language $\L_{data}^{a\ast}$ of ordered-data trees over $\Sigma$ where
an ordered-data tree $t\in \L_{data}^{a\ast}$ if and only if
all the $a$-nodes with data values different from the ones in their parents
satisfy the following conditions:
\begin{itemize}
\item
the data values found in these nodes are all different;
\item
one of the these data values must be the largest in the tree $t$.
\end{itemize}
The language $\L_{data}^{a\ast}$ can be expressed in $\EMSO^2(\eda,\sim,\prec)$ 
with the following formula:
\begin{eqnarray*}
\exists X  & \Big( & \forall x \big( X(x) \iff a(x) \wedge \exists y (\eda(y,x) \wedge y \nsim x)\big)
\\
& & \wedge \ \ \forall x \forall y (X(x) \wedge X(y) \wedge x\sim y \to x=y)
\\
& & \wedge \ \ \exists x \big(X(x) \wedge \forall y (y \prec x \vee x \sim y)\big) \Big).
\end{eqnarray*}
\end{example}

\section{Two Useful Lemmas}
\label{s: tool}

In this section we prove two lemmas which will be used later on.
The first is combinatorial by nature, and we will use it
in our proof of the decidability of ODTA.
The second is an Ehrenfeucht-Fra\"iss\'e type lemma for ordered-data trees,
and we will use it in our proof of the logical characterization of ODTA.

\subsection{A combinatorial lemma}

Let $G$ be an (undirected and finite) graph.
For simplicity, we consider only the graph without self-loop.
We denote by $V(G)$ the set of vertices in $G$
and $E(G)$ the set of edges.
For a node $u \in V(G)$,
we write $\deg(u)$ to denote the degree of the node $u$
and $\deg(G)$ to denote $\max\{\deg(u) \mid u \in V(G)\}$.

A {\em data graph} over the alphabet $\Gamma$ is
a graph $G$ in which each node carries a label from $\Gamma$
and a data value from $\nn$.
A node $u\in V(G)$ is called
an $a$-node, if its label is $a$, in which case we write $\lab_G(u)=a$.
We denote by $\val_{G}(u)$ the data value found in node $u$,
and $\Val_G(a)$ the set of data values found in $a$-nodes in $G$.

\begin{lemma}
\label{l: canonical lemma}
Let $G$ be a data graph over $\Gamma$.
Suppose for each $a \in \Gamma$, we have $|V_G(a)| \geq \deg(G)|\Gamma|+\deg(G)+1$.
Then we can reassign the data values in the nodes in $G$ to obtain
another data graph $G'$ such that $V(G)=V(G')$ and $E(G)=E(G')$ and
\begin{itemize}
\item[(1)]
for each $u \in V(G')$, $\lab_G(u) = \lab_{G'}(u)$;
\item[(2)]
for each $a \in \Gamma$, $\Val_{G}(a) = \Val_{G'}(a)$;
\item[(3)]
for each $u,v \in V(G)$,
if $(u,v) \in E(G')$,
then $\val_{G'}(u)\neq \val_{G'}(v)$.
\end{itemize} 
\end{lemma}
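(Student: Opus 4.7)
The plan is to construct $G'$ by exhibiting a reassignment $c : V(G) \to \bbN$ that preserves labels, agrees with $G$ on each per-label value set, and is ``proper'' in the sense that $c(u) \neq c(v)$ for every $(u,v) \in E(G)$; the graph $G'$ is then defined by $\val_{G'}(u) := c(u)$. I build $c$ in two phases. In the \emph{reservation phase}, for each pair $(a,d)$ with $a \in \Gamma$ and $d \in \Val_G(a)$ I choose a distinct $a$-node $u_{a,d}$ and set $c(u_{a,d}) := d$, making sure that for every fixed value $d$ the set of reserved nodes carrying $d$ forms an independent set in $G$. In the \emph{completion phase}, for each remaining node $u$ I set $c(u)$ to any value of $\Val_G(\lab_G(u))$ that differs from the values already assigned to the neighbors of $u$.

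Granting the reservation phase, the three conditions in the lemma are immediate. Condition (1) is clear since only data values are altered. Condition (2) holds because every $d \in \Val_G(a)$ is the value of the reserved $a$-node $u_{a,d}$, so $\Val_G(a) \subseteq \Val_{G'}(a)$, while the reverse inclusion follows from the fact that every $a$-node's value is taken from $\Val_G(a)$. Condition (3) holds because the reservation phase rules out same-value conflicts between two reserved nodes, and the completion phase explicitly avoids neighbor values. The completion phase is always feasible since $|\Val_G(\lab_G(u))| \geq \deg(G) + 1$ and $u$ has at most $\deg(G)$ neighbors.

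The crux is therefore the feasibility of the reservation phase, and this is where the hypothesis $|V_G(a)| \geq \deg(G)|\Gamma| + \deg(G) + 1$ is used. I would process the $(a,d)$ pairs greedily, grouped first by the value $d$ and then by the label $a$ with $d \in \Val_G(a)$. At each step, the $a$-nodes forbidden for selection fall into two groups: those already reserved for some other value of $a$, of which there are at most $|\Val_G(a)| - 1$, and those adjacent to a previously reserved node already carrying the value $d$, of which there are at most $(|\Gamma|-1)\deg(G)$, since only reservations from the (at most $|\Gamma|-1$) labels processed earlier for this $d$ contribute, and each such reserved node blocks at most $\deg(G)$ neighbors. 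The bound in the hypothesis is tailored to dominate this sum and leave at least one valid candidate at every step.

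The main obstacle is making this feasibility argument go through cleanly. The delicate point is that the ``$|\Val_G(a)| - 1$'' term and the ``$(|\Gamma|-1)\deg(G)$'' term can, in the worst case, jointly consume nearly the entire budget $|V_G(a)|$, so the processing order must be chosen so that a single label is never pushed to its last reservation at the same moment that many other labels have already committed representatives for the current value. To obtain this slack I expect either to fix a careful interleaved ordering of the $(a,d)$ pairs and argue step by step, or, alternatively, to first invoke Hall's theorem to produce any system of distinct representatives $u_{a,d}$ (which trivially exists since $n_a \geq |\Val_G(a)|$ for every $a$) and then to rearrange it by a sequence of augmenting-path style swaps that enforce the independence condition per value. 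In either approach, the generous ``$\deg(G)|\Gamma| + \deg(G) + 1$'' margin is precisely what provides enough simultaneous freedom for representative selection and for local enforcement of the independent-set property.
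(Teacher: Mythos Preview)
Your two-phase structure (reserve one $a$-node per value in $\Val_G(a)$, then fill in the rest avoiding neighbor values) is exactly the skeleton of the paper's argument, and your completion phase is identical to theirs. The difference, and the gap, is in the reservation phase.

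Your greedy count does not close. You bound the forbidden $a$-nodes by $(|\Val_G(a)|-1)+(|\Gamma|-1)\deg(G)$, but the only lower bound you have on the number $n_a$ of $a$-nodes is $n_a\geq |\Val_G(a)|$; the hypothesis $|\Val_G(a)|\geq \deg(G)|\Gamma|+\deg(G)+1$ bounds the number of \emph{values}, not the number of nodes. If $n_a=|\Val_G(a)|$ exactly, your margin is $1-(|\Gamma|-1)\deg(G)$, which is non-positive as soon as $|\Gamma|\geq 2$ and $\deg(G)\geq 1$. No processing order repairs this: the difficulty is that once all but one $a$-value has been reserved, the single remaining unreserved $a$-node may well be adjacent to a node already carrying the current value $d$. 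So the ``careful interleaved ordering'' route is a dead end.

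The paper follows your second backup plan, but in a much simpler form than augmenting paths. It first reserves \emph{arbitrarily}: pick any bijection between $\Val_G(a)$ and some $|\Val_G(a)|$ many $a$-nodes, for each $a$. This may create conflict edges among reserved nodes. Now fix them one at a time by a single swap. If $(u,v)$ is a conflict edge with common value $d$ and $u$ is an $a$-node, the key count is this: $d$ sits on at most $|\Gamma|$ reserved nodes (one per label), whose combined neighborhoods have size at most $\deg(G)|\Gamma|$; since there are $|\Val_G(a)|\geq \deg(G)|\Gamma|+\deg(G)+1$ reserved $a$-nodes, at least $\deg(G)+1$ of them have no neighbor carrying $d$, and among those at least one, say $u'$, carries a value not in the $\leq\deg(G)$ values on $N(u)$. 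Swapping the values of $u$ and $u'$ removes the conflict at $(u,v)$ and creates no new conflict at either endpoint. Iterate until no conflicts remain; then run your completion phase verbatim. This is the missing piece in your proposal: the hypothesis is used not to avoid conflicts greedily, but to guarantee that any conflict can be eliminated by a single value swap among reserved nodes.
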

\begin{proof}
Note that in the lemma
the data graph $G'$ differs from $G$ only in the data values on the nodes,
where we require that adjacent nodes in $G'$ have different data values.

In the following we write $\#_G(a)$ to denote the number of $a$-nodes in $G$
and $K=\deg(G)$.
First, we perform some partial reassignment of the data values on some nodes.
For each $a \in \Gamma$, we pick $|\Val_G(a)|$ number of $a$-nodes in $G'$.
Then we assign to these $a$-nodes the data values from $\Val_G(a)$.
One $a$-node gets one data value.
Such assignment can be done since obviously $\#_{G}(a) \geq |\Val_G(a)|$.
If $\#_G(a)> |\Val_G(a)|$, then there will be some $a$-nodes in $G'$
that do not have data values.
We write $\val_{G'}(u) = \sharp$, if $u$ does not have data value.
From this step we already obtain that
$\Val_{G'}(a) = \Val_{G}(a)$ for each $a\in\Gamma$.

However, reassigning the data values just like that,
there may exist an edge $(u,v)$ such that
$\val_{G'}(u) = \val_{G'}(v)$ and $\val_{G'}(u),\val_{G'}(v)\neq \sharp$.
We call such an edge a {\em conflict} edge.
We are going to reassign the data values one more time so that there is no conflict edge.

Suppose there exists an edge $(u,v)\in E$ such that $\val_{G'}(u) = \val_{G'}(v)= d$ and 
suppose that $u$ is an $a$-node, for some $a\in\Gamma$.
The data value $d$ can only be found in at most $|\Gamma|$ nodes in $G'$.
Since $\deg(G)=K$, the neighbours of those nodes (with data value $d$) are at most $K|\Gamma|$ nodes.
Now $|\Val_G(a)| = |\Val_{G'}(a)| \geq K|\Gamma|+K+1$,
there are at least $K+1$ number of $a$-nodes
whose neighbours do not get the data value $d$.
Let $u_1,\ldots,u_m$ be such $a$-nodes, where $m \geq K+1$.
From these nodes, there exists $i$ such that
$\val_{G'}(u_i)\notin \{\val_{G'}(x) \mid (u,x)\in E\}$.

We can then swap the data values on the nodes $u$ and $u_i$,
and this results in one less conflict edge.
We repeat this process until there is no conflict edge.
Now it is straightforward that 
\begin{itemize}
\item[(1)]
for each $u \in V$, $\lab_G(u) = \lab_{G'}(u)$;
\item[(2)]
for each $a \in \Gamma$, $\Val_{G}(a) = \Val_{G'}(a)$;
\item[(3)]
for each $u,v \in V$,
if $(u,v) \in E$ and $\val_{G'}(u),\val_{G'}(v)\neq \sharp$, 
then $\val_{G'}(u)\neq \val_{G'}(v)$.
\end{itemize} 
What is left to do now is to assign data values to the nodes $u$,
where $\val_{G'}(u)=\sharp$.
For each $a$-node, where $\val_{G'}(u)=\sharp$,
we pick the data value $d\in \Val_{G'}(a) = \Val_{G}(a)$
which is not assigned to any its neighbour.
Such data value exists since $|\Val_{G'}(a)|\geq K |\Gamma|+K+1 \geq K+1$.
Such assignment will not violate condition~(3) above,
thus, we get the desired data graph $G'$.
This completes the proof of Lemma~\ref{l: canonical lemma}.
\end{proof}

\subsection{An Ehrenfeucht-Fra\"iss\'e type lemma}

We need the following notation.
A $k$-characteristic function on the alphabet $\Gamma$,
is a function $f: \Gamma \to \{0,1,2,\ldots,k\}$.
Let $\F_{\Gamma,k}$ be the set of all such $k$-characteristic functions on $\Gamma$.
A function $f \in \F_{\Gamma,k}$ is a $k$-characteristic function for a set $S \subseteq \Gamma$,
if $f(a)\in \{1,2,\ldots,k\}$, for all $a\in S$, 
and $f(a)=0$, for all $a \notin S$.

An {\em ordered-data set} $\frU$ over the alphabet $\Gamma$ 
consists of a finite set $U$, in which
each element $u \in U$ carries a label $\lab_{\sfrU}(u) \in \Gamma$
and a data value $\val_{\sfrU}(u) \in \bbN$.
An element $u \in U$ is called an $a$-element, if $\lab_{\sfrU}(u)=a$.
In other words, an ordered-data set is similar to an ordered-data tree, 
but without the relations $\eda$ and $\era$.
It can be viewed as a structure
$\frU  =  \langle U, \{a(\cdot)\}_{a\in\Gamma}, \sim, \prec, \dvSucc \rangle$,
where 
\begin{itemize}\itemsep=0pt
\item
for each $a\in \Gamma$ and $u \in U$,
the relation $a(u)$ holds if $\lab_{\sfrU}(u)=a$,
\item
$u \sim v$ holds, if $\val_{\sfrU}(u) = \val_{\sfrU}(v)$,
\item
$u \prec v$ holds, if $\val_{\sfrU}(u) < \val_{\sfrU}(v)$,
\item
$u \dvSucc\ v$ holds, if $\val_{\sfrU}(v)$ is the minimal data value found in $\frU$
greater than $\val_{\sfrU}(u)$.
\end{itemize}
Let $\frU$ be an ordered-data set and
$d_1 < \cdots < d_m$ be the data values found in $\frU$.
The $k$-{\em extended representation} of $\frU$ is the string 
$\V^k_{\Gamma}(\frU) = (S_1,f_1)\cdots (S_m,f_m) \in 2^{\Gamma}\times \F_{\Gamma,k}$
such that 
$S_1 \cdots S_m = \V_{\Gamma}(\frU)$ and 
for each $i\in \{1,2,\ldots,m\}$ and for each $a \in \Gamma$,
\begin{enumerate}\itemsep=0pt
\item 
$f_i$ is a $k$-characteristic function for the set $S_i$,
\item 
if $1\leq f_i(a) \leq k-1$, then there are $f_i(a)$
number of $a$-elements in $U$ with data value $d_i$,
\item
if $f_i(a)=k$, then there are at least $k$ number of $a$-elements in $U$ 
with data value $d_i$.
\end{enumerate}

We assume that in every formula in $\MSO(\sim,\prec,\dvSucc)$
all the monadic second-order quantifiers precede the first-order part.
That is, sentences in $\MSO(\sim,\prec,\dvSucc)$ are of the form:
$\varphi := Q_1 X_1 \cdots Q_{s} X_{s} \; \psi$,
where the $X_i$'s are monadic second-order variables,
the $Q_i$'s are  $\exists$ or $\forall$
and $\psi\in\FO(\sim,\prec,\dvSucc)$
extended with the unary predicates $X_1,\ldots,X_s$.
We call the integer $s$, the MSO quantifier rank of $\varphi$,
denoted by $\sfMSOqr(\varphi)=s$,
while we write $\sfFOqr(\varphi)$ to denote
the quantifier rank of $\psi$,
that is the quantifier rank of the first-order part of $\varphi$.

\begin{lemma}
\label{l: extended rep for data set}
Let $\frU_1$ and $\frU_2$ be ordered-data sets over $\Gamma$ such that
$\V^{k 2^{s}}_{\Gamma}(\frU_1) = \V^{k2^s}_{\Gamma}(\frU_2)$.
For any $\MSO(\sim,\prec,\dvSucc)$ sentence $\varphi$
such that $\sfMSOqr(\varphi)\leq s$ and $\sfFOqr(\varphi)\leq k$,
$\frU_1 \models \varphi \quad \mbox{if and only if} \quad \frU_2 \models \varphi$.
\end{lemma}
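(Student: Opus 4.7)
The plan is to proceed by induction on $s = \sfMSOqr(\varphi)$. For the base case $s = 0$, the formula $\varphi$ lies in $\FO(\sim,\prec,\dvSucc)$ with quantifier rank at most $k$, and I would play the standard $k$-round Ehrenfeucht-Fra\"iss\'e game. Write $d_1 < \cdots < d_m$ and $d'_1 < \cdots < d'_m$ for the data values of $\frU_1$ and $\frU_2$; the equality $\V^k_{\Gamma}(\frU_1) = \V^k_{\Gamma}(\frU_2)$ forces the two sequences to have the same length $m$, the same sequence $S_1,\ldots,S_m$ of label sets, and the same truncated multiplicities at each position. Duplicator's strategy is: whenever Spoiler picks an $a$-element at position $i$ in one structure (or repeats a pebble), Duplicator picks a fresh $a$-element at position $i$ in the other (or the matched pebble). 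Atomic types for $\sim,\prec,\dvSucc$ and the labels are preserved because positions correspond; freshness never fails, because when the counts at $(i,a)$ are both below $k$ they are equal, and when both are $\geq k$ there are at least $k$ elements available on each side, more than enough for the $\leq k$ rounds.

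For the inductive step, write $\varphi = \exists X\,\psi$ (the universal case is dual), so $\sfMSOqr(\psi) \leq s-1$ and $\sfFOqr(\psi) \leq k$. Given $X_1 \subseteq U_1$ with $(\frU_1, X_1) \models \psi$, I would construct $X_2 \subseteq U_2$ so that, viewing the augmented ordered-data sets over the alphabet $\Gamma \times \{0,1\}$ (where the second coordinate records membership in $X$), the $(k 2^{s-1})$-extended representations of $(\frU_1, X_1)$ and $(\frU_2, X_2)$ coincide. The inductive hypothesis then yields $(\frU_2, X_2) \models \psi$, hence $\frU_2 \models \varphi$. To construct $X_2$, I would proceed position-by-position: at each data-value position $i$ and each $a \in \Gamma$, let $n_1, n_2$ be the counts of $a$-elements at position $i$ in the two structures (by hypothesis either $n_1 = n_2 < k 2^s$ or both are $\geq k 2^s$), and let $m_1$ be the number of these lying in $X_1$. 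I then pick $m_2 \in \{0,\ldots,n_2\}$ with
\[
\min(m_2, k 2^{s-1}) = \min(m_1, k 2^{s-1}) \quad\text{and}\quad \min(n_2 - m_2, k 2^{s-1}) = \min(n_1 - m_1, k 2^{s-1}),
\]
and place any such number of $a$-elements at position $i$ in $\frU_2$ into $X_2$. In the subcase $n_1 = n_2 < k 2^s$, taking $m_2 = m_1$ works. In the subcase $n_1, n_2 \geq k 2^s = 2 \cdot k 2^{s-1}$, since $m_1 + (n_1 - m_1) = n_1 \geq 2 \cdot k 2^{s-1}$, at least one of $m_1, n_1 - m_1$ is $\geq k 2^{s-1}$, and the slack in $n_2$ accommodates one of the candidates $m_2 \in \{m_1,\; n_2 - (n_1 - m_1),\; k 2^{s-1}\}$ appropriate to the position of $m_1$ relative to $k 2^{s-1}$.

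The main obstacle is the arithmetic bookkeeping around the halving of the truncation threshold: the factor $2^s$ in the statement is exactly what allows each MSO quantifier to be absorbed by paying a factor of $2$, so that after $s$ levels of induction one lands at threshold $k$, matching the requirement of the base case. Once this balance is pinned down, everything else---the back-and-forth in the base case, the position-wise independence of the construction in the inductive step, and the observation that augmenting by the unary predicate $X$ merely refines the alphabet from $\Gamma$ to $\Gamma \times \{0,1\}$ without disturbing the data-value positions or their ordering---is routine.
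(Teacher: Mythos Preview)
Your proposal is correct and follows essentially the same approach as the paper: the paper phrases it as an MSO Ehrenfeucht--Fra\"iss\'e game with $s$ set-moves followed by $k$ point-moves, maintaining exactly the threshold-halving invariant you describe (their four-case analysis for a set-move is your case split on whether $m_1$ and $n_1-m_1$ lie below or above $k2^{s-1}$), while you unfold the same argument as an explicit induction on $s$ with alphabet refinement $\Gamma \rightsquigarrow \Gamma\times\{0,1\}$ at each step. The two presentations are interchangeable; your inductive packaging is arguably cleaner, and your arithmetic in the inductive step is correct.
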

\begin{proof}
The proof is by Ehrenfeucht-Fra\"iss\'e game for MSO of $(s+k)$ rounds,
with $s$ rounds of set-moves and $k$ rounds of point-moves.
We can assume that the set-moves precede the point-moves.
See, for example,~\cite{libkin04}, for the definition of Ehrenfeucht-Fra\"iss\'e game.

Before we go to the proof, we need a few notations.
Let $\frU_1$ and $\frU_2$ be ordered-data sets over $\Gamma$.
For $(a,d) \in \Gamma\times\bbN$, we write 
$P_{\sfrU_1}(a,d) = \{u \mid \lab_{\sfrU_1}(u)=a \ \mbox{and} \ \val_{\sfrU_1}(u)=d\}$ --
the set of elements in $U_1$ with label $a$ and data value $d$.
We can define similarly $P_{\sfrU_2}(a,d)$ for $\frU_2$.

Let $\O \subseteq \{\sim,\prec,\dvSucc\}$.
Let $u_1,\ldots,u_k \in U_1$
and $v_1,\ldots,v_k \in U_2$, for some ordered-data sets $U_1$ and $U_2$.
The mapping $(u_1,\ldots,u_k)\mapsto (v_1,\ldots,v_k)$
is a partial $\O$-isomorphism (with equality) from $U_1$ to $U_2$,
if it is a partial isomorphism with regards to the vocabulary $\O$,
and if $u_l=u_{l'}$, then $v_l=v_{l'}$.

We are going to describe Duplicator's strategy for winning the 
Ehrenfeucht-Fra\"iss\'e game for MSO of $s$ rounds of set-moves,
followed by $k$ rounds of point moves.
We start with the set-moves.

\vspace{0.7 em}
\par\noindent
\underline{\em Duplicator's strategy for set-moves}:
Suppose that the game is already played for $l$ rounds,
where $X_1,\ldots,X_l$ and $Y_1,\ldots,Y_l$
are the sets of positions chosen in $U_1$ and $U_2$, respectively.
For each $I \subseteq \{1,\ldots,l\}$,
define the following set:
\begin{eqnarray*}
P_{\sfrU_1}(a,d;I) & = &
P_{\sfrU_1}(a,d) \cap \bigcap_{i \in I} X_i \cap \bigcap_{j \notin I} \overline{X_j}
\\
P_{\sfrU_2}(a,d;I) & = &
P_{\sfrU_2}(a,d) \cap \bigcap_{i \in I} Y_i \cap \bigcap_{j \notin I} \overline{Y_j}
\end{eqnarray*}
Duplicator's strategy is to preserve the following identity:
for every $(a,d) \in \Gamma \times \bbN$ and every $I \subseteq \{1,\ldots,l\}$
\begin{itemize}
\item 
If the cardinality $|P_{\sfrU_1}(a,d;I)| < k 2^{m-l}$,
then $|P_{\sfrU_1}(a,d;I)| = |P_{\sfrU_2}(a,d;I)|$.
\item 
If the cardinality $|P_{\sfrU_1}(a,d;I)| \geq k 2^{m-l}$,
then also $|P_{\sfrU_2}(a,d;I)| \geq k 2^{m-l}$.
\end{itemize}
Now suppose that on the $(l+1)^{\rm th}$ set-move,
Spoiler chooses a set $X$ of positions on $U_1$.
Duplicator chooses a set $Y$ of positions on $U_2$ as follows.
For each $I \subseteq \{1,\ldots,l\}$,
there are four cases:
\begin{enumerate}
\item 
$|P_{\sfrU_1}(a,d;I) \cap X|$ and $|P_{\sfrU_1}(a,d;I) \cap \overline{X}| < k 2^{m-l-1}$.
\\
Then, $|P_{\sfrU_1}(a,d;I)| < k 2^{m-l}$,
which by induction hypothesis, implies $|P_{\sfrU_2}(a,d;I)| = |P_{\sfrU_1}(a,d;I)|$.
Duplicator picks $|P_{\sfrU_1}(a,d;I) \cap X|$ number of points from $P_{\sfrU_2}(a,d;I)$,
and declares them ``belong to $Y$.''
The rest of the points from $P_{\sfrU_2}(a,d;I)$ are declared ``not belong to $Y$.''
\\
Obviously, $|P_{\sfrU_1}(a,d;I) \cap X| = |P_{\sfrU_2}(a,d;I) \cap Y| < k 2^{m-l-1}$
and $|P_{\sfrU_1}(a,d;I) \cap \overline{X}| = |P_{\sfrU_2}(a,d;I) \cap \overline{Y}| < k 2^{m-l-1}$.
\item 
$|P_{\sfrU_1}(a,d;I) \cap X| < k 2^{m-l-1}$ and $|P_{\sfrU_1}(a,d;I) \cap \overline{X}| \geq k 2^{m-l-1}$.
\\
In this case, either $P_{\sfrU_1}(a,d;I) < k2^{m}$ or $\geq k2^{m}$.
In either case there are $|P_{\sfrU_1}(a,d;I) \cap X|$ number of points from $P_{\sfrU_2}(a,d;I)$
which Duplicator declares as ``belong to $Y$.''
The rest of the points from $P_{\sfrU_2}(a,d;I)$ are declared ``not belong to $Y$.''
\\
Obviously, $|P_{\sfrU_1}(a,d;I) \cap X| = |P_{\sfrU_2}(a,d;I) \cap Y|$
and $|P_{\sfrU_2}(a,d;I) \cap \overline{Y}| \geq k 2^{m-l-1}$.
\item 
$|P_{\sfrU_1}(a,d;I) \cap X| \geq k 2^{m-l-1}$ and $|P_{\sfrU_1}(a,d;I) \cap \overline{X}| < k 2^{m-l-1}$.
\\
Similar to Case~2.
\item 
$|P_{\sfrU_1}(a,d;I) \cap X| \geq k 2^{m-l-1}$ and $|P_{\sfrU_1}(a,d;I) \cap \overline{X}| \geq k 2^{m-l-1}$.
\\
Then, $|P_{\sfrU_1}(a,d;I)| \geq k 2^{m-l}$, and so $|P_{\sfrU_2}(a,d;I)| \geq k 2^{m-l}$.
Duplicator declares half of the points in $P_{\sfrU_2}(a,d;I)$ as ``belong to $Y$''
and the other half as ``not belong to $Y$.''
\\
Obviously, $|P_{\sfrU_2}(a,d;I) \cap Y|$
and $|P_{\sfrU_2}(a,d;I) \cap \overline{Y}| \geq k 2^{m-l-1}$.
\end{enumerate}
Now after $m$ rounds of set-moves,
we have the following identity:
for every $(a,d) \in \Sigma \times \bbN$ and every $I \subseteq \{1,\ldots,m\}$
\begin{itemize}
\item 
If the cardinality $|P_{\sfrU_1}(a,d;I)| < k $,
then 
$|P_{\sfrU_1}(a,d;I)| = |P_{\sfrU_2}(a,d;I)|$.
\item 
If the cardinality $|P_{\sfrU_1}(a,d;I)| \geq k$,
then also $|P_{\sfrU_2}(a,d;I)| \geq k$.
\end{itemize}
This ends our description of Duplicator's strategy for set-moves.
Now we describe Duplicator's strategy for point-moves.

\vspace{0.7 em}
\par\noindent
\underline{\em Duplicator's strategy for point-moves}:
Suppose that the game is now on $l$th step.
Let $(u_1,\ldots,u_l)\mapsto (v_1,\ldots,v_l)$
be a partial $\{\sim,\prec,\dvSucc\}$-isomorphism, where $0\leq l \leq k-1$.
Suppose Spoiler chooses an element $u_{l+1}$ from $U_1$
such that $\val_{\sfrU_1}(u_{l+1})$ is the $j^{th}$ largest data value in $\frU_1$.
\begin{itemize}
\item
If $u_{l+1}=u_{l'}$, for some $l' \in \{1,\ldots,l\}$,
Duplicator chooses $v_{l+1}=v_{l'}$ from $U_2$.
\item
If $u_{l+1} \notin \{u_1,\ldots,u_l\}$, 
Duplicator chooses $v_{l+1}$ from $U_2$ such that $v_{l+1} \notin \{v_1,\ldots,v_l\}$
and $\lab_{\sfrU_1}(u_{l+1}) = \lab_{\sfrU_2}(v_{l+1})$ and $\val_{\sfrU_2}(v_{l+1})$
is the $j^{th}$ largest data value in $\frU_2$.
Such an element exists, as $\V^{k2^m}(\frU_1)=\V^{k2^m}(\frU_2)$.
\end{itemize}
In either case $(u_1,\ldots,u_{l+1})\mapsto (v_1,\ldots,v_{l+1})$
is a partial $\{\sim,\prec,\dvSucc\}$-isomorphism.
This completes the description of Duplicator's strategy and hence, our proof.
\end{proof}

Now, we define the $k$-extended representation of an ordered-data tree $t$
over the alphabet $\Gamma$,
denoted by $\V^k_{\Gamma}(t)$
is the $k$-extended representation of the ordered-data set $\frU$
obtained by ignoring the relations $\eda$ and $\era$ in $t$.
The following corollary is an immediate consequence of Lemma~\ref{l: extended rep for data set}
above.

\begin{corollary}
\label{cor: extended rep for mso}
Let $t_1$ and $t_2$ be ordered-data trees over $\Gamma$ such that
$\V^{k 2^{s}}_{\Gamma}(t_1) = \V^{k2^s}_{\Gamma}(t_2)$.
For any $\MSO(\sim,\prec,\dvSucc)$ sentence $\varphi$
such that $\sfMSOqr(\varphi)\leq s$ and $\sfFOqr(\varphi)\leq k$,
$t_1 \models \varphi \quad \mbox{if and only if} \quad t_2 \models \varphi$.
\end{corollary}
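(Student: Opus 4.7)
The plan is to exploit the fact that the logic $\MSO(\sim,\prec,\dvSucc)$ is purely ``horizontal'': its vocabulary contains none of the tree-structural predicates $\eda, \era$, so any such sentence cannot distinguish the tree organisation of the nodes from a flat enumeration of them. Thus the corollary should reduce almost immediately to Lemma~\ref{l: extended rep for data set}.

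Concretely, for an ordered-data tree $t$ over $\Gamma$, let $\frU(t)$ denote the ordered-data set obtained by forgetting the relations $\eda$ and $\era$ (keeping the carrier, the labels, and the data values). The first step is to observe that for any sentence $\varphi \in \MSO(\sim,\prec,\dvSucc)$, we have $t \models \varphi$ if and only if $\frU(t) \models \varphi$. This is a trivial induction on the structure of $\varphi$: since $\varphi$ only quantifies over elements and sets of elements and only uses atomic predicates from $\{a(\cdot)\}_{a \in \Gamma} \cup \{\sim,\prec,\dvSucc\} \cup \{X_1,\ldots,X_s\}$, every atomic formula evaluates identically in $t$ and in $\frU(t)$, and the universes are the same.

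Next, directly from the definition in the paragraph preceding the corollary, $\V^k_{\Gamma}(t) = \V^k_{\Gamma}(\frU(t))$ for every $k$, since $\V^k_\Gamma(t)$ was defined precisely as $\V^k_\Gamma(\frU(t))$. Consequently, the hypothesis $\V^{k2^s}_{\Gamma}(t_1) = \V^{k2^s}_{\Gamma}(t_2)$ transfers to $\V^{k2^s}_{\Gamma}(\frU(t_1)) = \V^{k2^s}_{\Gamma}(\frU(t_2))$.

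Finally, applying Lemma~\ref{l: extended rep for data set} to $\frU(t_1)$ and $\frU(t_2)$ with the same $\varphi$ (whose MSO quantifier rank is at most $s$ and FO quantifier rank at most $k$) yields $\frU(t_1) \models \varphi \iff \frU(t_2) \models \varphi$. Chaining this with the first-step equivalences $t_i \models \varphi \iff \frU(t_i) \models \varphi$ for $i=1,2$ gives the desired conclusion. There is essentially no obstacle to overcome here; the work is entirely contained in Lemma~\ref{l: extended rep for data set}, and the corollary is just the observation that tree structure is invisible to the fragment of the logic under consideration.
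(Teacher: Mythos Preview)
Your proposal is correct and follows essentially the same approach as the paper: forget the tree relations $\eda,\era$, view $t_1,t_2$ as ordered-data sets, and invoke Lemma~\ref{l: extended rep for data set}. The paper's proof is terser but makes exactly this observation.
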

\begin{proof}
Since the predicates $\eda$ and $\era$ are not used
in the formula $\varphi \in \MSO(\sim,\prec,\dvSucc)$,
we can ignore them in $t_1$ and $t_2$ 
and view both $t_1$ and $t_2$ as ordered-data sets.
Our corollary follows immediately from Lemma~\ref{l: extended rep for data set}.
\end{proof}

\section{Automata for Ordered-data Tree}
\label{s: S-automata}

In this section we are going to introduce an automata model
for ordered-data trees and study its expressive power.

\begin{definition}
\label{d: S-automata}
An ordered-data tree automaton, in short ODTA, over the alphabet $\Sigma$ 
is a triplet $\S = \langle \T, \M, \Gamma_0\rangle$, where
$\T$ is a letter-to-letter non-deterministic transducer from $\Sigma\times\{\top,\bot,*\}^3$
to the output alphabet $\Gamma$;
$\M$ is an automaton on strings over the alphabet $2^{\Gamma}$;
and $\Gamma_0 \subseteq \Gamma$.
\end{definition}

An ordered-data tree $t$ is accepted by $\S$,
denoted by $t \in \L_{data}(\S)$, if there exists an ordered-data tree $t'$ over $\Gamma$
such that
\begin{itemize}\itemsep=0pt
\item
on input $\sfProfile(t)$, the transducer $\T$ outputs $t'$;
\item
the automaton $\M$ accepts the string $\V_{\Gamma}(t')$; and
\item
for every $a\in \Gamma_0$,
all the $a$-nodes in $t'$ have different data values.
\end{itemize}
We describe a few examples of ODTA
that accept the languages described in 
Examples~\ref{eg: two a-nodes},~\ref{eg: m data values}, 
\ref{eg: mod m data values} and~\ref{eg: max a-node}.

\begin{example}
\label{eg: automata two a-nodes}
An ODTA $\S^a = \langle \T,\M,\Gamma_0\rangle$
that accepts the language $\L_{data}^a$ in Example~\ref{eg: two a-nodes}
can be defined as follows.
The output alphabet of the transducer $\T$ is $\Gamma = \{\alpha,\beta,\gamma\}$.
On an input tree $t$,
the transducer $\T$ marks the nodes in $t$ as follows.
There is only one node marked with $\alpha$,
one node marked with $\beta$, and
the $\alpha$-node is an ancestor of $\beta$.
The automaton $\M$ accepts all the strings in which
the position labeled with $S\ni \beta$ is less than or equal to
the position labeled with $S'\ni \alpha$.
(These two positions can be equal, which means $S=S'$.)
Finally, $\Gamma_0 = \emptyset$.
\end{example}

\begin{example}
\label{eg: automata m data values}
An ODTA $\S^{S,m} = \langle \T,\M,\Gamma_0\rangle$ that accepts
the language $\L_{data}^{S,m}$ in Example~\ref{eg: m data values}
can be defined as follows.
The transducer $\T$ is an identity transducer.
The automaton $\M$ accepts all the strings in which
the symbol $S$ appears exactly $m$ times, and
$\Gamma_0 = \emptyset$.
\end{example}

\begin{example}
\label{eg: automata mod m data values}
An ODTA $\S^{S,\pmod{m}} = \langle \T,\M,\Gamma_0\rangle$ that accepts
the language $\L_{data}^{S,\pmod{m}}$ in Example~\ref{eg: mod m data values}
can be defined as follows.
The transducer $\T$ is an identity transducer.
The automaton $\M$ accepts a string in which
the number of appearances of the symbol $S$ is a multiple of $m$,
and $\Gamma_0 = \emptyset$.
\end{example}

\begin{example}
\label{eg: automata max a-nodes}
An ODTA $\S^{a\ast} = \langle \T,\M,\Gamma_0\rangle$
that accepts the language $\L_{data}^{a\ast}$ in Example~\ref{eg: max a-node}
can be defined as follows.
The output alphabet of the transducer $\T$ is $\Gamma = \{\alpha,\beta\}$.
The transducer $\T$ marks the nodes as follows.
A node is marked with $\alpha$ if and only if
it is an $a$-node and it has different data value from the one of its parent.
All the other nodes are marked with $\beta$.
The automaton $\M$ accepts a string $v$ if and only if
the last symbol in $v$ contains the symbol $\alpha$,
while $\Gamma_0 = \{\alpha\}$.
\end{example}

The following proposition states that ODTA languages 
are closed under union and intersection, but not under negation.
We would like to remark that being not closed under negation
is rather common for decidable models for data trees.
Often models that are closed under negation have undecidable 
non-emptiness/satisfiability problem.
\begin{proposition}
\label{p: boolean closure}
The class of languages accepted by ODTA
is closed under union and intersection,
but not under negation.
\end{proposition}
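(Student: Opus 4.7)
The plan is to handle the three claims in sequence.

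For closure under union, given ODTAs $\S_i = \langle \T_i, \M_i, \Gamma_0^i\rangle$ with output alphabets $\Gamma_i$ for $i=1,2$, I rename symbols so that $\Gamma_1 \cap \Gamma_2 = \emptyset$ and set $\Gamma = \Gamma_1 \cup \Gamma_2$ and $\Gamma_0 = \Gamma_0^1 \cup \Gamma_0^2$. The new transducer $\T$ guesses at the root an index $i \in \{1,2\}$ and then simulates $\T_i$ throughout, carrying the choice in its state, so that every output label lies in a single $\Gamma_i$. The string automaton $\M$ inspects the alphabet of the symbols in its input and accordingly runs $\M_1$ or $\M_2$. Disjointness of $\Gamma_1, \Gamma_2$ makes the $\Gamma_0$-clause of the combined ODTA agree with whichever $\Gamma_0^i$-clause applies, and correctness is immediate.

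For closure under intersection, I use a product construction. Let $\Gamma = \Gamma_1 \times \Gamma_2$ and let $\T$ run $\T_1$ and $\T_2$ in lock-step, outputting $(b_1, b_2)$ whenever $\T_1$ would emit $b_1$ and $\T_2$ would emit $b_2$. The projection $\pi_i : 2^{\Gamma} \to 2^{\Gamma_i}$ applied symbolwise to $\V_{\Gamma}(t')$ produces exactly $\V_{\Gamma_i}(t'_i)$ for the coordinate projection $t'_i$ of $t'$, so the new $\M$ simulates $\M_1$ and $\M_2$ on these two projected sequences in parallel. The delicate point is the $\Gamma_0$-constraint: placing all pairs $(a_1,a_2)$ with $a_1\in\Gamma_0^1$ into $\Gamma_0$ only forces distinct data values among nodes carrying the identical label $(a_1,a_2)$, whereas the $\S_1$-condition pulled back through the projection requires distinct data values among all nodes whose first coordinate is $a_1$. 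I fix this by defining $\Gamma_0 = \{(a_1,a_2) : a_1\in\Gamma_0^1 \text{ or } a_2\in\Gamma_0^2\}$ and further requiring $\M$ to reject any string some symbol $S$ of which contains two distinct pairs $(a_1,a_2),(a_1,a_2')$ with $a_1\in\Gamma_0^1$, and symmetrically for $\Gamma_0^2$. Together these two checks force at most one node with first coordinate in $\Gamma_0^1$ at each data value, recovering the $\S_1$-condition, and similarly for $\S_2$.

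For non-closure under negation, the plan is to exhibit a concrete ODTA language $L$ whose complement is not ODTA. A natural candidate is $L_{data}^a$ of Example \ref{eg: two a-nodes}, which is ODTA by Example \ref{eg: automata two a-nodes}; its complement asserts the universal statement that along every chain of $a$-nodes the data values are strictly increasing. The main obstacle is to prove that no ODTA accepts this complement, and my approach is to exploit the fundamentally existential nature of the model: the transducer nondeterministically colors the tree, $\M$ performs only a positive regular check on $\V_{\Gamma}(t')$, and $\Gamma_0$ expresses only per-label pairwise-distinctness. From a hypothetical ODTA $\S$ for the complement, with its finite resources (states, output alphabet, $\Gamma_0$), I would engineer a pair of ordered-data trees sharing their profile and extended signature $\V^k_{\Gamma}$ up to the resolution required by $\S$, hence indistinguishable to $\S$, yet differing on membership in the complement. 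Designing such a pair so that the tree structure genuinely witnesses the universal property while the bounded-resource data signature does not is the technical core of the argument and the step I expect to be the hardest.
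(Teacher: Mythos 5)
Your union construction matches the paper's (nondeterministic choice of which ODTA to simulate), and your intersection construction is in fact more careful than the paper's one-line appeal to a ``standard cross product'': you correctly identify that $\Gamma_0^1\times\Gamma_2$ inside $\Gamma_0$ only enforces distinctness per \emph{pair} label, and your fix --- additionally having $\M$ reject any symbol $S\subseteq\Gamma_1\times\Gamma_2$ containing two distinct pairs with the same first coordinate from $\Gamma_0^1$ --- does recover the original key condition. Those two parts are fine.

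The gap is in the negation part: you choose the right witness language ($\L_{data}^a$ of Example~\ref{eg: two a-nodes}) and the right general strategy (finiteness of the output alphabet must force a collision), but you stop exactly where the proof begins, explicitly deferring ``the technical core.'' The construction you are missing is a short pigeonhole-plus-swap argument, and it does not require the $\V^k_{\Gamma}$ Ehrenfeucht--Fra\"iss\'e machinery you gesture at: acceptance by an ODTA $\S=\langle\T,\M,\Gamma_0\rangle$ depends only on $\sfProfile(t)$, on the chosen output $t'$, on $\V_{\Gamma}(t')$, and on the per-label distinctness for $\Gamma_0$. Concretely, suppose $\S$ accepts the complement and let $\Gamma$ be the output alphabet of $\T$. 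Take $t$ to be a path of $|\Gamma|+1$ nodes, all labeled $a$, with pairwise distinct data values increasing from root to leaf; this $t$ is in the complement, so some output $t'$ of $\T$ on $\sfProfile(t)$ witnesses acceptance. By pigeonhole two nodes $u,v$ of $t'$ carry the same output letter. Swapping the data values of $u$ and $v$ in $t$ yields $t''$ with an ancestor--descendant pair of $a$-nodes whose values decrease, so $t''\in\L_{data}(\S^a)$, i.e.\ $t''$ is \emph{not} in the complement; yet $\sfProfile(t'')=\sfProfile(t)$ (all adjacent values remain distinct), the same $t'$ is still a legal output (with data values inherited from $t''$), $\V_{\Gamma}$ of that output is unchanged, and the $\Gamma_0$ condition is unchanged because $u$ and $v$ have the same output label. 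Hence $\S$ also accepts $t''$, a contradiction. Without this (or an equivalent) explicit construction, your third claim is unproven.
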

\begin{proof}
For closure under union and intersection,
let $\S_1 = \langle \T_1,\M_1,\Gamma_0^1\rangle$
and $\S_2 = \langle \T_2,\M_2,\Gamma_0^2\rangle$
be ODTA.
The union $\L_{data}(\S_1)\cup\L_{data}(\S_2)$
is accepted by an ODTA which
non-deterministically chooses to simulate either $\S_1$ or $\S_2$
on the input ordered-data tree.
The ODTA for the intersection $\L_{data}(\S_1)\cap\L_{data}(\S_2)$
can be obtained by the standard cross product between $\S_1$ and $\S_2$.

We now prove hat ODTA languages are not closed under negation.
Consider the negation of the language in Example~\ref{eg: two a-nodes},
whose equivalent ODTA $\S^a$ is presented in Example~\ref{eg: automata two a-nodes}.
Every tree $t \notin \L(\S^a)$ has the following property.
If $u,v$ are two $a$-nodes in $t$ and $u$ is an ancestor of $v$,
then $u \prec v$.

Now suppose to the contrary that 
there exists an ODTA $\S = \langle \T,\M,\Gamma_0\rangle$
that accepts the negation of $\L(\S^a)$.
Let $\Gamma$ be the output alphabet of $\T$.
Let $t \in \L(\S)$ be a data tree with $|\Gamma|+1$ nodes,
where each node is labelled with $a$
and has at most one child.
This implies that the data values in $t$ are all different
and appear in increasing order from the root node to the leaf node.

Let $t' \in \T(t)$.
Since $t$ has $|\Gamma|+1$ nodes,
and hence so does $t'$,
there are two nodes in $u$ and $v$ in $t'$ with the same label.
Let $t''$ be a data tree obtained from $t$ by swapping the data values between $u$ and $v$,
so $t'' \in \L(\S^a)$.
Since $\sfProfile(t)=\sfProfile(t'')$,
on input $\sfProfile(t'')$,
the transducer $\T$ can also output $t'$,
which means that $t'' \in \L(\S)$.
This contradicts the fact that $\L(\S)$ is the complement of $\L(\S^a)$.
This completes the proof of Proposition~\ref{p: boolean closure}.
\end{proof}

We should remark that in Section~\ref{s: undecidable}
we will discuss that extending ODTA with the complement of languages
of the form in Example~\ref{eg: automata two a-nodes}
will immediately yield undecidability.

Theorems~\ref{t: commutativity},~\ref{t: logic S-automata} and~\ref{t: decidability}
are the main results in this paper.
Theorem~\ref{t: commutativity} below provides the ODTA characterisation
of the logic $\EMSO^2(\eda,\era,\sim)$ and
its proof can be found in Subsection~\ref{ss: proof: t: commutativity}.

\begin{theorem}
\label{t: commutativity}
A language $\L_{data}$ is expressible with an $\EMSO^2(\eda,\era,\sim)$ formula
if and only if 
it is accepted by an ODTA $\S = \langle \T,\M,\Gamma_0\rangle$,
where $\L(\M)$ is a commutative language.
Moreover, the translation from $\EMSO^2(\eda,\era,\sim)$ formulas to ODTA
takes triple exponential time,
while from ODTA to $\EMSO^2(\eda,\era,\sim)$ formulas,
takes exponential time.  
\end{theorem}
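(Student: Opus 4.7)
The theorem comprises two implications; I sketch each.

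For the easier direction, from an ODTA $\S=\langle\T,\M,\Gamma_0\rangle$ with $\L(\M)$ commutative to an $\EMSO^2(\eda,\era,\sim)$ formula, I would take the conjunction of three pieces under a block of existential monadic quantifiers. First, I introduce monadic predicates $X_q$ for each state of the tree-automaton underlying $\T$ and $Y_c$ for each output symbol $c\in\Gamma$; a run of $\T$ on $\sfProfile(t)$ is encoded by local consistency clauses in $\FO^2(\eda,\era,\sim)$, since each coordinate of the profile compares a node only to its parent or an immediate sibling. Second, because $\L(\M)$ is commutative regular, Theorem~\ref{t: commutative regular} writes it as a finite union of periodic languages, so membership of $\V_\Gamma(t')$ reduces to Boolean combinations of ``$|[S]_{t'}|\geq k$'' and ``$|[S]_{t'}|\equiv r\pmod{m}$'' for each $S\subseteq\Gamma$, both of which are $\EMSO^2(\eda,\era,\sim)$-expressible by Examples~\ref{eg: m data values} and~\ref{eg: mod m data values}. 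Third, the $\Gamma_0$-clause is directly $\bigwedge_{a\in\Gamma_0}\forall x\forall y\bigl(Y_a(x)\wedge Y_a(y)\wedge x\sim y\to x=y\bigr)$. The single-exponential cost comes solely from unfolding $\L(\M)$ into its periodic components.

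For the harder direction, I would first put the $\EMSO^2(\eda,\era,\sim)$ formula into Scott normal form $\exists\bar{X}\,\bigl(\forall x\forall y\,\alpha\wedge\bigwedge_i\forall x\exists y\,\beta_i\bigr)$, and let a \emph{$1$-type} $\tau$ be a maximal consistent set of unary atoms over $\Sigma$ and $\bar{X}$. For each data value $d$ occurring in $t$, its \emph{type-profile} $\bar{n}_d$ is the vector, truncated at a threshold $K$ chosen from $\alpha$ and the $\beta_i$'s, counting nodes of each $1$-type that carry $d$. Take $\Gamma=\{(\tau,\bar{n})\}$ as the output alphabet of $\T$; the transducer guesses $\tau(u)$ at every node $u$, emits $(\tau(u),\bar{n}_{\val_t(u)})$ there, and uses $\sfProfile(t)$ to verify locally (i) every instance of $\alpha$ between topologically adjacent nodes, (ii) the witness for each $\forall x\exists y\,\beta_i$ whenever $y$ can be realized adjacent to $x$ or as a same-data adjacent node, and (iii) that the profile $\bar{n}$ it claims at $u$ is consistent with the same-data neighbours of $u$ as declared by $\sfProfile(t)$.

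What remains concerns pairs $(x,y)$ that are topologically unrelated, which in $\FO^2(\eda,\era,\sim)$ are characterised entirely by the two $1$-types and the $\sim$/$\nsim$ bit. The string $\V_\Gamma(t')$ lists, in data-value order, the subset of $(\tau,\bar{n})$ occurring at each distinct data value; I define $\M$ to accept iff (a) the multiset of appearing profile-types collectively supplies a witness for every $\forall x\exists y\,\beta_i$ with $x\nsim y$, and (b) $\forall x\forall y\,\alpha$ holds on every pair of profile-types that co-occur at distinct data values. Both are pure counting conditions on $2^\Gamma$, so $\L(\M)$ is commutative; finally $\Gamma_0$ collects those $(\tau,\bar{n})$ whose $\bar{n}$ marks a singleton data value. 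The main obstacle is soundness of the abstraction — that any type-profile multiset satisfying the counting constraints is in fact realisable as an ordered-data tree — for which I would use a reshuffling argument in the spirit of Lemma~\ref{l: canonical lemma}, choosing $K$ large enough that spare data values can always be re-labelled to patch any local violation. Complexity-wise, $1$-types are exponential in $|\varphi|$, profiles doubly exponential, and the commutative counting automaton $\M$ adds one further exponential, delivering the stated triple-exponential translation and matching the $3$-$\nexp$ bound recalled in Theorem~\ref{t: emso2 decidable}.
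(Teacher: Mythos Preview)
Your easier direction (ODTA with commutative $\M$ to $\EMSO^2(\eda,\era,\sim)$) is essentially the paper's Lemma~\ref{l: fo2 capture commutative}: encode the run of $\T$, expand $\L(\M)$ via Theorem~\ref{t: commutative regular} into periodic pieces, and express each $|[S]_{t'}|$ condition using Examples~\ref{eg: m data values} and~\ref{eg: mod m data values}. That part is fine.

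The harder direction, however, diverges from the paper and has a genuine gap. The paper does \emph{not} go through Scott normal form with type-profiles. It invokes Proposition~\ref{p: normal form EMSO2} (the BMSS normal form), which already isolates the global data constraints into exactly two shapes: key constraints (N6$'$) $\forall x\forall y(\alpha'(x)\wedge\alpha'(y)\wedge x\sim y\to x=y)$ and inclusion constraints (N7$'$) $\forall x\exists y(\alpha'(x)\to\beta'(y)\wedge x\sim y)$. These map \emph{directly} onto the ODTA components: (N6$'$) becomes membership of the relevant $X_i$'s in $\Gamma_0$, and (N7$'$) together with the binary part of (N6$'$) becomes a set of forbidden letters, so that $\M$ is simply $(2^{\Gamma}\setminus(\P_1\cup\P_2))^*$ --- trivially commutative. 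No counting beyond ``at most one'' is ever needed.

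Your construction, by contrast, asks the transducer to guess at each node the threshold-truncated profile $\bar n_{\val_t(u)}$ of its data value, and this is where it breaks. The only global mechanism you have for policing such guesses is $\Gamma_0$, which enforces ``at most one node of a given output letter per data value''. That gives you threshold $1$, but you need threshold at least $2$: to evaluate $\forall x\forall y\,\alpha$ on a same-data, non-adjacent pair with equal $1$-type you must know whether the count is $1$ or $\geq 2$, and nothing in your $\T$, commutative $\M$, or $\Gamma_0$ prevents the transducer from claiming $\bar n[\tau]=1$ when in fact two $\tau$-nodes share the data value (or vice versa). The commutative $\M$ sees only the \emph{set} of output letters at each data value, not multiplicities, so it cannot catch this lie either. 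Your appeal to Lemma~\ref{l: canonical lemma} does not help here: that lemma reshuffles data values so adjacent nodes differ, and in the paper it is used for Theorem~\ref{t: decidability} (the decidability argument), not for the present equivalence; the soundness obstacle you face is not realisability of a multiset but the transducer's ability to cheat about per-value counts.

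The fix is to abandon the profiles and instead lean on Proposition~\ref{p: normal form EMSO2}; once you have (N6$'$)/(N7$'$), the ODTA writes itself and the commutativity of $\M$ is immediate.
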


Theorem~\ref{t: logic S-automata} below provides the logical characterisation of ODTA.
The proof can be found in Subsection~\ref{ss: proof: t: logic S-automata}.

\begin{theorem}
\label{t: logic S-automata}
A language $\L_{data}$ is accepted by an ODTA if and only if
it is expressible with a formula of the form:
$\exists X_1 \cdots \exists X_m \ \varphi \wedge \psi$,
where $\varphi$ is a formula from $\FO^2(\eda,\era,\sim)$,
and $\psi$ from $\FO(\sim,\prec,\dvSucc)$, both
extended with the unary predicates $X_1,\ldots,X_m$ and $a(\cdot)$.
Moreover, the translation from ODTA to formula is of polynomial time,
and from formula to ODTA is effective, but non-elementary.
\end{theorem}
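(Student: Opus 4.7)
The plan is to prove the two directions separately. The ODTA-to-formula direction will be polynomial-time: it amounts to asserting in the logic the existence of accepting runs of the two components of $\S$. The formula-to-ODTA direction will be non-elementary, the bottleneck being the standard first-order-to-automaton translation applied to $\psi$.

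For the easy direction, suppose $\S = \langle \T, \M, \Gamma_0 \rangle$ has output alphabet $\Gamma$, and let $Q_{\T}$ and $Q_{\M}$ be the state sets of $\T$ and $\M$. I would introduce monadic predicates for: (i) each $c \in \Gamma$, to guess $\T$'s output at each node; (ii) each $q \in Q_{\T}$, to guess $\T$'s state at each node; (iii) each $p \in Q_{\M}$, intended to mark a node $u$ precisely when $p$ is the state of $\M$ just after reading the letter of $\V_\Gamma(t')$ indexed by $\val_{t'}(u)$; and (iv) auxiliary predicates encoding runs of the word automata defining $\T$'s horizontal transition languages. The formula $\varphi \in \FO^2(\eda,\era,\sim)$ asserts that the $\Gamma$-predicates partition the nodes, that each node's profile (definable from $\eda,\era,\sim$) together with its guessed $\Gamma$-output triggers a legal transition of $\T$, that the horizontal transition languages are respected along children (using $\era$ and the auxiliary predicates), and that any two nodes sharing a data value agree on their $Q_{\M}$-predicate. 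The formula $\psi \in \FO(\sim,\prec,\dvSucc)$ asserts that $\dvSucc$-consecutive data values induce legal $\M$-transitions whose input letter equals the set of $\Gamma$-labels present there, that the $\prec$-minimum and $\prec$-maximum data values bear the initial and final $\M$-states, and finally that for every $c \in \Gamma_0$, $\forall x \forall y (X_c(x) \wedge X_c(y) \wedge x \sim y \to x = y)$.

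For the hard direction, let $\Phi = \exists X_1 \cdots \exists X_m\, \varphi \wedge \psi$. Set $\Sigma' = \Sigma \times 2^{\{1,\ldots,m\}}$, internalising the $X_i$-guesses as part of an enriched label. Then $\varphi$ and $\psi$ can be viewed as $\FO^2(\eda,\era,\sim)$ and $\FO(\sim,\prec,\dvSucc)$ sentences, respectively, over $\Sigma'$-labelled ordered-data trees. Applying Theorem~\ref{t: commutativity} to $\varphi$ yields an ODTA $\S_\varphi = \langle \T_\varphi, \M_\varphi, \Gamma_0^\varphi \rangle$ over $\Sigma'$ with output alphabet $\Gamma_\varphi$ and commutative $\L(\M_\varphi)$. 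For $\psi$, set $k = \sfFOqr(\psi)$; since $\sfMSOqr(\psi) = 0$, Corollary~\ref{cor: extended rep for mso} tells us that the truth of $\psi$ on a $\Sigma'$-labelled ordered-data tree $t$ depends only on its $k$-extended representation $\V^k_{\Sigma'}(t)$. I would translate $\psi$ by the standard (non-elementary) first-order-to-automaton procedure into a finite word automaton $\B_{\psi}$ over the alphabet $2^{\Sigma'} \times \F_{\Sigma',k}$ that accepts precisely the extended representations satisfying $\psi$. The bridging device is to enrich the transducer's output alphabet to $\Gamma = \Gamma_\varphi \times \Sigma' \times \{1,\ldots,k\}$: the new transducer simulates $\T_\varphi$ on a non-deterministically chosen $\Sigma'$-extension of the input, while also tagging each node with a rank in $\{1,\ldots,k\}$ giving its position (capped at $k$) among the nodes sharing both its $\Sigma'$-label and its data value. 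A careful choice of $\Gamma_0$ together with a local constraint built into $\M$'s allowed letters in $2^\Gamma$ force these tags to be used consistently, so that the set of symbols at each data-value position of $\V_\Gamma(t')$ determines the $k$-characteristic function over $\Sigma'$ at that data value. The string automaton $\M$ of the resulting ODTA is then the product of $\M_\varphi$ (reading the $\Gamma_\varphi$-projection of $\V_\Gamma(t')$), $\B_{\psi}$ (reading the decoded $k$-extended representation), and the tag-consistency check.

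The main obstacle lies in the hard direction, specifically in reconciling what $\V_\Gamma(t')$ can record, namely only the set of labels appearing at each data value, with what $\psi$ can distinguish, which, through nested uses of $\sim$, includes multiplicities of labels per data value up to the quantifier rank. The rank-tagging construction packages these multiplicities into the label coordinate, while the ODTA-specific $\Gamma_0$ mechanism provides precisely the handle needed to guarantee distinctness of the low-rank tags. Every other step of the construction is at most exponential; the sole source of the non-elementary blow-up is the first-order-to-automaton translation of $\psi$.
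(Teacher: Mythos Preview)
Your proposal is correct and follows the same overall route as the paper: encode runs of $\T$ and $\M$ via monadic predicates for the easy direction, and for the hard direction handle $\varphi$ through the $\FO^2(\eda,\era,\sim)$ machinery (you via Theorem~\ref{t: commutativity} as a black box, the paper by unfolding the normal form N0$'$--N7$'$ directly) and $\psi$ through the non-elementary FO-to-automaton translation on the string representation of data values. The one place where you are more explicit than the paper is the multiplicity encoding. The paper's proof appeals to Lemma~\ref{l: aut for fo string rep}, which produces an automaton over $2^{\Gamma}\times\F_{\Gamma,k}$, and is rather terse about how the $\F_{\Gamma,k}$-component becomes visible to an ODTA whose $\M$ reads only $2^{\Gamma}$; your rank-tagging by $\{1,\ldots,k\}$, with distinctness of sub-$k$ ranks enforced through $\Gamma_0$ and downward-closure of ranks enforced through an $\M$-side constraint on allowed letters, is exactly the bridge that makes the set at each data value determine the $k$-characteristic function there, and it makes the construction watertight.
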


Finally, we show that
the non-emptiness problem for ODTA is decidable in Theorem~\ref{t: decidability}.
The proof can be found in Subsection~\ref{ss: proof: t: decidability}.


\begin{theorem}
\label{t: decidability}
The non-emptiness problem for ODTA is decidable in 3-$\nexp$.
\end{theorem}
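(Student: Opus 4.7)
The plan is to reduce the problem, with a triply exponential blow-up, to the non-emptiness of an automaton with Presburger constraints, which by Theorem~\ref{t: non-emptiness and closure presburger} is in $\np$. Fix an ODTA $\S = \langle \T,\M,\Gamma_0\rangle$ with transducer output alphabet $\Gamma$. First I would replace $\M$ by an existential Presburger formula $\xi_{\sM}(z_S)_{S \in 2^\Gamma \setminus \{\emptyset\}}$ describing the Parikh image of $\L(\M)$, via Parikh's theorem; this costs one exponential. The justification for passing to $\Parikh(\L(\M))$ is that, since data values are chosen freely in $\bbN$, any arrangement of ``slots'' (distinct data values together with their $\Gamma$-label sets) in any linear order is equally feasible, so realisability of a multiset of slots as $\V_{\Gamma}(t')$ coincides with that multiset lying in $\Parikh(\L(\M))$.

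Next, build a tree automaton $\A$ over $\Sigma \times \{\top,\bot,*\}^3 \times \Gamma$ of polynomial size in $|\T|$ that simulates the underlying automaton of $\T$ on profile-augmented inputs and enforces the structural conditions on the ``$*$'' entries (root, first- and last-sibling). By Proposition~\ref{p: aut-to-presburger}, extract from $\A$ an existential Presburger formula $\xi_{\sA}$ describing the Parikh images of its accepted trees. The central task is then to express that there is a data assignment whose profile matches the guessed $(a,p,b)$-labels, whose $\Gamma_0$-witnesses are all distinct, and whose resulting slot multiset matches some $(z_S)$ satisfying $\xi_{\sM}$. A profile induces an equivalence relation on the nodes of $t'$ via the transitive closure of $\top$-edges, and a valid assignment is a grouping of these classes into slots such that (i) no $\bot$-edge is internal to a slot, (ii) each $a \in \Gamma_0$ occurs in at most one node per slot, and (iii) the slot label-sets realise $(z_S)$.

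The main obstacle is turning these combinatorial conditions into a Presburger-expressible statement, because $\top$-equivalence classes are global objects. To handle this I would invoke the combinatorial Lemma~\ref{l: canonical lemma}: whenever every $|V_{t'}(a)|$ exceeds the threshold $\deg(t')|\Gamma|+\deg(t')+1$, data values can be reassigned so that every profile edge becomes $\bot$ and every equivalence class is a singleton. I would then case-split on which labels have ``small'' $|V_{t'}(a)|$ and guess exact counts for those; this case analysis is doubly exponential when measured against the already-exponential $\xi_{\sM}$ and the $2^{|\Gamma|}$ choices of threshold subsets. In the remaining all-$\bot$ canonical case, conditions (i)--(iii) become purely local Presburger constraints on the counts supplied by $\xi_{\sA}$, $\xi_{\sM}$ and the $\Gamma_0$-uniqueness equalities, and their conjunction is an existential Presburger sentence whose satisfiability is in $\np$. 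Non-deterministically performing the outer guesses and this final $\np$ check gives the claimed $3$-$\nexp$ bound; the delicate point — where the combinatorial lemma is pivotal — is checking that the all-$\bot$ normalisation does not invalidate the transducer's guessed run, which is handled by fixing the profile first and only then applying the lemma to witness a compatible data reassignment.
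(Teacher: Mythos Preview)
Your overall architecture---pass to the Parikh image of $\M$, build an APC, and use Lemma~\ref{l: canonical lemma} to justify a data assignment---matches the paper's. But there is a genuine gap in how you deploy the combinatorial lemma, and a key ingredient is missing.

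You propose to apply Lemma~\ref{l: canonical lemma} so that ``every profile edge becomes $\bot$ and every equivalence class is a singleton,'' and then say this is ``handled by fixing the profile first.'' These two statements are incompatible: the profile is \emph{input} to $\T$, so if the guessed profile contains $\top$-edges, any compatible data assignment must give those adjacent nodes the \emph{same} value; you cannot afterwards reassign to separate them without invalidating the run. The lemma must therefore be applied at the level of \emph{zones} (your $\top$-equivalence classes), treating each zone as a single vertex of the graph $G$. This is exactly what the paper does.

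Once you work at zone level, the hypothesis of Lemma~\ref{l: canonical lemma} requires a bound on $\deg(G)$, the degree of the zone adjacency graph. Nothing in your proposal bounds this: in an unranked tree a single zone can border arbitrarily many other zones (e.g.\ a root whose many children all carry pairwise distinct values different from the root's). Your threshold ``$\deg(t')|\Gamma|+\deg(t')+1$'' is not a constant of the automaton, and your case-split on which \emph{labels} have few data values does not constrain zone degrees at all.

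The paper closes this gap with a step you are missing entirely: a small-model lemma (Lemma~\ref{l: small model}, obtained by adapting the cut-and-paste arguments of \cite{BMSS09}) showing that a nonempty $\L_{data}(\S)$ always contains a witness in which at most $K^{O(K^2)}$ zones have outdegree $\geq K^{(K^3)}$, where $K=O(|\Sigma|\cdot|Q|\cdot|\Gamma|)$. The decision procedure then (i) passes to a \emph{zonal} ODTA over the alphabet $2^{2^{\Gamma}}$, (ii) guesses explicit constant data values for the boundedly many high-degree zones and encodes them into the states of the tree automaton, and (iii) applies Lemma~\ref{l: canonical lemma} only to the remaining zones, whose degree is now uniformly bounded by $2K^{(K^3)}$. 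Steps (i)--(ii), together with the doubly-exponential number of constants to be guessed and tracked, are what produce the triple exponential; your complexity accounting does not reflect this because the small-model step is absent.
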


The best lower bound known up to date is $\np$-hard.
See~\cite{FL-jacm,edt}.

\subsection{Proof of Theorem~\ref{t: commutativity}}
\label{ss: proof: t: commutativity}

In the proof we assume that the ordered-data trees are over the finite alphabet $\Sigma$.
We will need the following proposition which states that
every $\EMSO^2(\eda,\era,\sim)$ formula can be syntactically rewritten to
a normal form for $\EMSO^2(\eda,\era,\sim)$.
\begin{proposition}
\label{p: normal form EMSO2}{\em \cite[Proposition~3.8]{BMSS09}}
Every formula $\psi \in \EMSO^2(\eda,\era,\sim)$ can be rewritten into
a normal form of exponential size of the form: 
$\exists Y_1\cdots\exists Y_n \ \varphi$,
where $\varphi$ is a conjunction of formulae of the form:
\begin{itemize}
\item[\rm (N1)]
$\forall x \forall y \ (\alpha(x) \wedge \delta(x,y) \wedge \xi(x,y) \to \beta(y))$,
\item[\rm (N2)]
$\forall x \ (\root(x)  \to \alpha(y))$,
\item[\rm (N3)]
$\forall x \ (\fs(x)  \to \alpha(y))$,
\item[\rm (N4)]
$\forall x \ (\ls(x)  \to \alpha(y))$,
\item[\rm (N5)]
$\forall x \ (\leaf(x)  \to \alpha(y))$,
\item[\rm (N6)]
$\forall x \forall y \ (\alpha(x) \wedge \alpha(y) \wedge x\sim y \to x=y)$,
\item[\rm (N7)]
$\forall x \exists y \ (\alpha(x) \to \beta(y) \wedge x\sim y)$,
\end{itemize}
where $\alpha(x),\beta(x)$ is a conjunction of some unary predicates and its negations,
$\delta(x,y)$ is either $\eda(x,y)$ or $\era(x,y)$,
and
$\xi(x,y)$ is either $x\sim y$ or $x\nsim y$.
\end{proposition}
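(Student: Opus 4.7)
The plan is to adapt the classical Scott normal form construction for $\FO^2$ to the vocabulary $(\eda,\era,\sim)$ over unranked data trees, keeping the $\EMSO$ prefix $\exists Y_1 \cdots \exists Y_n$ intact throughout. The argument has two phases: a ``Scott phase'' that reduces the $\FO^2$ kernel to a conjunction of $\forall x \forall y$ and $\forall x \exists y$ clauses with quantifier-free matrices, followed by a ``decomposition phase'' that rewrites each such clause into one of the shapes (N1)--(N7).

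For the Scott phase, I would proceed by induction on the structure of the $\FO^2$ kernel. For each subformula $\varphi(x)$ with a single free variable (and two-variable logic guarantees that renaming allows us to reduce to this case), introduce a fresh monadic predicate $Y_\varphi$ intended to encode ``$\varphi(x)$ holds at $x$,'' together with defining clauses $\forall x \, (Y_\varphi(x) \leftrightarrow \varphi(x))$ expanded into $\forall\forall$ and $\forall\exists$ shape. Because each subformula uses at most two variables, the number of such predicates is exponential in the length of the original formula; all of them are absorbed into the outer $\exists Y_1 \cdots \exists Y_n$ prefix. The result is a conjunction of $\forall\forall$ and $\forall\exists$ clauses with quantifier-free matrices over the enriched vocabulary.

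For the decomposition phase, I would process each clause by case-splitting on the possible 2-type of $(x,y)$ generated by the atoms $\eda$, $\era$, $\sim$, $=$ and their negations. A universal clause $\forall x \forall y\, \chi_0$ splits into: (a) the case $x=y$, a unary constraint that can be absorbed into the labels, captured by (N2)--(N5) if it concerns boundary positions; (b) the cases $\delta(x,y) \wedge \xi(x,y)$ with $\delta \in \{\eda,\era\}$ and $\xi \in \{\sim,\nsim\}$, which are exactly the shape (N1) after reading the implication; (c) the case $x\sim y \wedge x \neq y$ with no edge, which becomes the uniqueness shape (N6) after isolating the $\alpha(x) \wedge \alpha(y)$ conjunctions that force equality; and (d) the ``no relation'' case, encoded by propagating fresh monadic flags upwards along $\eda$ and $\era$ using (N1), initialised at the leaves via (N5) and read off at the root via (N2). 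An existential clause $\forall x \exists y\, \chi_i$ splits analogously: a required $\sim$-witness yields directly the shape (N7); a required $\eda$- or $\era$-witness is captured by a fresh ``has-neighbour-of-type'' predicate governed by (N1) together with (N2)--(N5) at the appropriate boundary to rule out vacuous demands; and a witness without positional relation to $x$ is reduced to the form $(\exists x\,\alpha(x)) \to (\exists y\,\beta(y))$ and implemented by an upward ``some $\beta$ exists'' flag combined with a downward ``some $\alpha$ exists'' flag, both propagated via (N1) and synchronised at the root via (N2).

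The main obstacle will be verifying that the ``no relation'' cases in both the universal and existential parts can be faithfully simulated using only shapes (N1)--(N7). This reduces to showing that every ``global'' assertion (existence of a label, forbidden coexistence of two labels, conditional existence) can be captured by propagating a constant number of fresh monadic flags up to the root and back down, with correct initialisation and termination provided by the boundary clauses (N2)--(N5). Once each of the exponentially many clauses produced by the Scott phase is rewritten this way, conjoining them under the common $\EMSO$ prefix yields the required normal form of size exponential in the original formula.
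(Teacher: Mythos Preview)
The paper does not give its own proof of this proposition: it is quoted verbatim as \cite[Proposition~3.8]{BMSS09} and used as a black box. So there is no in-paper argument to compare against.

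That said, your plan is the right one and matches what is done in \cite{BMSS09}: a Scott-style normalisation to bring the $\FO^2$ kernel to a conjunction of $\forall\forall$ and $\forall\exists$ clauses, followed by a case split on the binary atoms to place each clause into one of (N1)--(N7). The one point worth flagging is your treatment of the ``no relation'' cases. For the $\forall\exists$ side, a clause $\forall x\exists y\,(\alpha(x)\to\beta(y)\wedge x\nsim y)$ is \emph{not} directly of the form (N7), and your reduction via global existence flags propagated by (N1) and read at the root via (N2) is the correct idea, but you should check that the interaction with $\nsim$ really disappears: one needs an auxiliary key-style predicate so that ``there is a $\beta$-node with a different data value'' becomes ``there is a $\beta$-node'' together with a uniqueness/non-uniqueness condition handled by (N6). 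Similarly, for the $\forall\forall$ side, the residual case where $x\nsim y$ and there is no edge between $x,y$ needs to be argued away, not just asserted to be encodable by flags. These are exactly the places where \cite{BMSS09} spends its effort; your sketch identifies them but does not yet discharge them.
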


We should remark that if $\varphi$ is a conjunction 
of formulae of the forms (N1)--(N5) above,
then there exists a tree automaton $\A$ over the alphabet $\Sigma \times \{\top,\bot,*\}^3$
such that 
for every ordered-data tree $t$,
$$
t\models \Psi 
\quad \mbox{if and only if}\quad
\sfProfile(t) \ \mbox{is accepted by} \ \A.
$$
Such construction is straightforward from the classical
automata theory. See, for example,~\cite{thomas-handbook}.
We divide the proof of Theorem~\ref{t: commutativity} into 
Lemmas~\ref{l: commutative capture fo2} and~\ref{l: fo2 capture commutative}
below.

\begin{lemma}
\label{l: commutative capture fo2}
For every formula $\Psi \in \EMSO^2(\eda,\era,\sim)$,
there exists an ODTA $\S_{\Psi} = \langle \T,\M,\Gamma_0\rangle$
such that $\L_{data}(\Psi)=\L_{data}(\S_{\Psi})$
and $\L(\M)$ is commutative.
Moreover, the construction of $\S_{\Psi}$ is effective and 
takes triple exponential time in the size of the formula $\Psi$.
\end{lemma}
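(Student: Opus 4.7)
My plan is to exploit the normal form of Proposition~\ref{p: normal form EMSO2} and to dispatch the three families of clauses to the three ingredients of an ODTA: the transducer $\T$ handles the local structural clauses (N1)--(N5), the set $\Gamma_0$ together with a per-letter constraint of $\M$ handles the ``distinct values'' clauses (N6), and a second per-letter constraint of $\M$ handles the ``matching value'' clauses (N7). Because every constraint imposed on $\M$ will look only at a single letter of $\V_\Gamma(t')$, the language $\L(\M)$ will be an intersection of finitely many letter-conditions, which is plainly closed under reordering and hence commutative (and thus a finite union of periodic languages by Theorem~\ref{t: commutative regular}).

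First I would normalise $\Psi$ to $\exists Y_1\cdots\exists Y_n\,\varphi$ of the shape given by Proposition~\ref{p: normal form EMSO2}, incurring one exponential blow-up. I treat $Y_1,\ldots,Y_n$ as additional labels, so the working alphabet becomes $\Sigma\times 2^n$; let $\Gamma$ be the set of \emph{types}, where a type is a maximal consistent conjunction of literals over the unary predicates appearing in $\varphi$. The transducer $\T$ guesses, at each node of the input profile tree, an enlarged label in $\Sigma\times 2^n$ together with a type $\gamma\in\Gamma$ consistent with that label, and verifies clauses (N1)--(N5). Since those clauses are local and $\sfProfile(t)$ already records, for every node, whether its parent, left-sibling and right-sibling carry the same data value, the check is done by a standard tree automaton as noted right after Proposition~\ref{p: normal form EMSO2}; its size is single-exponential in $|\Gamma|$ and hence doubly-exponential in $|\Psi|$.

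For the remaining clauses, to each predicate $\alpha$ occurring in a (N6) or (N7) clause I associate the set $A_\alpha=\{\gamma\in\Gamma:\gamma\models\alpha\}$. A clause (N6) with predicate $\alpha$ is captured by (i) putting every type $\gamma\in A_\alpha$ into $\Gamma_0$, which forces distinct values on nodes sharing a type in $A_\alpha$, and (ii) requiring that $\M$ accept only strings in which every letter $S\in 2^\Gamma$ satisfies $|S\cap A_\alpha|\leq 1$, which forbids two distinct types from $A_\alpha$ from sharing a value; together these encode exactly that all $\alpha$-nodes carry pairwise distinct data values. A clause (N7) with predicates $\alpha,\beta$ is captured by requiring that every letter $S$ of $\V_\Gamma(t')$ satisfy $S\cap A_\alpha\neq\emptyset\Rightarrow S\cap A_\beta\neq\emptyset$, i.e.\ every data value used by some $\alpha$-node is also used by some $\beta$-node. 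All of these are per-letter conditions, so their conjunction defines a commutative regular language over the alphabet $2^\Gamma$.

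The main obstacle I foresee is the bookkeeping of the semantic equivalence $t\models\Psi\iff t\in\L_{data}(\S_\Psi)$: the forward direction reads a witnessing assignment of $Y_1,\ldots,Y_n$ off a model of $\Psi$ and pushes it through $\T$, $\M$, $\Gamma_0$; the backward direction must argue that the $\Gamma$-typing guessed by $\T$ induces an interpretation of $Y_1,\ldots,Y_n$ satisfying every normal-form clause, a straightforward clause-by-clause case analysis once the above correspondences between (N6)--(N7) and the $\Gamma_0$/$\M$-conditions are fixed. Finally, the complexity bound falls out by composition: one exponential from normalisation, a second from $|\Gamma|\leq|\Sigma|\cdot 2^n$, and a third from the alphabet $2^\Gamma$ over which $\M$ is described, giving a triple-exponential construction, as required.
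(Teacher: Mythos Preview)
Your proposal is correct and follows essentially the same approach as the paper: normalise via Proposition~\ref{p: normal form EMSO2}, replace the $Y_i$'s by the exponentially many atomic ``types'' (the paper calls them $X_1,\ldots,X_m$ with $m=2^n$), let $\T$ verify the local clauses (N1)--(N5) using the profile information, put into $\Gamma_0$ every type satisfying some (N6)-predicate, and let $\M$ forbid exactly those letters $S\in 2^\Gamma$ that witness $|S\cap A_\alpha|\geq 2$ for some (N6)-clause or $S\cap A_\alpha\neq\emptyset\wedge S\cap A_\beta=\emptyset$ for some (N7)-clause. One small slip: your line ``single-exponential in $|\Gamma|$ and hence doubly-exponential in $|\Psi|$'' undercounts, since $|\Gamma|$ is already doubly exponential in $|\Psi|$ (because $n$ is exponential after normalisation); but this does not affect the final triple-exponential bound, which is governed by the size of the alphabet $2^\Gamma$ of $\M$, exactly as you say.
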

\begin{proof}
Applying Proposition~\ref{p: normal form EMSO2},
we can rewrite the formula $\Psi$ in its normal form $\exists Y_1\cdots\exists Y_n \Psi'$.
Furthermore, we can rewrite the formula $\Psi$ into the form
$\exists X_1\cdots\exists X_m \ \varphi$,
where $m=2^n$, and $\varphi$ is a conjunction of formulas of the form:
\begin{itemize}
\item[\rm (N0$'$)]
$X_1,\ldots,X_m$ are pairwise disjoint, and 
$\bigwedge_{a\in \Sigma} \forall x (a(x) \to \alpha'(x))$.
\item[\rm (N1$'$)]
$\forall x \forall y \ (\alpha'(x) \wedge \delta(x,y) \wedge \xi(x,y) \to \beta'(y))$,
\item[\rm (N2$'$)]
$\forall x \ (\root(x)  \to \alpha'(y))$,
\item[\rm (N3$'$)]
$\forall x \ (\fs(x)  \to \alpha'(y))$,
\item[\rm (N4$'$)]
$\forall x \ (\ls(x)  \to \alpha'(y))$,
\item[\rm (N5$'$)]
$\forall x \ (\leaf(x)  \to \alpha'(y))$,
\item[\rm (N6$'$)]
$\forall x \forall y \ (\alpha'(x) \wedge \alpha'(y) \wedge x\sim y \to x=y)$,
\item[\rm (N7$'$)]
$\forall x \exists y \ (\alpha'(x) \to \beta'(y) \wedge x\sim y)$,
\end{itemize}
where $\alpha'(x),\beta'(x)$ are disjunctions of
some of the $X_i$'s, and
$\delta(x,y)$ and $\xi(x,y)$ are the same above.
Intuitively, the unary predicates $X_1,\ldots,X_m$ corresponds to subsets of $\{Y_1,\ldots,Y_n\}$.

The ODTA $\S_{\Psi} = \langle \T,\M,\Gamma_0\rangle$
is defined as follows.
\begin{itemize}
\item 
The transducer $\T$ checks whether the formulas (N0$'$)--(N5$'$) 
are satisfied, with the output alphabet $\Gamma = \{X_1,\ldots,X_m\}$
where a node is labeled with $X_i$ if and only if it belongs to $X_i$.
\\
The construction of such transducer is straightforward, thus, omitted.
See, for example,~\cite{thomas-handbook}.
\item 
$\Gamma_0$ consists of the $X_i$'s,
where there exists $A \subseteq \{X_1,\ldots,X_m\}$
and $X_i \in A$ and a formula of the form (N6$'$)
$$
\forall x \forall y \ 
(\bigvee_{X_j \in A} X_j(x) \wedge \bigvee_{X_j \in A} X_j(y) \wedge x\sim y \to x=y),
$$
in $\varphi$.
\item 
the automaton $\M$ accepts the language $(2^{\{X_1,\ldots,X_m\}}-(\P_1\cup\P_2))^{\ast}$,
where 
\begin{eqnarray*}
\P_1 & := & 
\left\{
\begin{array}{l}
S 
\left|
\begin{array}{l} \mbox{there exists a formula}
\\
\qquad
\forall x \exists y \ (\bigvee_{X \in A} X(x) \to \bigvee_{X \in B} X(y) \wedge x\sim y)
\\ 
\mbox{in} \ \varphi \ \mbox{such that} 
     \ S \cap A \neq \emptyset \ \mbox{but} \ S \cap B = \emptyset
\end{array}
\right\}
\end{array}
\right.
\\
\P_2 & := & 
\left\{
\begin{array}{l}
S 
\left|
\begin{array}{l} \mbox{there exists a formula}
\\ 
\qquad
\forall x \forall y \ (\bigvee_{X \in A} X(x) \wedge \bigvee_{X \in A} X(y) \wedge x\sim y \to x=y) 
\\
 \mbox{in} \ \varphi \ \mbox{such that} \ |S \cap A| \geq 2
\end{array}
\right\}
\end{array}
\right.
\end{eqnarray*}
\end{itemize}
That $\L(\M)$ is commutative is trivial.
That $\S$ accepts precisely the language $\L_{data}(\Psi)$
can be deduced from the following.
\begin{itemize}
\item
That $\T$ ensures that formulas N0$'$--N5$'$ are satisfied.
\item
That $\Gamma_0$ contains precisely the symbols $X_i$'s
where all $X_i$-nodes are supposed to contain different data values.
\item
That for every ordered-data tree $t$,
$$
t \models \forall x \exists y \ (\bigvee_{X \in A} X(x) \to \bigvee_{X \in B} X(y) \wedge x\sim y)
$$ 
if and only if
$[S]_t = \emptyset \ \mbox{for all} \ S \ \mbox{such that} \ 
S \cap A \neq \emptyset \ \mbox{but} \ S \cap B = \emptyset$.
\item
That for every ordered-data tree $t$,
$$
t\models 
\forall x \forall y \ 
(\bigvee_{X \in A} X(x) \wedge \bigvee_{X \in A} X(y) \wedge x\sim y \to x=y)
$$
if and only if
\begin{itemize}
\item
$[S]_t = \emptyset$ for all $S$ such that $|S \cap A| \geq 2$; and
\item
for all $X \in A$, 
$t\models \forall x \forall y \ (X(x) \wedge X(y) \wedge x\sim y \to x=y)$,
which is captured by the condition imposed by $\Gamma_0$.
\end{itemize}
\end{itemize}
The analysis of the complexity is as follows.
The first step, applying Proposition~\ref{p: normal form EMSO2},
induces an exponential blow-up in the size of the input.
The second step to construct the formula $\exists X_1 \cdots \exists X_m \ \varphi$
takes exponential time in $n$,
and $n$ is exponential in the size of the input.
The construction of $\T$ takes polynomial time in the size of $\varphi$,
since (N0$'$)--(N5$'$) are already in the ``automata transition'' format.
The construction of $\Gamma_0$ takes polynomial time in $m$,
while the construction of $\M$ induces another exponential blow-up in $m$.
Altogether the complexity of our constructing $\S_{\Psi}$
is triple exponential time in the size of $\Psi$. 
This concludes the proof of Lemma~\ref{l: commutative capture fo2}.
\end{proof}

For the complexity analysis in Lemma~\ref{l: fo2 capture commutative},
we assume that a commutative automaton $\M$
is given as a set of vectors (in binary format) indicating its Parikh images.
That is, $\M$ is given as a set $I = \{(\bar{u}_1,\bar{v}_{1,1},\ldots,\bar{v}_{1,\ell}),\ldots,
(\bar{u}_n,\bar{v}_{n,1},\ldots,\bar{v}_{n,\ell})\}$, where
$$
\bigcup_{(\bar{u},\bar{v}_1,\ldots,\bar{v}_{\ell}) \in I} 
\L(\bar{u},\bar{v}_1,\ldots,\bar{v}_{\ell})
= 
\L(\M),
$$
and each number in the vectors in $I$ is written in the standard binary form.
 
\begin{lemma}
\label{l: fo2 capture commutative}
For every ODTA $\S = \langle \T,\M,\Gamma_0\rangle$,
where $\L(\M)$ is a commutative language,
there exists a formula $\varphi \in \EMSO^2(\eda,\era,\sim)$
such that $\L_{data}(\varphi)=\L_{data}(\S)$. 
Moreover, the construction of $\varphi$ takes exponential time in the size of $\S$.
\end{lemma}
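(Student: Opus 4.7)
My plan is to construct an $\EMSO^2(\eda, \era, \sim)$ formula $\Phi_\S$ that existentially quantifies monadic predicates $\{X_\gamma\}_{\gamma \in \Gamma}$ to guess a $\Gamma$-relabeling $t'$ of the input tree $t$, and whose first-order part is a conjunction $\Phi_\T \wedge \Phi_{\Gamma_0} \wedge \Phi_\M$ asserting that $t'$ is a valid output of $\T$ on $\sfProfile(t)$, that the distinct-value constraints from $\Gamma_0$ are satisfied, and that $\V_\Gamma(t') \in \L(\M)$.

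For $\Phi_\T$ I first observe that each node's profile in $\{\top,\bot,*\}^3$ is definable in $\FO^2(\eda, \era, \sim)$ from the parent/sibling predicates together with $\sim$ and $\nsim$, so the input alphabet $\Sigma \times \{\top,\bot,*\}^3$ of $\T$ is effectively accessible. The transducer pairs an unranked tree automaton with an output relation, so its set of valid (input, output) labelings is a regular tree property. Since on unranked trees $\EMSO^2(\eda, \era)$ coincides with $\MSO(\eda, \era)$ (as noted just before Theorem~\ref{t: fo undecidable}), I can write $\Phi_\T$ as an $\EMSO^2(\eda, \era)$ formula by additionally existentially quantifying unary predicates for the states of the underlying tree automaton. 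For $\Phi_{\Gamma_0}$ I simply take $\bigwedge_{\gamma \in \Gamma_0} \forall x \forall y (X_\gamma(x) \wedge X_\gamma(y) \wedge x \sim y \to x = y)$, which lies in $\FO^2(\sim)$.

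The encoding of $\Phi_\M$ is where the commutativity hypothesis is essential. By Theorem~\ref{t: commutative regular} the input description of $\M$ already lists $\L(\M)$ as a union of periodic languages $\L(\bar{u}_i, \bar{v}_{i,1}, \ldots, \bar{v}_{i,\ell})$, with $\ell = |2^\Gamma - \{\emptyset\}|$. Since the Parikh image of $\V_\Gamma(t')$ records exactly $|[S]_{t'}|$ for each nonempty $S \subseteq \Gamma$, and commutativity makes membership in $\L(\M)$ a Parikh-only condition, $\Phi_\M$ becomes a disjunction over $i$ of conjunctions over $S$ of conditions of the form ``$|[S]_{t'}| = (\bar{u}_i)_S + h_S \cdot (\bar{v}_{i,S})_S$ for some $h_S \geq 0$''. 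This reduces to the conjunction of ``$|[S]_{t'}| \geq (\bar{u}_i)_S$'' and ``$|[S]_{t'}| \equiv (\bar{u}_i)_S \pmod{(\bar{v}_{i,S})_S}$'' (or to the equality $|[S]_{t'}| = (\bar{u}_i)_S$ when the base coordinate vanishes). Each such condition is expressible in $\EMSO^2(\eda, \era, \sim)$ by straightforward adaptations of the templates of Examples~\ref{eg: m data values} and~\ref{eg: mod m data values}, now applied to the guessed $\Gamma$-labeling via the predicates $X_\gamma$ rather than to the original $\Sigma$-labels.

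The main obstacle is the complexity accounting. The templates of Examples~\ref{eg: m data values} and~\ref{eg: mod m data values} use a number of monadic predicates linear in the threshold or modulus, and the given vectors for $\M$ are written in binary, so individual coordinates can be exponentially large in $|\S|$. This forces the cardinality and modular subformulas, and hence $\Phi_\S$ as a whole, to have exponential size; the delicate bookkeeping step is to verify that nothing worse than a single exponential blow-up arises, noting that the number of periodic components and the $2^{|\Gamma|}$-many coordinates each contribute at most one exponential factor. Correctness is then routine: any witness of $\Phi_\S$ on $t$ yields a labeling $t'$, an accepting run of $\T$, and a periodic component of $\L(\M)$ whose Parikh constraints match the Parikh image of $\V_\Gamma(t')$, thereby reconstructing an accepting computation of $\S$; conversely every such computation yields a satisfying assignment for $\Phi_\S$.
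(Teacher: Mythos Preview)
Your proposal is correct and follows essentially the same approach as the paper: both existentially quantify predicates for the states of $\T$ and for the output labels in $\Gamma$, express the transducer and the $\Gamma_0$ constraint directly, and handle $\M$ by decomposing $\L(\M)$ into periodic languages via Theorem~\ref{t: commutative regular} and then encoding each coordinate's threshold-plus-modulus condition using the templates of Examples~\ref{eg: m data values} and~\ref{eg: mod m data values}. The only cosmetic difference is that the paper separates the ``$=u_i$'' part (witnessed by fresh predicates $W_{i,1},\ldots,W_{i,u_i}$) from the modular part on the remaining values, whereas you phrase it equivalently as ``$\geq u_i$ and $\equiv u_i \pmod{v_i}$''; the complexity accounting is identical.
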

\begin{proof}
Let $Q_{\sT} = \{q_0,\ldots,q_m\}$ and $\Gamma = \{\alpha_1,\ldots,\alpha_k\}$ 
be the set of states and the output alphabet of the transducer $\T$, respectively.
Let $\ell = 2^{|\Gamma|}-1$.

By Theorem~\ref{t: commutative regular},
$\L(\M)$ is a finite union of periodic languages.
Let $I$ be the finite set of $(\ell+1)$-tuple of $\bbN^{\ell}$-vectors
such that
$$
\bigcup_{(\bar{u},\bar{v}_1,\ldots,\bar{v}_{\ell}) \in I} 
\L(\bar{u},\bar{v}_1,\ldots,\bar{v}_{\ell})
= 
\L(\M).
$$
Let $I = \{(\bar{u}_1,\bar{v}_{1,1},\ldots,\bar{v}_{1,\ell}),\ldots,
(\bar{u}_n,\bar{v}_{n,1},\ldots,\bar{v}_{n,\ell})\}$
and $S_1,\ldots,S_{\ell}$ be the enumeration of non-empty subsets of $\Gamma$.
First, for $(\bar{u},\bar{v}_1,\ldots,\bar{v}_{\ell}) \in I$, 
we construct an $\EMSO^2(\eda,\era,\sim)$ formula 
$\Psi_{(\bar{u},\bar{v}_1,\ldots,\bar{v}_{\ell})}$
where
\begin{eqnarray*}
t \in \Psi_{(\bar{u},\bar{v}_1,\ldots,\bar{v}_{\ell})}
& \mbox{if and only if} &  
\left[
\begin{array}{l}
 \mbox{there exists}\  h_1,\ldots,h_{\ell}\geq 0 
\ \mbox{such that} 
\\
(|[S_1]_t|,\ldots,|[S_{\ell}]_t|) 
=
\bar{u} + h_1 \bar{v}_1 + \cdots + h_{\ell} \bar{v}_{\ell}
\end{array}
\right]
\end{eqnarray*}
We denote by $v_i$ the non-zero entry of $\bar{v}_i$.
This formula $\Psi_{(\bar{u},\bar{v}_1,\ldots,\bar{v}_{\ell})}$ is as follows.
\begin{eqnarray*}
& & \exists W_{1,1} \cdots W_{1,u_1} \ \cdots\cdots \ \exists W_{\ell,1} \cdots W_{\ell,u_{\ell}}
\\
& & \quad\quad \exists X_{1,0} \cdots X_{1,v_1-1} \ \exists Y_{1,0} \cdots Y_{1,v_1-1} \ Z_1
\\
& & \quad\quad\quad\quad\quad \ddots
\\
& & \quad\quad\quad\quad\quad
\exists X_{\ell,0} \cdots X_{\ell,v_\ell-1} \ \exists Y_{\ell,0} \cdots Y_{\ell,v_\ell-1} \ Z_\ell
\\
& & \quad\quad\quad\quad\quad\quad\quad\quad
\bigwedge_{i} W_{i,1},\ldots,W_{i,u_i} \cap Z_i = \emptyset
\\
& & \quad\quad\quad\quad\quad\quad\quad\quad\wedge \
\bigwedge_{i} \varphi_{|[S_i]|=u_i}(W_{i,1},\ldots,W_{i,u_i})
\\
& & \quad\quad\quad\quad\quad\quad\quad\quad\wedge \
\bigwedge_{i} 
\varphi_{|[S_i]|\equiv v_i\pmod{m}}(X_{i,0},\ldots,X_{i,v_i-1},Y_{i,0},\ldots,Y_{i,v_i-1},Z_i)
\end{eqnarray*}
where $\varphi_{S_i,u_i}(W_{i,1},\ldots,W_{i,u_i})$
and $\varphi_{S_i,\pmod{v_i}}(X_{i,0},\ldots,X_{i,v_i-1},Y_{i,0},\ldots,Y_{i,v_i-1},Z_i)$
are the formulas for the languages $\L_{data}^{S_i,u_i}$
and $\L_{data}^{S_i,\pmod{u_i}}$
in Examples~\ref{eg: m data values} and~\ref{eg: mod m data values}, respectively.

The desired formula $\varphi$ is:
\begin{eqnarray*}
& & 
\exists X_{q_0} \cdots \exists X_{q_m} 
\ \exists X_{\alpha_1} \cdots \exists X_{\alpha_k}
\ \exists \overline{X}_{(\bar{u_1},\bar{v}_{1,1},\ldots,\bar{v}_{1,\ell})}
\ \cdots \ \exists \overline{X}_{(\bar{u_n},\bar{v}_{n,1},\ldots,\bar{v}_{n,\ell})}
\\
& &
\qquad\qquad\qquad\qquad\qquad\varphi_{\Gamma_0} \wedge \varphi_{\sT} \wedge 
\bigvee_{(\bar{u},\bar{v}_{1},\ldots,\bar{v}_{\ell}) \in I} 
\varphi_{(\bar{u},\bar{v}_{1},\ldots,\bar{v}_{\ell})}
\end{eqnarray*}
where
\begin{itemize}
\item
the formula $\varphi_{\Gamma_0}$ expresses the fact that
the data values found under nodes labeled with a symbol from $\Gamma_0$ 
are all different;
\item
the unary predicates $X_{q_0}, \ldots, X_{q_m},X_{\alpha_1}, \ldots,X_{\alpha_k}$
are supposed to represent the states and the output alphabets of $\T$, respectively;
\item
the formula $\varphi_{\sT}$ expresses the behaviour of the transducer $\T$ -- that is,
a tree satisfies $\varphi_{\sT}$ in which for every node $u \in \Domain(t)$,
$X_{q_i}(u)$ and $X_{\alpha_j}(u)$ holds, if there exists an accepting run of $\T$ on $t$
in which the node $u$ is labeled with $q_i$ and output $\alpha_j$;
\item
the predicates $\overline{X}_{(\bar{u_i},\bar{v}_{i,1},\ldots,\bar{v}_{i,\ell})}$'s and
the formulas $\varphi_{(\bar{u_i},\bar{v}_{i,1},\ldots,\bar{v}_{i,\ell})}$'s
are as in the formula $\Psi_{(\bar{u},\bar{v}_1,\ldots,\bar{v}_{\ell})}$ defined above. 
\end{itemize}
The analysis of the complexity is as follows.
The size of the formula $\varphi_{S_i,u_i}$ and $\varphi_{S_i,\pmod{v_i}}$
are exponential in the size of $S_i,u_i,v_i$.
Hence, the construction of $\Psi_{(\bar{u},\bar{v}_1,\ldots,\bar{v}_{\ell})}$
takes exponential time in the size of $(\bar{u},\bar{v}_1,\ldots,\bar{v}_{\ell})$.
The construction of $\varphi_{\Gamma_0}$ and $\varphi_{\sT}$ takes polynomial time
in the size of $\Gamma_0$ and $\T$, respectively.
Hence, the total time to construct the formula $\varphi$
is exponential in the size of $\S$.
This completes the proof of the lemma.
\end{proof}

\subsection{Proof of Theorem~\ref{t: logic S-automata}}
\label{ss: proof: t: logic S-automata}

In this subsection for every ordered-data tree $t$,
we assume that the data values in $t$
are precisely the natural numbers in the range $[1 .. m]$,
for a positive integer $m \geq 1$.
That is, if $d_1< d_2 < \cdots < d_m$
are the data values in $t$, then
$d_1 =1$, $d_2 = 2$, $\ldots$, $d_m=m$. 

We start with the following lemma.
\begin{lemma}
\label{l: aut for fo}
Let $\psi \in \FO(\sim,\prec)$ be of quantifier rank $k$.
Let $\Gamma= \{a_1,\ldots,a_{\ell}\}$ be the set of unary predicates
used in $\psi$.
There exists a finite state automaton $C$ 
over the alphabet $\Gamma \cup (2^{\Gamma}\times \F_{\Gamma,k})$ such that
the following holds.
\begin{itemize}\itemsep=0pt
\item
The automaton $C$ accepts words of the form
$$
\overbrace{a_1\cdots a_1}^{f_1(a_1)} \cdots \overbrace{a_{\ell}\cdots a_{\ell}}^{f_1(a_{\ell})}
\ (S_1,f_1) \
\cdots\cdots \
\overbrace{a_1\cdots a_1}^{f_m(a_1)} \cdots \overbrace{a_{\ell}\cdots a_{\ell}}^{f_m(a_{\ell})}
\ (S_m,f_m),
$$
where each $S_i = \{a \mid f_i(a)\geq 1\}$.
\item
For every ordered-data tree $t \models \psi$, 
if $\V^{(k)} = (S_1,f_1),\ldots,(S_m,f_m)$,
then there exists a word in $\L(C)$ of the form
$$
\overbrace{a_1\cdots a_1}^{f_1(a_1)} \cdots \overbrace{a_{\ell}\cdots a_{\ell}}^{f_1(a_{\ell})}
\ (S_1,f_1) \
\cdots\cdots \
\overbrace{a_1\cdots a_1}^{f_m(a_1)} \cdots \overbrace{a_{\ell}\cdots a_{\ell}}^{f_m(a_{\ell})}
\ (S_m,f_m)
$$
\item
For every word $w \in \L(C)$,
if $w$ is
$$
\overbrace{a_1\cdots a_1}^{f_1(a_1)} \cdots \overbrace{a_{\ell}\cdots a_{\ell}}^{f_1(a_{\ell})}
\ (S_1,f_1) \
\cdots\cdots \
\overbrace{a_1\cdots a_1}^{f_m(a_1)} \cdots \overbrace{a_{\ell}\cdots a_{\ell}}^{f_m(a_{\ell})}
\ (S_m,f_m)
$$
then there exists a tree $t\models \psi$,
where $\V^{(k)}(t) = (S_1,f_1)\cdots (s_m,f_m)$.
\end{itemize}
\end{lemma}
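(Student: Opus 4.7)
The plan is to reduce the construction of $C$ to the fact that the set of $k$-extended representations of ordered-data sets satisfying $\psi$ is a regular language, and then intersect it with a simple ``shape'' language that enforces the prescribed block-plus-summary format. Applying Corollary~\ref{cor: extended rep for mso} with $s=0$, whether a tree $t$ satisfies $\psi$ depends only on $\V^{k}_{\Gamma}(t)$ (and in particular only on the underlying ordered-data set, since $\psi$ uses neither $\eda$ nor $\era$). Hence the set
\[
L_{\psi} \;=\; \{\V^{k}_{\Gamma}(t) : t \models \psi\} \;\subseteq\; (2^{\Gamma} \times \F_{\Gamma,k})^{*}
\]
is well defined, and every word in $L_{\psi}$ arises as the $k$-extended representation of some $t \models \psi$.

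For the construction of $C$ I would take the intersection of two finite-state components. The first, a \emph{shape checker}, verifies that the input can be parsed as $w_1(S_1,f_1) \cdots w_m(S_m,f_m)$ with $w_i = a_1^{f_i(a_1)} \cdots a_{\ell}^{f_i(a_{\ell})}$ and $S_i = \{a : f_i(a) \geq 1\}$; since each $w_i$ has length at most $k|\Gamma|$, this is straightforwardly realized with finitely many states tracking the currently expected letter together with its remaining multiplicity, which is then matched against the next summary symbol. The second, a \emph{content checker}, ignores the $\Gamma$-letters and runs a DFA for $L_{\psi}$ on the subsequence of summary letters. Acceptance of $C$ requires both components to accept. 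With this in place, the first bullet of the lemma is immediate from the shape checker; for the second, given $t \models \psi$ its $k$-extended representation lies in $L_{\psi}$, and prepending the dictated letter blocks yields a word in $\L(C)$; for the third, any accepted word has the correct shape and its summary subsequence is in $L_{\psi}$, hence equals $\V^{k}_{\Gamma}(\frU)$ for some ordered-data set $\frU$ that realises $\psi$, and any ordered-data tree built on top of $\frU$ (for instance, pile all elements as children of a fresh root) is then a witness of the required form.

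The hard part is showing that $L_{\psi}$ is regular. I would approach this via Myhill--Nerode using the rank-$k$ $\FO(\sim,\prec)$-type. Define $u \sim_k v$ on $(2^{\Gamma} \times \F_{\Gamma,k})^{*}$ by declaring that the ordered-data sets described by $u$ and $v$ realise the same $\FO(\sim,\prec)$-theory of quantifier rank at most $k$. For the fixed finite vocabulary determined by $\Gamma$ there are only finitely many such theories, so $\sim_k$ has finite index, and membership in $L_{\psi}$ depends only on the $\sim_k$-class since $\psi$ has quantifier rank $k$. It remains to check that $\sim_k$ is a right congruence: appending a summary letter $(S,f)$ corresponds to adjoining fresh elements, with label multiplicities dictated by $f$, at a new data value strictly above all previous ones. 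By a composition (Feferman--Vaught style) argument for $\FO$ over ordered structures --- or equivalently by an EF-game argument mirroring that of Lemma~\ref{l: extended rep for data set}, in which Duplicator combines her old strategy on the lower levels with a simple ``match-by-label'' strategy on the new topmost data value (the thresholded multiplicities recorded in $f$ being precisely what is needed to survive $k$ rounds) --- the rank-$k$ type of the extended structure is determined by the rank-$k$ type of the original together with $(S,f)$. Hence $\sim_k$ is a right congruence of finite index, $L_{\psi}$ is regular, and a DFA for it can be plugged into the content checker, completing the construction.
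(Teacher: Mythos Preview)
Your argument is correct, and it reaches the same conclusion as the paper, but by a genuinely different route. The paper does not go through Myhill--Nerode. Instead it gives an explicit syntactic translation $\psi \mapsto \overline{\psi}$ into $\FO(<)$ over the mixed alphabet $\Gamma \cup (2^{\Gamma}\times \F_{\Gamma,k})$: first-order quantifiers are relativised to positions carrying a $\Gamma$-letter, $x\sim y$ is rendered as ``no summary symbol strictly between $x$ and $y$'', and $x\prec y$ as ``at least one summary symbol strictly between $x$ and $y$''. The paper then invokes Corollary~\ref{cor: extended rep for mso} to pass from an arbitrary $t$ to a canonical data string $t'$ whose positions are exactly the $\Gamma$-letters of the prescribed word, checks by induction that $t'\models\psi$ iff the corresponding word satisfies $\overline{\psi}$, and finally obtains $C$ by the classical (non-elementary) translation of $\overline{\psi}$ into an automaton, intersected with the shape constraint. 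So the paper treats the $\Gamma$-letters as the elements over which $\overline{\psi}$ quantifies, whereas you treat them as inert padding and push all the content into the regularity of $L_{\psi}$ over the summary alphabet; both are valid, and both are non-elementary for the same underlying reason. Your semantic route is arguably cleaner once Corollary~\ref{cor: extended rep for mso} is in hand, while the paper's syntactic translation has the advantage of being completely explicit.

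One small slip: in the third bullet you suggest witnessing $t$ by ``piling all elements as children of a fresh root''. A fresh root is itself a node with a label and a data value, which would alter $\V^{(k)}_{\Gamma}(t)$. The right move (and what the paper does) is to arrange the elements of $\frU$ as a data string, so that the node set of $t$ is exactly the underlying set of $\frU$.
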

\begin{proof}
Let $\psi \in \FO(\sim,\prec)$ be of quantifier rank $k$.
Let $\Gamma= \{a_1,\ldots,a_{\ell}\}$ be the set of unary predicates
used in $\varphi$.
We define the following sentence $\overline{\psi} \in \FO(<)$ (that is, over strings)
inductively from $\psi$ as follows.
\begin{itemize}\itemsep=0pt
\item 
If $\psi$ is $Qx \; \xi$, where $Q \in\{\forall,\exists\}$,
then $\overline{\psi}$ is 
$$
Qx\; \bigvee_{a\in \Gamma} a(x) \to \overline{\xi}.
$$
\item
If $\psi$ is $x=y$, then
$\overline{\psi}$ is also $x=y$.
\item
If $\psi$ is $x\sim y$, then
$\overline{\psi}$ states ``there is no position in between $x$ and $y$
labeled with any symbol from $2^{\Gamma}\times\F_{\Gamma,k}$.''
\item
If $\psi$ is $x\prec y$, then
$\overline{\psi}$ states ``there is at least one position in between $x$ and $y$
labeled with a symbol from $2^{\Gamma}\times\F_{\Gamma,k}$.''
\end{itemize}

We have the following claim.
\begin{claim}
\begin{enumerate}[(1)]\itemsep=0pt
\item
For every ordered-data tree $t \models \psi$, 
if $\V^{(k)} = (S_1,f_1),\ldots,(S_m,f_m)$,
then there exists a word $w \models \overline{\psi}$ of the form
$$
\overbrace{a_1\cdots a_1}^{f_1(a_1)} \cdots \overbrace{a_{\ell}\cdots a_{\ell}}^{f_1(a_{\ell})}
\ (S_1,f_1) \
\cdots\cdots \
\overbrace{a_1\cdots a_1}^{f_m(a_1)} \cdots \overbrace{a_{\ell}\cdots a_{\ell}}^{f_m(a_{\ell})}
\ (S_m,f_m)
$$
\item
For every word $w \models \overline{\psi}$,
if $w$ is
$$
\overbrace{a_1\cdots a_1}^{f_1(a_1)} \cdots \overbrace{a_{\ell}\cdots a_{\ell}}^{f_1(a_{\ell})}
\ (S_1,f_1) \
\cdots\cdots \
\overbrace{a_1\cdots a_1}^{f_m(a_1)} \cdots \overbrace{a_{\ell}\cdots a_{\ell}}^{f_m(a_{\ell})}
\ (S_m,f_m)
$$
then there exists a tree $t\models \psi$,
where $\V^{(k)}(t) = (S_1,f_1)\cdots (s_m,f_m)$.
\end{enumerate}
\end{claim}
\begin{proof}
We first prove item~(1).
Let $t$ be an ordered-data tree over the alphabet $\Gamma$ and
let $\V^k(t) = (S_1,f_1) \cdots (S_m,f_m)$ be its
$k$-extended string representation of data values in $t$.
Let $t'$ be the following data string
$$
\overbrace{{a_1 \choose 1}\cdots {a_1 \choose 1}}^{f_1(a_1)}
\cdots 
\overbrace{{a_{\ell} \choose 1}\cdots {a_{\ell} \choose 1}}^{f_1(a_{\ell})}
\cdots
\cdots
\cdots
\cdots
\overbrace{{a_1 \choose m}\cdots {a_1 \choose m}}^{f_m(a_1)}
\cdots 
\overbrace{{a_{\ell} \choose m}\cdots {a_{\ell} \choose m}}^{f_m(a_{\ell})}
$$
When $t'$ is viewed as a data tree\footnote{That is, a data string is a data tree in which
each node has at most one child.}, $\V^{(k)}_{\Gamma}(t)=\V^{(k)}(t')$.
Hence, by Corollary~\ref{cor: extended rep for mso},
$$
t\models \psi \qquad \mbox{if and only if}\qquad 
t' \models \psi.
$$ 
By straightforward induction on $\overline{\psi}$,
we can show that
for every $t' \models \psi$ of the form
$$
\overbrace{{a_1 \choose 1}\cdots {a_1 \choose 1}}^{f_1(a_1)}
\cdots 
\overbrace{{a_{\ell} \choose 1}\cdots {a_{\ell} \choose 1}}^{f_1(a_{\ell})}
\cdots
\cdots
\cdots
\cdots
\overbrace{{a_1 \choose m}\cdots {a_1 \choose m}}^{f_m(a_1)}
\cdots 
\overbrace{{a_{\ell} \choose m}\cdots {a_{\ell} \choose m}}^{f_m(a_{\ell})}
$$
there exists a word $w \models \overline{\psi}$ of the form
$$
\overbrace{a_1\cdots a_1}^{f_1(a_1)} \cdots \overbrace{a_{\ell}\cdots a_{\ell}}^{f_1(a_{\ell})}
\ (S_1,f_1) \
\cdots\cdots \
\overbrace{a_1\cdots a_1}^{f_m(a_1)} \cdots \overbrace{a_{\ell}\cdots a_{\ell}}^{f_m(a_{\ell})}
\ (S_m,f_m)
$$

Similarly, to prove~(2), we can prove by straightforward induction on $\overline{\psi}$ that
for every word $w \models \overline{\psi}$ of the form
$$
\overbrace{a_1\cdots a_1}^{f_1(a_1)} \cdots \overbrace{a_{\ell}\cdots a_{\ell}}^{f_1(a_{\ell})}
\ (S_1,f_1) \
\cdots\cdots \
\overbrace{a_1\cdots a_1}^{f_m(a_1)} \cdots \overbrace{a_{\ell}\cdots a_{\ell}}^{f_m(a_{\ell})}
\ (S_m,f_m),
$$
there exists a tree $t\models \psi$ of the form
$$
\overbrace{{a_1 \choose 1}\cdots {a_1 \choose 1}}^{f_1(a_1)}
\cdots 
\overbrace{{a_{\ell} \choose 1}\cdots {a_{\ell} \choose 1}}^{f_1(a_{\ell})}
\cdots
\cdots
\cdots
\cdots
\overbrace{{a_1 \choose m}\cdots {a_1 \choose m}}^{f_m(a_1)}
\cdots 
\overbrace{{a_{\ell} \choose m}\cdots {a_{\ell} \choose m}}^{f_m(a_{\ell})}
$$
This completes the proof of our claim.
\end{proof}

Let $C$ be an automaton over the alphabet $\Gamma \cup (2^{\Gamma}\times \F_{\Gamma,k})$
that expresses the formula $\overline{\psi}$ and that
it accepts only words of the form
$$
\overbrace{a_1\cdots a_1}^{f_1(a_1)} \cdots \overbrace{a_{\ell}\cdots a_{\ell}}^{f_1(a_{\ell})}
\ (S_1,f_1) \
\cdots\cdots \
\overbrace{a_1\cdots a_1}^{f_m(a_1)} \cdots \overbrace{a_{\ell}\cdots a_{\ell}}^{f_m(a_{\ell})}
\ (S_m,f_m),
$$
where each $S_i = \{a \mid f_i(a)\geq 1\}$.
The construction of $C$ from the formula $\overline{\psi}$ is rather standard,
but non-elementary.
See, for example,~\cite{thomas-handbook}.
That the automaton $C$ is the desired automaton is immediate.
This completes our proof of Lemma~\ref{l: aut for fo}.
\end{proof}

\begin{lemma}
\label{l: aut for fo string rep}
Let $\psi \in \FO(\sim,\prec)$ be of quantifier rank $k$.
Let $\Gamma= \{a_1,\ldots,a_{\ell}\}$ be the set of unary predicates
used in $\psi$.
There exists a finite state automaton $\M$ 
over the alphabet $2^{\Gamma}\times \F_{\Gamma,k}$ such that
$\L(\M) = \{\V^{(k)}_{\Gamma,k}(t) \mid t \models \psi\}$.
\end{lemma}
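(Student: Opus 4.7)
The plan is to obtain $\M$ by ``compressing'' the automaton $C$ produced by Lemma~\ref{l: aut for fo} so that each letter $(S,f)$ of the target alphabet $2^{\Gamma}\times\F_{\Gamma,k}$ is consumed together with the fixed block of $\Gamma$-letters that always precedes it in a word accepted by $C$. The crucial observation is that Lemma~\ref{l: aut for fo} guarantees that every word accepted by $C$ has the rigid shape
\[
\overbrace{a_1\cdots a_1}^{f_1(a_1)} \cdots \overbrace{a_{\ell}\cdots a_{\ell}}^{f_1(a_{\ell})}\,(S_1,f_1)\ \cdots\ \overbrace{a_1\cdots a_1}^{f_m(a_1)} \cdots \overbrace{a_{\ell}\cdots a_{\ell}}^{f_m(a_{\ell})}\,(S_m,f_m),
\]
in which each block of $a_j$'s is completely determined by the immediately following symbol $(S_i,f_i)$. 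Hence there is a natural bijection between accepted words of $C$ and their projections to $(2^{\Gamma}\times\F_{\Gamma,k})^*$, and this projection is exactly the $k$-extended representation $\V^{k}_{\Gamma}(t)$.

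Concretely, I would build $\M$ with the same state set as $C$ and define its transition function as follows. For each state $q$ of $C$ and each symbol $(S,f)\in 2^{\Gamma}\times\F_{\Gamma,k}$ with $S=\{a\mid f(a)\ge 1\}$, put $(q,(S,f),q')$ into the transition relation of $\M$ whenever $C$ can move from $q$ to $q'$ by reading the fixed finite word $a_1^{f(a_1)}\cdots a_{\ell}^{f(a_{\ell})}(S,f)$; for symbols $(S,f)$ where $S\neq\{a\mid f(a)\ge 1\}$, no transition is added (since such letters never occur in an extended representation). The initial and final states of $\M$ are those of $C$. Since $f$ ranges over a finite set and each fixed $a$-block has bounded length, these composite transitions are finitely many and computable.

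Correctness then follows directly from Lemma~\ref{l: aut for fo}: a word $w=(S_1,f_1)\cdots(S_m,f_m)$ is accepted by $\M$ iff its ``expansion'' interleaving the $a$-blocks is accepted by $C$, iff (by Lemma~\ref{l: aut for fo}) there exists an ordered-data tree $t\models\psi$ with $\V^{k}_{\Gamma}(t)=w$. The only routine checks are that the expansion map is a bijection between $\L(\M)$ and the relevant subset of $\L(C)$, and that every $(S,f)$ appearing in some $\V^{k}_{\Gamma}(t)$ satisfies the compatibility condition $S=\{a\mid f(a)\ge 1\}$ (which holds by the very definition of the $k$-extended representation).

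I do not anticipate a substantive obstacle here: the proof is essentially a projection/compression argument on top of Lemma~\ref{l: aut for fo}, and the only subtle point is ensuring that no spurious words enter $\L(\M)$ through the compression step. This is handled by insisting on the consistency condition $S=\{a\mid f(a)\ge 1\}$ in the new transition relation, which is precisely the invariant maintained by $C$ on its accepted words.
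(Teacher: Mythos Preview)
Your proposal is correct and follows essentially the same approach as the paper: both obtain $\M$ from the automaton $C$ of Lemma~\ref{l: aut for fo} by collapsing the $\Gamma$-letters, and correctness rests entirely on the three bullets of that lemma. The paper simply phrases the construction more tersely as ``project every symbol from $\Gamma$ to the empty string'' (i.e., replace $\Gamma$-transitions by $\varepsilon$-transitions), whereas you precompute the composite transitions block by block; since $C$ accepts only words of the rigid shape, the two constructions yield the same language.
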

\begin{proof}
Let $C$ be the automaton obtained by applying Lemma~\ref{l: aut for fo} on the formula $\psi$.
Then let $\M$ be the automaton obtained from $C$,
where every symbol from $\Gamma$ is projected to empty string.
The automaton $\M$ is the desired automaton,
and this completes our proof of Lemma~\ref{l: aut for fo string rep}.
\end{proof}

Now we are ready to prove Theorem~\ref{t: logic S-automata}.
We start with the ``if'' direction.
Let $\Psi$ be a formula of the form:
$$
\exists Y_1 \cdots \exists Y_n \ \varphi \wedge \psi,
$$
$\varphi$ is a formula from $\FO^2(\eda,\era,\sim)$
and
$\psi$ from $\FO(\sim,\prec)$,
both extended with the unary predicates $Y_1,\ldots,Y_n$.

By Proposition~\ref{p: normal form EMSO2},
we can rewrite (with additional unary predicates)
the formula $\varphi$ into a conjunction of formulae of the form N1--N7 
as stated in Proposition~\ref{p: normal form EMSO2}.
Then we further rewrite it into the form
$$
\exists X_1 \cdots \exists X_m \ \varphi' \wedge \psi',
$$
where $m=2^n$ and
$\varphi$ is a formula from $\FO^2(\eda,\era,\sim)$
and
$\psi$ from $\FO(\sim,\prec)$,
both extended with the unary predicates $X_1,\ldots,X_m$,
and that the formula $\varphi'$ is conjunction of the form:
\begin{itemize}\itemsep=0pt
\item[(N0$'$)]
a formula $\xi$ that states that $X_1,\ldots,X_m$ are pairwise disjoint
and that 
$$
\bigwedge_{a\in\Sigma} \forall x \ (a(x)\to \alpha(x)),
$$
\item[(N1$'$)]
$\forall x \forall y \ (\alpha(x) \wedge \delta(x,y) \wedge \xi(x,y) \to \beta(y))$,
\item[(N2$'$)]
$\forall x \ (\root(x)  \to \alpha(y))$,
\item[(N3$'$)]
$\forall x \ (\fs(x)  \to \alpha(y))$,
\item[(N4$'$)]
$\forall x \ (\ls(x)  \to \alpha(y))$,
\item[(N5$'$)]
$\forall x \ (\leaf(x)  \to \alpha(y))$,
\item[(N6$'$)]
$\forall x \forall y \ (\alpha(x) \wedge \alpha(y) \wedge x\sim y \to x=y)$,
\item[(N7$'$)]
$\forall x \exists y \ (\alpha(x) \to \beta(y) \wedge x\sim y)$,
\end{itemize}
where $\alpha(x),\beta(x)$ are disjunctions of some of the unary predicates $X_1,\ldots,X_m$.

We will describe the ODTA $\S = \langle \T,\M,\Gamma_0\rangle$
for the formula $\Psi$, where
the transducer $\T$ expresses the formula N0$'$--N5$'$ with the output alphabet $\Gamma=\{X_1,\ldots,X_m\}$,
the automaton $\M$ expresses the formula N6$'$, N7$'$ and $\psi'$,
and $\Gamma_0$ is the set of symbols that appear in formula N6$'$.
Formally, it is defined as follows.
\begin{itemize}\itemsep=0pt
\item
The output alphabet of $\T$ is $\Gamma = \{X_1,\ldots,X_m\}$.
\item
The transducer expresses the formula N0$'$--N5$'$ above.
In particular, the input and output symbols of each node must satisfy the formula N0$'$.

This step take polynomial time, since the formula N0$'$--N5$'$
is already in the transition format.
\item
The set $\Gamma_0 = \{X_i \mid X_i \ \mbox{appears in} \ \mbox{N6}'\}$.

This step takes polynomial time.
\item
The automaton $\M$ expresses the formulas N6$'$, N7$'$ and $\psi'$,
obtained by applying Lemma~\ref{l: aut for fo string rep}.

This step is constructive, but non-elementary due to
the conversion from a formula to its finite state automaton.
\end{itemize}
It is straightforward to show that $\L_{data}(\S) = \{t \mid t\models\Psi\}$.

Now we prove the ``only if'' direction.
Let $\L = \L_{data}(\S)$, where $\S = \langle \T,\M,\Gamma_0\rangle$, and
\begin{itemize}
\item
$Q = \{q_1,\ldots,q_n\}$ be the states of $\T$;
\item
$P = \{p_1,\ldots,p_s\}$ be the states of $\M$,
and $p_1$ is the initial state of $\M$;
\item
$\Gamma = \{\alpha_1,\ldots,\alpha_{\ell}\}$ be the output alphabet of $\T$.
\end{itemize}
We denote by $\Sigma$ the input alphabet of $\T$.

The desired formula for $\L$ is of the form:
$$
\exists X_{q_1} \cdots \exists X_{q_n} \
\exists X_{\alpha_1} \cdots \exists X_{\alpha_{\ell}} \
\exists X_{p_1} \cdots \exists X_{p_s} \
\Psi_{\sT} \wedge \Psi_{\sM} \wedge \Psi_{\Gamma_0}
$$
where
\begin{itemize}
\item
the unary predicates 
$X_{q_1}, \ldots,X_{q_n}, X_{\alpha_1}, \ldots, X_{\alpha_{\ell}},X_{p_1}, \ldots, X_{p_s}$
are supposed to represent the states, the output alphabets of $\T$,
and the states of $\M$, respectively;
\item
the formula $\Psi_{\sT}$ expresses the behaviour of the transducer $\T$ -- that is,
a tree satisfies $\Psi_{\sT}$ in which for every node $u \in \Domain(t)$,
$X_{q_i}(u)$ and $X_{\alpha_j}(u)$ holds, if there exists an accepting run of $\T$ on $t$
in which the node $u$ is labeled with $q_i$ and output $\alpha_j$;
\item
the formula $\Psi_{\sM}$ expresses the behaviour of the automaton $\M$;
\item
the formula $\Psi_{\Gamma_0}$ expresses the property that
for every $\alpha_i \in \Gamma_0$, all the nodes belonging to $X_{\alpha_i}$ 
contain different data values, which is
$$
\bigwedge_{\alpha \in \Gamma_0}
\forall x \forall y (X_{\alpha}(x) \wedge X_{\alpha}(y) \wedge x\sim y \to x=y).
$$
\end{itemize}
The construction of the formula $\Psi_{\sT}$
is rather standard, thus, omitted.
We will show the construction of the formula $\Psi_{\sM}$.
Let $\Phi_{[S]}(x)$ denote the following formula
$$
\bigvee_{\alpha_i \in S} X_{\alpha_i}(x) 
\wedge
\bigwedge_{\alpha_i \in S} \exists y ( X_{\alpha_i}(x)  \wedge x\sim y)
\wedge
\bigwedge_{\alpha_j \notin S} \forall y (  X_{\alpha_j}(y) \to  x\nsim y),
$$
which states that the data value on the node $x$ belongs to $[S]$.
The formula $\Psi_{\sM}$ expresses the following properties.
\begin{itemize}
\item
That the node contains the minimal data value belongs to $X_{p_1}$.
Formally, it can be written as follows.
$$
\forall x (\forall y x \prec y \vee x \sim y \to X_{p_1}(x))
$$
\item
That the transition $\mu$ of $\M$ must be ``respected.''
Formally, it can be written as follows.
$$
\bigwedge_{(p_i,S,p_j) \in \mu}
\Big(
\forall x \forall y 
(X_{p_i}(x) \wedge \Psi_{[S]}(x) \wedge x \dvSucc y \to X_{p_j}(y))
\Big),
$$
where $x \dvSucc\ y$ stands for $x \prec y \wedge \forall z (\neg (x \prec z \wedge  z \prec y))$.
\item
That the node contains the maximal data value belongs to one of the final states of $\M$,
denoted by $F$.
Formally, it can be written as follows.
$$
\forall x (\forall y \ (y \prec x \vee y \sim x) \to \bigvee_{p_i \in F} X_{p_i}(x)).
$$
\end{itemize}
That the construction takes polynomial time is straightforward.
This completes our proof of Theorem~\ref{t: logic S-automata}.

\subsection{Proof of Theorem~\ref{t: decidability}}
\label{ss: proof: t: decidability}

The proof of Theorem~\ref{t: decidability} consists of two main steps.
\begin{enumerate}[(1)]\itemsep=0pt
\item
We prove that for each ODTA $\S$,
if $\L_{data}(\S)\neq\emptyset$,
then $\L_{data}(\S)$ contains a data tree with ``small model property''
(Lemma~\ref{l: small model}).
\item
We describe a procedure, 
that given an ODTA $\S$,
checks whether $\L(\S)$ contains a data tree with ``small model property,''
by converting the ODTA $\S$ into an APC $(\A,\xi)$.
Since the non-emptiness of APC is decidable,
Theorem~\ref{t: decidability} follows immediately.
\end{enumerate}
The first step (Lemma~\ref{l: small model})
is adapted from the proof of~\cite[Proposition~3.10]{BMSS09}.
It is in the second step our proof differs from~\cite[Proposition~3.10]{BMSS09}
The decision procedure in~\cite{BMSS09}
relies on intricate counting argument of the so called dog and sheep symbols
(see~\cite[page~36]{BMSS09})
and it seems that it cannot be generalised to the case of ODTA.
On the other hand, our decision procedure relies mainly on 
Proposition~\ref{p: aut-to-presburger},
Lemma~\ref{l: canonical lemma}
and counting the cardinality of each $[S]$.

We need a few terminologies.
A set of nodes in a data tree $t$ is called connected,
if it is connected in the graph induced by $\eda$ and $\era$.
A {\em zone} in a data tree $t$ is a maximal connected set of nodes
with the same data value.
The {\em outdegree} of a zone $Z$ is the number of different zones to which
there is an edge (either $\eda$ or $\era$) from $Z$.

Let $\S = \langle \T,\M,\Gamma_0\rangle$
be an ODTA, where $\T$ is a transducer from $\Sigma$ to $\Gamma$.
Let $Q$ be the set of states of $\T$.
For a tree $t \in \L_{data}(\S)$,
its extended tree $\tilde{t}$ (with respect to the ODTA $\S$)
is a tree over the alphabet $\Sigma\times\{\top,\bot,*\}^3\times Q \times \Gamma$,
where
\begin{itemize}
\item
the projection of $\tilde{t}$ to $\Sigma\times\{\top,\bot,*\}^3$
is $\sfProfile(t)$;
\item
the projection of $\tilde{t}$ to $Q$ is an accepting run of $\T$ on $t$;
\item
the projection of $\tilde{t}$ to $\Gamma$ is an output of $\T$ on $t$.
\end{itemize}

The following Lemma is simply an adaptation of~\cite[Proposition~3.10]{BMSS09}
to the case of ODTA.
The proof is via cut-and-paste, where
given an ordered-data tree $t$ over the alphabet $\Sigma$
where $t$ has ``many'' zones in which the outdegree is ``large,''
we can cut some nodes in $t$ and paste it in another part of $t$
without affecting the set $V_t(a)$'s for each $a\in \Sigma$.
The aim of such cut-and-paste is to reduce the number of zones in $t$ with large outdegree.
We give the formal statement below.

\begin{lemma}
\label{l: small model}
{\em [Compare~\cite[Proposition~3.10]{BMSS09}]}
For every ODTA $\S = \langle \T,\M,\Gamma_0\rangle$ over the alphabet $\Sigma$,
if $\L_{data}(\S) \neq \emptyset$,
then there exists a data tree $t \in \L_{data}(\S)$
in which there are at most $K^{O(K^2)}$ zones with outdegree $\geq K ^{(K^3)}$,
where 
$K = O( |\Sigma|\cdot |Q|\cdot |\Gamma|)$ and $Q$ is the set of states of $\T$
and $\Gamma$ the output alphabet of $\T$.
\end{lemma}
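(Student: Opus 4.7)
The plan is to lift the cut-and-paste argument of~\cite[Proposition~3.10]{BMSS09}, originally devised for $\EMSO^2(\eda,\era,\sim)$, to the ODTA setting. Fix any $t\in\L_{data}(\S)$ together with a witnessing extended tree $\tilde{t}$ over $\Sigma\times\{\top,\bot,*\}^3\times Q\times\Gamma$; every modification will be made on $\tilde{t}$. Because the operations are purely structural---they delete nodes and re-thread $\eda$/$\era$ edges, never creating or altering data values---the $\Gamma_0$-uniqueness condition of Definition~\ref{d: S-automata} is preserved for free. The two invariants that must be preserved actively are: (a) for each output letter $\alpha\in\Gamma$, the set $V_{\tilde{t}}(\alpha)$ of data values appearing at $\alpha$-labelled nodes, which guarantees $\V_\Gamma(t')=\V_\Gamma(t)$ and that $\M$ keeps accepting; and (b) local validity of the $\T$-run, together with consistency of the $\sfProfile$ annotation at every surviving node.

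Classify each directed edge $(u,v)$ of $\tilde{t}$ by its \emph{type}, namely the pair of augmented labels of $u$ and $v$ in $\Sigma\times\{\top,\bot,*\}^3\times Q\times\Gamma$ together with its kind ($\eda$ or $\era$). The number of types is $T=K^{O(1)}$. Pick a zone $Z$ with outdegree at least $K^{K^3}$; by pigeonhole there is a type $\tau$ occurring on at least $K^{K^3}/T$ outgoing edges of $Z$, leading to zones $Z_1,Z_2,\ldots$\,. A second pigeonhole over the finitely many "subtree signatures" of depth up to $K$ (recording, for each $\alpha\in\Gamma$, which zone-equivalence-classes of $\tilde{t}$ occur under $\alpha$-labelled nodes in the subtree) then produces, once the number of $\tau$-twins is large enough, a pair of target subtrees whose data-value contents under each $\alpha\in\Gamma$ cover each other. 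Excising one member of such a pair preserves invariant (a) automatically, and preserves (b) provided care is taken at the boundary of the cut.

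The main obstacle will be exactly this boundary care: removing a child can promote a former interior sibling to first or last sibling, changing its profile tuple $(l,p,r)$ and thereby possibly invalidating the transition of $\T$ or the $\sfProfile$ annotation. I would handle this by further restricting the pigeonhole to $\tau$-twins whose source endpoints lie strictly in the interior of their sibling lists and whose two adjacent siblings match the corresponding neighbours of the replacement in the relevant profile coordinates; since there are only $O(3^3)$ profile tuples, this costs only a constant-factor blow-up in $T$. Each valid cut strictly decreases the quantity $\sum_{Z}\max(0,\mathrm{outdeg}(Z)-K^{K^3})$ while preserving (a) and (b), so the process terminates. For the final bound, any surviving zone of outdegree $\geq K^{K^3}$ is pinned down, up to the equivalence induced by the operation, by a piece of local data of size $K^{O(K^2)}$ in its immediate $K$-neighbourhood; if more than $K^{O(K^2)}$ such offenders remained, a final pigeonhole would yield yet another removable twin pair, contradicting termination.
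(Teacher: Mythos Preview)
Your proposal has a genuine gap at the crucial pigeonhole step. You say you take a ``second pigeonhole over the finitely many subtree signatures of depth up to $K$, recording, for each $\alpha\in\Gamma$, which zone-equivalence-classes of $\tilde t$ occur under $\alpha$-labelled nodes in the subtree.'' But zone-equivalence-classes are indexed by data values, of which there are unboundedly many in $\tilde t$; so the set of signatures is not finite and the pigeonhole does not fire. If instead your signatures record only label-level information (e.g.\ the $S\subseteq\Gamma$ such that the zone is an $S$-zone), then two subtrees with equal signatures need not have comparable data-value contents, and excising one may drop the last witness of some pair $(\alpha,d)$, destroying invariant~(a). Either reading leaves the argument incomplete.

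This is not a technicality that a finer count repairs: the difficulty of \cite[Proposition~3.10]{BMSS09} is precisely that pure deletion cannot be made to work, because preserving the family of sets $V_{\tilde t}(\alpha)$ while shrinking zone outdegree forces you to \emph{move} material rather than remove it, and in the later phases even to \emph{reassign} data values. The paper's proof follows BMSS09 through six successive cut-and-paste transformations (adapted from their Propositions~3.12, 3.14, 3.17, 3.20, 3.21, 3.25). The first three only relocate intervals/subtrees and keep all data values intact; the last three, however, explicitly change data values, but only on \emph{safe} zones---zones every node of which has a twin (same extended label, same data value) outside the zone---so that $V_{\tilde t}(a,(l,p,r),q,b)$ is preserved for every extended symbol, and hence $\V_\Gamma(t')$ and the $\M$-run are preserved. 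Your assertion that the operations ``never creat[e] or alter data values'' is therefore at odds with what the argument actually requires; the safe-zone mechanism is the missing idea.

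A smaller issue: your boundary discussion speaks of ``the replacement'' although you are deleting, not replacing; and deletion of a sibling can merge the zones of its former left and right neighbours, which your potential function $\sum_Z\max(0,\mathrm{outdeg}(Z)-K^{K^3})$ does not obviously control.
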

\begin{proof}
Let $\S = \langle \T,\M,\Gamma_0\rangle$ be an ODTA
over the alphabet $\Sigma$, and $Q$ is the set of states of $\T$
and $\Gamma$ the output alphabet of $\T$.
Suppose that $t_0 \in \L_{data}(\S)$.
We will work on the extended tree $\tilde{t}_0$ of $t_0$.
The aim is to convert $\tilde{t}_0$ into 
another tree $\tilde{t}$ over the alphabet $\Sigma\times\{\top,\bot,*\}^3\times Q \times \Gamma$
such that
\begin{enumerate}
\item
the number of zones in $\tilde{t}$ with outdegree $\geq K ^{(K^3)}$ is bounded by $K^{O(K^2)}$,
\item
the $\{\top,\bot,*\}^3$ projection of $\tilde{t}$ is the profile of each node,
\item
the $Q$ projection of $\tilde{t}$ is an accepting run of $\T$
on the $\Sigma\times\{\top,\bot,*\}^3$ projection of $\tilde{t}$ 
and the output is its $\Gamma$ projection,
\item
for each $(a,(l,p,r),q,b)\in \Sigma\times\{\top,\bot,*\}^3\times Q \times \Gamma$
the set of data values found in the $(a,(l,p,r),q,b)$-nodes in $\tilde{t}_0$
is the same as the set of those found in $(a,(l,p,r),q,b)$-nodes in $\tilde{t}$,
\item
the $\Sigma$ projection of $\tilde{t}$ is accepted by $\S$.
\end{enumerate}
Intuitively, the tree $\tilde{t}$ is obtained via repeated applications of ``pumping lemma'' 
on both $\eda$- and $\era$-directions in the tree $t$.

Below we give a brief summary of the proof adapted from 
the proof of~\cite[Proposition~3.10]{BMSS09}.
We need the following terminologies, all of them are from~\cite{BMSS09}.
\begin{itemize}
\item
Two nodes in a tree are called {\em siblings}, 
if they have the same parent node.
\item
The set of all children of a node is called a {\em sibling group}.
\item
A contiguous sequence of siblings is called an {\em interval}.
\\
We write $[u,v]$ for an interval in which 
$u$ and $v$ are the left-most and right-most nodes, respectively, in the interval. 
\item
An interval $[u,v]$ is {\em complete}, if the following holds.
\begin{itemize}
\item
If a node $u'$ exists such that $\era(u',u)$,
then $u'\nsim u$.
\item
If a node $v'$ exists such that $\era(v,v')$,
then $u'\nsim u$.
\end{itemize}
\item
An interval is {\em pure}, if all of its nodes have the same data value.
\item
A pure interval with the data value $d$ is called a $d$-pure interval.
\item
If the parent of an interval (or, a sibling group) has data value $d$,
then it is called a {\em $d$-parent interval} (or a {\em $d$-parent sibling group}).
\item
A zone with the data value $d$ is called a {\em $d$-zone}.
\end{itemize}
The construction of $\tilde{t}$ from $\tilde{t}_0$ is as follows.
\begin{enumerate}
\item
Convert $\tilde{t}_0$ to another tree $\tilde{t}_1$ such that
\begin{itemize}
\item
for every data value $d \in V_{\tilde{t}_1}$
there are at most $O(K)$ complete $d$-pure intervals of size more than $O(K)$;
\item
$V_{\tilde{t}_1}(a,(l,p,r),q,b) = V_{\tilde{t}_0}(a,(l,p,r),q,b)$,
for every $(a,(l,p,r),q,b) \in \Sigma\times\{\top,\bot,*\}^3\times Q\times \Gamma$;
\item
$\tilde{t}_1$ is an extended tree of its $\Sigma$ projection w.r.t. $\S$.
\end{itemize}
This step is adapted from~\cite[Proposition~3.12]{BMSS09}.
The idea is to cut an interval (together with its subtree)
and paste it in another interval; and while doing so
the data values in the interval remain untouched.

\item
Convert $\tilde{t}_1$ to another tree $\tilde{t}_2$ such that
\begin{itemize}
\item
for every data value $d \in V_{\tilde{t}_2}$
there are at most $O(K)$ $d$-parent sibling group
with more than $K^{O(K)}$ complete pure intervals;
\item
$V_{\tilde{t}_2}(a,(l,p,r),q,b) = V_{\tilde{t}_1}(a,(l,p,r),q,b)$,
for every $(a,(l,p,r),q,b) \in \Sigma\times\{\top,\bot,*\}^3\times Q\times \Gamma$;
\item
$\tilde{t}_2$ is an extended tree of its $\Sigma$ projection w.r.t. $\S$.
\end{itemize}
This step is adapted from~\cite[Proposition~3.14]{BMSS09}.
Again when the cut-and-paste is performed
the data values in the sibling groups remain untouched.

\item
Convert $\tilde{t}_2$ to another tree $\tilde{t}_3$ such that
\begin{itemize}
\item
for every data value $d \in V_{\tilde{t}_3}$
there are at most $O(K)$ $d$-zones
containing a path with more than $O(K)$ nodes;
\item
$V_{\tilde{t}_3}(a,(l,p,r),q,b) = V_{\tilde{t}_2}(a,(l,p,r),q,b)$,
for every $(a,(l,p,r),q,b) \in \Sigma\times\{\top,\bot,*\}^3\times Q\times \Gamma$;
\item
$\tilde{t}_3$ is an extended tree of its $\Sigma$ projection w.r.t. $\S$.
\end{itemize}
This step is adapted from~\cite[Proposition~3.17]{BMSS09}.
Again when the cut-and-paste is performed
the data values in the zones remain untouched.

\item
Convert $\tilde{t}_3$ to another tree $\tilde{t}_4$ such that
\begin{itemize}
\item
there are at most $K^{O(K^2)}$ complete pure intervals with more than $O(K^2)$ nodes;
\item
$V_{\tilde{t}_3}(a,(l,p,r),q,b) = V_{\tilde{t}_4}(a,(l,p,r),q,b)$,
for every $(a,(l,p,r),q,b) \in \Sigma\times\{\top,\bot,*\}^3\times Q\times \Gamma$;
\item
$\tilde{t}_4$ is an extended tree of its $\Sigma$ projection w.r.t. $\S$.
\end{itemize}
This step is adapted from~\cite[Proposition~3.20]{BMSS09}.
Here actually when the cut-and-paste is performed,
the data values in some zones have to be changed.
However, those changes are only applied to the {\em safe} zones,
where a zone is safe if for every node in it
there is another node outside the zone with the same label 
(from $\Sigma\times\{\top,\bot,*\}\times Q\times \Gamma$)
and the same data value. (See~\cite[page 23, last paragraph]{BMSS09}.) 
More specifically, these changes are done
by applying~\cite[Lemma 3.19]{BMSS09} on the safe zones.
That it is applied only on safe zones is important
so that after changing the data values,
constraints such as 
$\forall x \exists y (a(x) \to x\sim y \wedge b(y))$
are still satisfied.

\item
Convert $\tilde{t}_4$ to another tree $\tilde{t}_5$ such that
\begin{itemize}
\item
there are at most $K^{O(K^2)}$ sibling groups containing more than $K^{O(K)}$ complete pure intervals;
\item
$V_{\tilde{t}_4}(a,(l,p,r),q,b) = V_{\tilde{t}_5}(a,(l,p,r),q,b)$,
for every $(a,(l,p,r),q,b) \in \Sigma\times\{\top,\bot,*\}^3\times Q\times \Gamma$;
\item
$\tilde{t}_5$ is an extended tree of its $\Sigma$ projection w.r.t. $\S$.
\end{itemize}
This step is adapted from~\cite[Proposition~3.21]{BMSS09}.
Here there are also changes of data values when performing cut-and-paste.
However, as in the previous step, they are only applied to the {\em safe} zones.
These changes are also done
by applying~\cite[Lemma 3.19]{BMSS09} on the safe zones.

\item
Convert $\tilde{t}_5$ to another tree $\tilde{t}_6$ such that
\begin{itemize}
\item
there are at most $K^{O(K^2)}$ zones containing paths with more than $O(K^2)$ nodes;
\item
$V_{\tilde{t}_5}(a,(l,p,r),q,b) = V_{\tilde{t}_6}(a,(l,p,r),q,b)$,
for every $(a,(l,p,r),q,b) \in \Sigma\times\{\top,\bot,*\}^3\times Q\times \Gamma$;
\item
$\tilde{t}_6$ is an extended tree of its $\Sigma$ projection w.r.t. $\S$.
\end{itemize}
This step is adapted from~\cite[Proposition~3.25]{BMSS09}.
Here there are also changes of data values when performing cut-and-paste.
However, as in the previous step, they are only applied to the {\em safe} zones.
More specifically, these changes are done by applying~\cite[Lemma 3.24]{BMSS09} on the safe zones.
\end{enumerate}
The extended tree $\tilde{t}_6$ is the desired extended tree.
It is a rather straightforward computation that  
there are at most
$K^{O(K^2)}$ zones in $\tilde{t}_6$ with outdegree $\geq K ^{(K^3)}$.
\end{proof}

To describe the decision procedure for Theorem~\ref{t: decidability},
we need a few more additional terminologies.
For a data tree $t$ over the alphabet $\Gamma$,
and $S \subseteq \Gamma$,
an $S$-zone is a zone in which the labels of the nodes are precisely $S$.
We write $V^{zone}_t(S)$ to denote the set of data values
found in $S$-zones in $t$.
For $P \subseteq 2^{\Gamma}$,
$$
[P]^{zone}_t = \bigcap_{S \in P} V^{zone}_t(S) \cap \bigcap_{R \notin P} \overline{V^{zone}_t(R)}
$$
Suppose $d_1<\cdots < d_m$ are all the data values in $t$.
The {\em zonal string representation} of the data values in $t$, denoted by $\V^{zone}_{\Gamma}(t)$, 
is the string $P_1\cdots P_m$ over the alphabet $2^{2^{\Gamma}}$ 
such that for each $i\in\{1,\ldots,m\}$,
$d_i \in [P_i]^{zone}_t$.

A {\em zonal} ODTA is $\S' = \langle \T,\M',\Gamma_0\rangle$,
where $\T$ and $\Gamma_0$ are as in the definition of ODTA,
and $\M'$ is a finite state automaton over the alphabet $2^{2^{\Gamma}}$.
A data tree $t$ is accepted by the zonal ODTA $\S'$,
if the following holds.
\begin{itemize}
\item
$\sfProfile(t)$ is accepted by $\T$, 
yielding an output tree $t'$ over the alphabet $\Gamma$.
\item
The string $\V^{zone}_{\Gamma}(t')$ is accepted by $\M'$.
\item
For each $a\in \Gamma_0$,
all the data values found in the $a$-nodes in $t'$ are different.
\end{itemize}

\begin{proposition}
\label{p: convert to zonal}
For every ODTA $\S$,
one can construct in $\exptime$ its equivalent zonal ODTA.
\end{proposition}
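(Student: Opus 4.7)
The plan is to leave the transducer $\T$ and the set $\Gamma_0$ unchanged (both conditions depend only on $\sfProfile(t)$ and on the output tree $t'$ over $\Gamma$, neither of which is affected by our choice of string representation) and only replace the string automaton $\M$ over $2^{\Gamma}$ by a new automaton $\M'$ over $2^{2^{\Gamma}}$. The approach rests on the observation that the zonal string representation $\V^{zone}_{\Gamma}(t')$ already determines the plain string representation $\V_{\Gamma}(t')$ via a symbol-wise projection.

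Concretely, I would define the map $\pi : 2^{2^{\Gamma}} \to 2^{\Gamma}$ by $\pi(P) = \bigcup_{S \in P} S$. The key identity to verify is: for every data tree $t'$ over $\Gamma$ with $\V^{zone}_{\Gamma}(t') = P_1 \cdots P_m$, one has $\V_{\Gamma}(t') = \pi(P_1) \cdots \pi(P_m)$. This is immediate from the definitions, because the set of labels appearing at nodes with the $i$-th smallest data value $d_i$ is exactly the union of the label-sets of all zones with data value $d_i$: every such node lies in some such zone, and every label of every such zone-label-set is witnessed at a node of that zone.

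Given the identity, I would take $\M'$ to have the same states, initial state, and final states as $\M$, with transitions $\delta_{\M'}(q, P) = \delta_{\M}(q, \pi(P))$ for each $P \in 2^{2^{\Gamma}}$. Letters $P = \emptyset$ or with $\pi(P) = \emptyset$ cannot occur in any $\V^{zone}_{\Gamma}(t')$ (every $P_i$ records a non-empty collection of non-empty zone-label-sets), so their transitions are irrelevant. Writing down this transition table requires iterating over $2^{2^{\Gamma}}$, whose size $2^{2^{|\Gamma|}}$ is exponential in the size of the original automaton $\M$; this is where the single exponential blow-up comes from.

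Correctness then follows directly from the projection identity: for every $t$ and every $t'$ produced by $\T$ on $\sfProfile(t)$, $\M'$ accepts $\V^{zone}_{\Gamma}(t')$ iff $\M$ accepts $\V_{\Gamma}(t')$, while the transducer condition and the $\Gamma_0$-uniqueness condition are literally unchanged, so $\L_{data}(\S) = \L_{data}(\S')$. The only point requiring genuine care is the projection identity itself, and I do not expect any real obstacle here; in contrast to the small-model arguments underlying Theorem~\ref{t: decidability}, this proposition amounts to a change of encoding.
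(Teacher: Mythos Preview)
Your proposal is correct and is essentially the same construction as the paper's: keep $\T$ and $\Gamma_0$, and let $\M'$ have the same state set as $\M$ with transitions $(q,P,q')$ whenever $(q,\pi(P),q')$ is a transition of $\M$, where $\pi(P)=\bigcup_{S\in P}S$. Your explicit verification of the projection identity $\V_{\Gamma}(t')=\pi(P_1)\cdots\pi(P_m)$ is exactly the ``straightforward'' step the paper omits, and your complexity accounting matches the paper's.
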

\begin{proof}
Let $\S = \langle \T,\M,\Gamma_0\rangle$
and $\M = \langle Q, q_0, \delta, F\rangle$.
Its equivalent zonal ODTA is defined as 
$\S' = \langle \T,\M',\Gamma_0\rangle$,
where $\M' = \langle Q, q_0, \delta', F\rangle$ and
$\delta' = \{(q,P,q') \in Q \times 2^{2^{\Gamma}} \times Q \mid 
\exists (q,S,q')\in \delta \ \mbox{such that} \ \bigcup_{R \in P} R = S\}$.
It is straightforward to show that $\L_{data}(\S') = \L_{data}(\S)$.

Note that the only difference between $\S$ and $\S'$
is the transitions $\delta$ and $\delta'$ in $\M$ and $\M'$, respectively.
The membership $(q,P,q') \in \delta'$ can be checked in polynomial time
in the size of $(q,P,q')$ and $\delta$.
Since there are exponentially many $(q,P,q')$,
the exponential time upper bound holds immediately.
This completes the proof of Proposition~\ref{p: convert to zonal}.
\end{proof}

Briefly our decision procedure for Theorem~\ref{t: decidability} works as follows.
Let $\S = \langle \T,\M,\Gamma_0\rangle$ be the given ODTA,
where $\Sigma$ is the input alphabet of $\T$, $\Gamma$ the output alphabet,
and $Q$ the set of states of $\T$.
Let $K = 27\cdot |\Sigma|\cdot |Q|\cdot |\Gamma|$.
The decision procedure constructs an APC $(\A,\xi)$ such that
$\S$ accepts an ordered-data tree $t$ in which there are at most $K^{O(K^2)}$ zones
with outdegree $\geq K^{(K^3)}$ if and only if
$(\A,\xi)$ accepts an extended tree of $t$ w.r.t.~$\S$.

Its precise description is given as follows.
\begin{enumerate}
\item
Compute $K = 27 \cdot|\Sigma|\cdot|Q|\cdot|\Gamma|$.
\item
Convert $\S$ into its zonal ODTA 
$\S' = \langle \T, \M',\Gamma_0\rangle$.
\item
Guess the following items.
\begin{enumerate}
\item
A set $\P \subseteq 2^{2^{\Gamma}}$.
\item
For each $P \in \P$, guess an integer 
$M_P \leq 2\cdot K^{K^3}\cdot 2^K + 2\cdot K^{K^3}+1$
and a set of $M_P$ constants $\C_P =\{c_1,\ldots,c_{M_P}\}$.\footnote{The 
purpose of the number $2\cdot K^{K^3}\cdot 2^{K}+2\cdot K^{K^3}$
is the application of Lemma~\ref{l: canonical lemma} later on,
where we consider the graph where the nodes are the zones.
Each zone is labeled with a symbol from $2^{\Sigma\times\{\top,\bot,*\}^3\times Q\times \Gamma}$,
which is of size $2^{K}$.
If a zone has outdegree $\leq K ^{(K^3)}$,
then it has only at most $K ^{(K^3)}$ nodes,
which means that its degree (the sum of indegree and outdegree) is bounded by $2\cdot K^{K^3}$.
Now $\P$ is intended to contain all those $P$'s in which
$|[P]^{zone}_t| \leq 2\cdot K^{K^3}\cdot 2^K + 2\cdot K^{K^3}+1$ so that
we can ``guess'' some constants as elements of $[P]^{zone}_t$ and make sure
by automaton that the same constant is not ``assigned'' to adjacent zones.
For $P$ not in $\P$, we can apply Lemma~\ref{l: canonical lemma}
to make sure the same data value from $[P]^{zone}_t$ is not assigned to adjacent zones.}
\item
Two integers $N,N'$ such that $N' \leq N \leq K^{O(K^2)}$
and a set of $N'$ constants $\D = \{d_1,\ldots,d_{N'}\}$.
\\
The intuitive meaning of $N'$ and $N$ are the number
of zones with outdegree $\geq K^{(K^3)}$ and the number of data values
found in them, respectively.
We also remark that the constants in $\D$ may overlap with the constants in some $\C_P$.
\item
For each $d\in \D$, guess a set $P_d \subseteq 2^{\Gamma}$.
\end{enumerate}
\item
Construct the following automaton $\A$ over the alphabet 
$\Sigma\times\{\top,\bot,*\}^3\times Q \times \Gamma$.
\begin{enumerate}
\item
$\A$ accepts only the extended trees of $\L(\T)$
in which there are at most $N$ zones with outdegree $\geq K^{(K^3)}$.
\item
The automaton $\A$ can remember the constants in its states. 
\item
For every $P \in \P$, for every $c \in \C_P$,
the automaton $\A'$ ``assigns'' the constant $c$ in an $S$-zone, 
for every $S \in P$, but not in any $R$-zone, for every $R \notin P$.
\item
The automaton $\A$ ``assigns'' every zone with outdegree 
$\geq K^{(K^3)}$ with a constant from $\D$.
\item
For every $d\in \D$, for every $S\in P_d$,
the automaton $\A$ ``assigns'' the constant $d$ in an $S$-zone, 
for every $S \in P_d$, but in no $R$-zone, for every $R \notin P_d$.
\item
For each $a\in\Gamma_0$, there is at most one $a$-node in every zone,
and for every two zones that contains $a$-nodes,
if they are assigned with some constants from $\C_P$'s and $\D$,
then these constants must be different.
\item
For every two adjacent zones, if they are assigned with constants from $\C_P$'s and $\D$, 
then these constants must be different.
\end{enumerate}
The automaton $\A$ ``assigns'' a constant to a zone by remembering
the constant in the state when $\A$ is reading the zone.
\item
Let $P_1,\ldots,P_m$ be the enumeration of non-empty subsets of $2^{\Gamma}$.
\\
Applying Lemma~\ref{p: aut-to-presburger}, 
convert the automaton $\M'$ into its Presburger formula 
$\xi_{\sM'}(z_{P_1},\ldots,z_{P_m})$,
where the intended meaning of $z_{P_i}$'s is the number of appearances of
the label $P_i$.
\item
Let $\Gamma = \{a_1,\ldots,a_{\ell}\}$ and
$S_1,\ldots,S_k$ be the enumeration of non-empty subsets of $\Gamma$.
Define the formula 
$\xi(x_{a_1},\ldots,x_{a_{\ell}},x_{S_1},\ldots,x_{S_k}):$ 
\begin{eqnarray}
& \exists z_{P_1} \cdots \exists z_{P_m} & \xi_{\sM'}(z_{P_1},\ldots,z_{P_m}) 
\\
& & \ \wedge \ \bigwedge_{P_i \in \sP} z_{P_i} = M_{P_i}
\\ 
& & \ \wedge \ \bigwedge_{P_i \notin \sP} z_{P_i} \geq 
2\cdot K^{K^3}\cdot 2^K + 2\cdot K^{(K^3)} + 1
\\
& & \ \wedge \;\bigwedge_{S \subseteq \Gamma} 
\Big( x_S \ \geq \ \sum_{P_i \ni S \ \textrm{and} \ P_i \notin \sP} z_{P_i}\Big)
\\
& & \ \wedge \; \bigwedge_{a \in \Gamma_0} 
\Big( x_a \ = \  \sum_{\scriptsize\begin{array}{c}
 \textrm{there exists} \ S \ \textrm{such that}
 \\
 a \in S \ \textrm{and} \ S \in P_i\ \textrm{and} \ P_i \notin \P \end{array}} z_{P_i} \Big)
\\
& & \ \wedge \; 
\bigwedge_{\scriptsize P_i \notin \P} 
z_{P_i} \geq  |\{d \in \D \mid P_{d}=P_i\}|
\end{eqnarray}
The meaning of $x_a$ is the number of $a$-nodes occurring in the zone
not assigned with any constants from $\C_P$'s and $\D$;
and $x_S$ is the number $S$-zones not assigned
with any constants from $\C_P$'s and $\D$.
The intuition behind items~(2)--(6) is rather clear.
The intuition behind item~(7) is as follows.
Recall that in Step~(3), for each $d\in \D$,
we guess a set $P_d$.
The meaning is that $d \in [P_d]^{zone}_t$ for some $t\in \L^{data}(\S)$.
So for every $P_i \notin \P$,
the number of $d$ such that $P_d = P_i$ should not exceed $z_{P_i}$.
This is precisely what is stated in item~(7).
\item
Test the non-emptiness of the APC $(\A,\xi)$.
\end{enumerate}

Before we proceed to prove its correctness,
we first present the analysis of its complexity.
\begin{itemize}\itemsep=0pt
\item
Step~(1) is trivial and
Step~(2) takes exponential time.
\item
Step~(3) takes non-deterministic exponential time in the size of $\S$.
The analysis is as follows.
Step~(3.a) takes non-deterministic exponential time in the size of $2^{\Gamma}$,
which is bounded by the size of $\M$ in $\S$.
(Recall that the alphabet in $\M$ is $2^{\Gamma}$.)
Step~(3.b) can guess up exponentially many constant in each $\C_P$,
and there are exponentially many different $\C_P$,
hence it takes double exponential time in the size of $2^{\Gamma}$.
Steps~(3.c) and~(3.d) take non-deterministic exponential time.
\item
Step~(4) takes deterministic triple exponential time
and can produce the automaton $\A$ of size up to triple exponential.
The analysis is as follows.
The automaton $\A$ has to remember in its states the outdegree of each zone up to $K^{(K^3)}$
and the number of zones with out degree $\geq K^{(K^3)}$.
This induces an exponential blow-up in the size of $\T$.

The number of constants in guessed in Step~(3) is double exponential in the size of $\T$.
Then $\A$ has to remember in its states
which constant is assigned to which zone (of outdegree $\geq K^{(K^3)}$),
which induces another exponential blow-up.
Altogether the size of $\A$ can be triple exponential in the size of $\T$.

\item
By Proposition~\ref{p: aut-to-presburger},
Step~(5) takes polynomial time in the size $\M'$,
which is of size exponential in the size of the original $\M$.
 
\item
The length of the formula in step~(6) is double exponential in the size of $\S$,
since the number of constants in $\D$ can be double exponential in the size of $2^{\Gamma}$,
and hence $\S$.

\item
Step~(7) takes non-deterministic polynomial time in the size of $(\A,\xi)$,
and hence non-deterministic triple exponential time in the size of the input $\S$.
\end{itemize}


The following claim immediately implies
the correctness of our algorithm.
\begin{claim}
\label{cl: correctness}
\begin{enumerate}
\item
For every ordered-data tree $t \in \L_{data}(\S)$,
in which there are at most $K^{O(K^2)}$ zones with outdegree $\geq K^{(K^3)}$,
there exists an extended tree of $t$ which is accepted by the APC $(\A,\xi)$.
\item
For every $t' \in \L(\A,\xi)$,
there exists an ordered-data tree $t \in \L_{data}(\S)$
such that $t'$ is an extended tree of $t$ w.r.t. $\S$.
\end{enumerate}
\end{claim}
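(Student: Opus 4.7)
The plan is to prove both directions of Claim~\ref{cl: correctness} by matching the guesses of the algorithm against actual combinatorial data on a witness ordered-data tree. In one direction I start from a tree and read off the guesses from it; in the other I use the Presburger witness together with Proposition~\ref{p: aut-to-presburger} to build a word accepted by $\M'$, and then turn that word into the zonal profile of an ordered-data tree, with Lemma~\ref{l: canonical lemma} used to repair any accidental coincidences of data values on adjacent zones.

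For item~(1), I start with $t \in \L_{data}(\S)$ having at most $K^{O(K^2)}$ zones of outdegree $\geq K^{(K^3)}$, and fix an accepting run of $\T$ on $\sfProj(t)$ to obtain the extended tree $\tilde{t}$. I then set $\P := \{ P \mid |[P]^{zone}_t| \leq 2 K^{K^3} 2^K + 2 K^{K^3}\}$; for $P \in \P$ I take $M_P = |[P]^{zone}_t|$ and let $\C_P$ be the actual data values appearing in the $[P]^{zone}_t$; I let $\D$ be the set of data values appearing in zones of outdegree $\geq K^{(K^3)}$ and $N'$ its size, with $N$ the number of such zones (both $\leq K^{O(K^2)}$); and for $d \in \D$ I set $P_d$ to be the unique $P$ with $d \in [P]^{zone}_t$. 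With these guesses the automaton $\A$ accepts $\tilde{t}$ by remembering in its state which constant has been assigned to the currently read zone (here I rely on the fact that adjacent zones in $t$ necessarily have different data values, so the assignment is consistent with item~4.f--g of $\A$). Setting $z_{P_i} := |[P_i]^{zone}_t|$ makes $\xi_{\sM'}$ true because, by Proposition~\ref{p: convert to zonal}, $\M'$ accepts $\V^{zone}_{\Gamma}(t')$; the remaining conjuncts of $\xi$ are then immediate.

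For item~(2), I take $t' \in \L(\A,\xi)$ together with witnesses $z_{P_1},\ldots,z_{P_m}$. By Proposition~\ref{p: aut-to-presburger} applied to $\M'$, there is a word $w = P_{i_1}\cdots P_{i_L} \in \L(\M')$ whose Parikh image is the chosen $\bar{z}$. I assign data value $j$ to "be" the $j$-th position of $w$. For each $P \in \P$, the $M_P$ positions of $w$ bearing symbol $P$ are in bijection with $\C_P$, and every zone of $t'$ carrying a constant $c \in \C_P$ is assigned the data value chosen at the position paired with $c$; for constants in $\D$ I do the same inside the positions labeled $P_d$. What remains is to assign data values to the unmarked zones so that the resulting $t$ has $\V^{zone}_{\Gamma}(t)$ equal to $w$ while adjacent zones carry different data values.

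The main obstacle is exactly this last step, and it is the reason for the thresholds baked into the algorithm. For each $P_i \notin \P$ the Presburger formula forces $z_{P_i} \geq 2K^{K^3} 2^K + 2K^{K^3} + 1$, and the inequalities $x_S \geq \sum z_{P_i}$ guarantee enough unmarked zones of each label to absorb every data value of $w$ not yet used. I then form the zone graph whose vertices are the unmarked zones, labeled by their set of constituent symbols from $\Sigma\times\{\top,\bot,*\}^3\times Q\times\Gamma$ (an alphabet of size $2^K$), and edges given by $\eda\cup\era$-adjacency of zones. Since these zones have outdegree $< K^{(K^3)}$ and hence degree at most $2K^{K^3}$ in the zone graph, the cardinality hypothesis of Lemma~\ref{l: canonical lemma} is met, and the lemma lets me permute the data values within each label class so that adjacent zones differ. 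Zones of outdegree $\geq K^{(K^3)}$ were already assigned distinct adjacent values by the $\D$-mechanism in $\A$, and the $\Gamma_0$-disjointness follows from the corresponding clause in $\A$. The resulting $t$ lies in $\L_{data}(\S)$, and $t'$ is by construction an extended tree of $t$ with respect to $\S$, completing the claim.
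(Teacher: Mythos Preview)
Your proposal is correct and follows essentially the same approach as the paper's own proof: for item~(1) you read off the guesses $\P$, $M_P$, $\C_P$, $\D$, $P_d$ directly from the tree and witness $\xi$ with $z_{P_i}=|[P_i]^{zone}|$, while for item~(2) you invoke Proposition~\ref{p: aut-to-presburger} to obtain $w\in\L(\M')$ with the prescribed Parikh image, biject $\C_P$ and $\D$ with positions of $w$ of the appropriate label, and then repair the unmarked zones via Lemma~\ref{l: canonical lemma} on the zone graph, exactly as the paper does. Two minor points: the bracket sets $[P]^{zone}$ should be taken over the $\Gamma$-output $t_0$ (not over $t$, which is $\Sigma$-labeled), and the $\Gamma_0$-constraint for unmarked zones is secured not only by clause~4.f in $\A$ but also by conjunct~(6) of $\xi$; the paper uses both ingredients, and you should too.
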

\begin{proof}
We prove~(1) first.
Let $t\in \L_{data}(\S)$ be an ordered-data tree
in which there are at most $K^{O(K^2)}$ zones
with outdegree $\geq K^{(K^3)}$.
Let $t_0$ be the output of $\T$ on $t$
so that $\V^{zone}(t_0)$ is accepted by $\M$
and all nodes in $t_0$ labelled with a symbol in $\Gamma_0$
have different data values.

We have the following items guessed in Step~3 in our algorithm above.
\begin{itemize}
\item
$\P = \{P \mid |[P]^{zone}_t| \leq 2\cdot K^{K^3}\cdot 2^{K} + 2\cdot K^{(K^3)} + 1\}$.
\item
For each $P \in P$, $\C_{P} = [P]^{zone}_{t_0}$,
and $M_P = |\C_P|$.
\item
$N$ be the number of zones in $t$ with outdegree $\geq K^{(O(K^2))}$
and $N'$ be the number of data values found in these zones.
\item
$\D = \{d \mid d \ \mbox{is found in a zone with outdegree} \ \geq K^{(K^3)}\}$,
\item
For each $d\in\D$, $P_d$ is the set such that $d \in [P_d]^{zone}_{t_0}$.
\end{itemize}
Now let $t'$ be an extended tree of $t$ with respect to $\S$,
and $\A$ and $\xi$ be the automaton and formula as constructed in Steps~4--6 above.
We are going to show that $t' \in \L(\A,\xi)$.
Obviously, $t' \in \L(\A)$.
To show that the formula $\xi$ is satisfied,
we take $\Parikh(\V^{zone}(t_0))$ as witness to $(z_{P_1},\ldots,z_{P_m})$.
Since $\V^{zone}(t_0) \in \L(\M')$,
by Proposition~\ref{p: aut-to-presburger},
the formula $\xi_{\sM'}(\Parikh(\V^{zone}(t_0)))$ holds.
It is straightforward from the definitions of the items 
$\P$, $M_P$'s, $N$, $N'$, $\D$ and $P_d$'s
that the formula $\xi$ in Step~6 is satisfied
with $x_a$'s and $x_S$'s interpreted as intended.

Now we prove~(2). The proof is more delicate than the proof of~(1).
Suppose $t' \in \L(\A',\xi)$.
We are going to construct an ordered-data tree $t$ from $t'$
such that $t'$ is an extended tree of $t$ w.r.t. $\S$.
Let $\P$, $M_P$'s, $\C_P$'s, $N$, $N'$, $\D$ and $P_d$'s
the items as guessed in Step~3 above and
\begin{itemize}
\item
for each $a_i \in \Gamma$, let $n_{a_i}$
be the number of $a_i$-nodes in $t'$ occurring in a zone
without any constants from $\C_P$'s and $\D$;
\item
for each $S_i \subseteq \Gamma$, let $n_{S_i}$
be the number of $S_i$-zones in $t'$ without any constants from $\C_P$'s and $\D$.
\end{itemize}
Suppose $(k_{P_1},\ldots,k_{P_m})$ be the witness to $z_{P_1},\ldots,z_{P_m}$
such that 
$$
\xi(n_{a_1},\ldots,n_{a_{\ell}},n_{S_1},\ldots,n_{S_l}) \qquad\mbox{holds}.
$$
By Proposition~\ref{p: aut-to-presburger},
this means that there exists a word $w \in \L(\M')$ such that
$\Parikh(w)=(k_{P_1},\ldots,k_{P_m})$.
For each $P_i$, we let 
$$
\N_{P_i} = \{j \mid \mbox{position} \ j \ \mbox{in} \ w \ \mbox{is labeled} \ P_i\}.
$$

We will assign a data value to each node in $t$ such that 
$$
[P_i]^{zone}_{t} = \N_{P_i},
$$
and $\V^{zone}(t) = w$.
The assignment is done according to three cases below.
\begin{description}
\item[Case 1]~
For the nodes that are assigned with some constants from $\C_{P_i}$'s.
\\
In this case $P_i \in \P$.
We define bijections $f_{P_i}:\C_{P_i} \mapsto \N_{P_i}$.
There is always a bijection from $\C_{P_i}$ to $\N_{P_i}$ since
they have the same cardinality $M_{P_i}$, due to the following condition in the formula $\xi$:
$$
\bigwedge_{P_i \in \sP} z_{P_i} = M_{P_i}.
$$
The data value assignment to nodes of this case can be done
by replacing every constant $c\in \C_{P_i}$ with $f_{P_i}(c)$.
\item[Case 2]~
For the nodes that are assigned some constants from $\D$.
\\
We define a 1-1 mapping $f : \D \mapsto \{1,\ldots,|w|\}$
such that $f(d) \in \N_{P_d}$,
where $P_d$ is the set guessed in Step~3.
Such 1-1 mapping exists because the following condition in the formula $\xi$: 
$$
\bigwedge_{\scriptsize P_i \notin \P} 
z_{P_i} \geq  |\{d' \in \D \mid P_{d'}=P_i\}|
$$
The data value assignment to nodes of this case can be done
by replacing every constant $d\in \D$ with $f(d)$.
\item[Case 3]~
For the nodes that are not assigned any constants from $\C_P$'s and $\D$.
\\
First we assign each of such zone in $t$ with a data value\footnote{A zone in $t$ can be recognised 
from the profile information in $t'$.} such that
for each $S\subseteq \Gamma$,
$$
V^{zone}_{t}(S) = \bigcup_{P_i \ni S \ \textrm{and} \ P_i\notin \sP} \N_{P_i}
$$
This step can be done as follows. The number of such $S$-zone in $t$ 
is greater than $\sum_{P_i \ni S \ \textrm{and} \ P_i\notin \sP} |\N_{P_i}|$,
due to the condition below in the formula $\xi$:
$$
x_S \ \geq \ \sum_{P_i \ni S \ \textrm{and} \ P_i \notin \sP} z_{P_i}.
$$
Thus, we can simply assign every $S$-zone with a data value from 
$\bigcup_{P_i \ni S \ \textrm{and} \ P_i\notin \sP} \N_{P_i}$,
and make sure every data value from $\bigcup_{P_i \ni S \ \textrm{and} \ P_i\notin \sP} \N_{P_i}$
appears in some $S$-zone.
\\
However, by assigning data values like that, some adjacent zones may
get the same data values.
Here we apply Lemma~\ref{l: canonical lemma}.
Since for each $P_i\notin \P$, 
$|\N_{P_i}| \geq 2\cdot K^{K^3}\cdot 2^K + 2\cdot K^{(K^3)} + 1$,
by the condition below in the formula $\xi$
$$
\bigwedge_{P_i \notin \sP} z_{P_i} 
\geq 2\cdot K^{K^3}\cdot 2^K + 2\cdot K^{(K^3)} + 1,
$$
the cardinality
$$
\Big|\bigcup_{P_i \ni S \ \textrm{and} \ P_i\notin \sP} \N_{P_i}\Big| = 
\sum_{P_i \ni S \ \textrm{and} \ P_i\notin \sP} |\N_{P_i}| \geq 
2\cdot K^{K^3}\cdot 2^K + 2\cdot K^{(K^3)} + 1.
$$
The outdegree of such zone is $\leq K^{(K^3)}$,
hence, the number of nodes in the zone is also $\leq K^{(K^3)}$.
Since each node can have indegree at most $1$,
the degree of each of such zone is $\leq 2\cdot K^{(K^3)}$.
By applying Lemma~\ref{l: canonical lemma}, where $\deg(G)= 2\cdot K^{(K^3)}$,
we can reassign the data value in such zone so that
each adjacent zone get different data value.
\end{description}
This completes the proof of our Claim.
\end{proof}

\section{Weak ODTA}
\label{s: weak S-automata}

A weak ODTA over $\Sigma$ is a triplet $\S = \langle \T, \M, \Gamma_0\rangle$
where $\T$ is a letter-to-letter transducer from $\Sigma$
to the output alphabet $\Gamma$,
and $\M$ is a finite state automaton over $2^{\Gamma}$ and 
$\Gamma_0\subseteq \Gamma$.
An ordered-data tree $t$ is accepted by $\S$,
denoted by $t \in \L_{data}(\S)$, if there exists an ordered-data tree $t'$ over $\Gamma$
such that
\begin{itemize}\itemsep=0pt
\item
on input $\sfProj(t)$, the transducer $\T$ outputs $t'$;
\item
the automaton $\M$ accepts the string $\V_{\Gamma}(t')$; and
\item
for every $a\in \Gamma_0$,
all the $a$-nodes in $t'$ have different data values.
\end{itemize}
Note that the only difference between weak ODTA and ODTA
is the equality test on the data values in neighboring nodes.
Such difference is the cause of the triple exponential leap in complexity,
as stated in the following theorem.

\begin{theorem}
\label{t: decidability weak S-automata}
The non-emptiness problem for weak ODTA is in $\np$.
\end{theorem}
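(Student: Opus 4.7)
The plan is to reduce non-emptiness of a weak ODTA $\S = \langle \T, \M, \Gamma_0\rangle$ over input alphabet $\Sigma$ and output alphabet $\Gamma$ to a single non-emptiness query for an APC, and then invoke Theorem~\ref{t: non-emptiness and closure presburger}. The key observation that makes this reduction work — and that fails for full ODTA — is that a weak ODTA never probes equality between data values of specific neighbouring nodes: acceptance of $t$ is determined entirely by (i) the existence of an output tree $t' \in \T(\sfProj(t))$ of appropriate label-counts, and (ii) the Parikh image of $\V_{\Gamma}(t')$, which has to lie in the Parikh image of $\L(\M)$. Both pieces of information are semilinear and can be encoded in existential Presburger arithmetic in polynomial size.

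First, I view $\T$ as a non-deterministic tree automaton $\A$ accepting pairs $(\sfProj(t), t')$ over the product alphabet $\Sigma \times \Gamma$, and apply Proposition~\ref{p: aut-to-presburger} to obtain, in polynomial time, an existential Presburger formula whose free variables count occurrences of each product symbol; by existentially projecting onto the $\Gamma$-components, I get a formula $\xi_\T(\{x_b\}_{b \in \Gamma})$ true exactly when $x_b = \#_{t'}(b)$ for some valid output $t'$. Next, by the word-automaton analogue of Proposition~\ref{p: aut-to-presburger}, I obtain in polynomial time an existential Presburger formula $\xi_\M(\{y_S\}_S)$ capturing the Parikh image of $\L(\M)$. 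Only those $S \in 2^\Gamma$ that actually label a transition of $\M$ give rise to a variable $y_S$, so this formula has polynomially many variables in $|\S|$.

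The assembled formula is
\begin{align*}
\xi(\{x_a\}_a) \ := \ \exists \{y_S\}_S \ \Big[\ \xi_\M(\{y_S\}_S) \ &\wedge\ \bigwedge_{a \in \Gamma_0}\Big(x_a = \sum_{S \ni a} y_S\Big) \\
&\wedge\ \bigwedge_{a \in \Gamma \setminus \Gamma_0}\Big(x_a \geq \sum_{S \ni a} y_S\Big) \\
&\wedge\ \bigwedge_{a \in \Gamma}\Big(x_a = 0\ \vee\ \sum_{S \ni a} y_S \geq 1\Big) \ \Big],
\end{align*}
and the claim is that the APC $\langle \A, \xi\rangle$ is non-empty iff $\L_{data}(\S) \neq \emptyset$. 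For the forward direction take $x_a := \#_{t'}(a)$ and $y_S := |[S]_{t'}|$: the $\Gamma_0$-equality comes from the distinctness of data values at $\Gamma_0$-labelled nodes, the inequality comes from the fact that each value in $[S]_{t'}$ consumes at least one $a$-node per $a \in S$, and the disjunctive clause holds because every $a$-node has a data value whose support includes $a$. For the backward direction, from a satisfying tuple I extract a concrete word $w \in \L(\M)$ with $\Parikh(w) = (y_S)_S$, create $\sum_S y_S$ fresh totally-ordered data values lined up so that the $i$-th value's support is the $i$-th letter of $w$, and then distribute them among the nodes of $t'$: for $a \in \Gamma_0$ the equality forces a bijection between $a$-nodes and the values whose support contains $a$, while for $a \notin \Gamma_0$ the inequality together with the disjunctive clause leaves just enough room to assign each remaining $a$-node some existing compatible value.

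Since $\langle \A, \xi\rangle$ has size polynomial in $|\S|$, Theorem~\ref{t: non-emptiness and closure presburger} immediately yields the $\np$ upper bound. The main subtlety I expect is verifying in the backward direction that the per-label distributions are jointly realisable — i.e.\ that one can pick, simultaneously for every $a$, a surjection from the $x_a$ available $a$-nodes onto the values whose support contains $a$, with every value covered. This reduces to an easy covering argument since the choice for label $a$ only interacts with label $a' \neq a$ through the shared set of data values, and the per-$a$ inequality supplies the required slack.
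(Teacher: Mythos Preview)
Your approach is essentially the same as the paper's: reduce to an APC by taking the product automaton for $\T$ and an existential Presburger formula combining $\xi_\sM$ with the per-label inequalities $x_a \geq \sum_{S \ni a} y_S$ (equality for $a\in\Gamma_0$), then invoke Theorem~\ref{t: non-emptiness and closure presburger}. Your extra clause $x_a = 0 \vee \sum_{S\ni a} y_S \geq 1$ is in fact a genuine improvement --- the paper's formula omits it, which leaves a small gap in the backward direction (an $a$-node with no compatible data value to receive); with that clause your correctness argument goes through, and the ``joint realisability'' worry you flag is a non-issue since the per-label assignments are completely independent.
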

\begin{proof}
Let $\S = \langle \T,\M,\Gamma_0\rangle$ be a weak ODTA.
Let $\Sigma,Q,\Gamma$ be the input alphabet, set of states
and output alphabet of $\T$, respectively.

We need the following notation.
For a tree $t \in \L_{data}(\S)$,
its extended tree $\tilde{t}$ (with respect to the weak ODTA $\S$)
is a tree over the alphabet $\Sigma\times Q \times \Gamma$,
where
\begin{itemize}
\item
the projection of $\tilde{t}$ to $\Sigma$ is $t$;
\item
the projection of $\tilde{t}$ to $Q$ is an accepting run of $\T$ on $t$
such that its output is 
the projection of $\tilde{t}$ to $\Gamma$.
\end{itemize}

The decision procedure for Theorem~\ref{t: decidability weak S-automata} works as follows.
\begin{enumerate}
\item
Construct an automaton $\A$ over the alphabet $\Sigma\times Q \times \Gamma$
for the extended trees accepted by $\T$.
\item
Let $\P =\{S_1,\ldots,S_m\} \subseteq 2^{\Gamma}$ be the set of symbols used in $\M$.
\\
By applying Proposition~\ref{p: aut-to-presburger},
construct the Presburger formula $\xi_{\sM}(x_{S_1},\ldots,x_{S_m})$ for $\M$.
\item
Let $\Sigma\times Q \times \Gamma =\{(a_1,q_1,\alpha_1),\ldots,(a_k,q_n,\alpha_{\ell})\}$.
Let $\varphi(x_{(a_1,q_1,\alpha_1)},\ldots,x_{(a_k,q_n,\alpha_{\ell})})$ be the following formula:
\begin{eqnarray*}
& & \exists x_{\alpha_1} \cdots \exists x_{\alpha_{\ell}} \
\exists x_{S_1} \cdots \exists x_{S_m} 
\ \xi_{\sM}(x_{S_1},\ldots,x_{S_m})
\\
& & \quad\ \wedge \
\bigwedge_{\alpha_i \in \Gamma} 
\Big( x_{\alpha_i} = \sum_{a_j\in\Sigma,q_h\in Q} x_{(a_j,q_h,\alpha_i)}\Big)
\\
& & \quad \ \wedge 
\bigwedge_{\alpha_i \in \Gamma} 
\Big( x_{\alpha_i} \geq \sum_{\alpha_i \in S_j} x_{S_j} \Big)
\ \wedge \ 
\bigwedge_{\alpha_i \in \Gamma_0} 
\Big( x_{\alpha_i} = \sum_{\alpha_i \in S_j} x_{S_j} \Big).
\end{eqnarray*}
\item
Test the non-emptiness of APC $(\A,\varphi(x_{(a_1,q_1,\alpha_1)},\ldots,x_{(a_k,q_n,\alpha_{\ell})}))$.
\end{enumerate}
That this procedure works in $\np$ follows directly from the fact
that the non-emptiness problem of APC is in $\np$.

We now show the correctness of our algorithm by showing that
$\L_{data}(\S) \neq \emptyset$ if and only if
$\L(\A,\varphi) \neq \emptyset$. (For the sake of presentation,
we write $\varphi$ without its free variables.)
We start with the ``only if'' part.
Suppose that $t \in \L_{data}(\S)$.
We claim that the extended tree $\tilde{t}$ of $t$ is accepted by $(\A,\varphi)$.
Obviously, $\tilde{t}\in \L(\A)$.
To show that $\varphi(\Parikh(\tilde{t}))$ holds,
let $t'$ be the data tree obtained by projecting $\tilde{t}$ to $\Gamma$
and the data value in each node comes from the same node in $t$.
That is, $t'$ is an output of $\T$ on $t$.
We will show that $\varphi(\Parikh(\tilde{t}))$ holds.
\begin{itemize}
\item
As witness to $x_{S_1},\ldots,x_{S_m}$,
we take $\Parikh(\V(t'))$.
Since $\V(t') \in \L(\M)$, by Proposition~\ref{p: aut-to-presburger},
$\xi_{\sM}(\Parikh(\V(t')))$ holds.
\item
As witness to $x_{\alpha_1},\ldots,x_{\alpha_{\ell}}$, we take $\Parikh(t')$.
Now for each $\alpha_i \in \Gamma$, the constraint
$x_{\alpha_i} \geq \sum_{\alpha_i \in S_j} x_{S_j}$ holds
since the number of data values in the $\alpha_i$-nodes
cannot exceed the the number of $\alpha_i$-nodes itself.
The constraint $x_{\alpha_i} = \sum_{\alpha_i \in S_j} x_{S_j}$,
for each $\alpha_i \in \Gamma_0$, since
the data values found in $\alpha_i$-nodes are all different.
\end{itemize}
Thus, $\varphi(\Parikh(\tilde{t}))$ holds, and this concludes
our proof of the ``only if'' part.

Now we prove the ``if'' part.
Suppose that $\tilde{t}\in \L(\A,\varphi)$.
So $\tilde{t} \in \L(\A)$.
Let $t$ and $t'$ be the $\Sigma$- and $\Gamma$-projection of $\tilde{t}$, respectively.
By the definition of $\A$,
$t'$ is an output of $\T$ on $t$.
Now since $\varphi(\Parikh(\tilde{t}))$ holds,
in particular there exists a witness $\bar{M}=(M_1,\ldots,M_m)$ to $x_{S_1},\ldots,x_{S_m}$
such that $\xi_{\sM}(\bar{M})$ holds,
by Proposition~\ref{p: aut-to-presburger},
there exists a word $w \in \L(\M)$ over the alphabet $2^{\Gamma}$
such that $\Parikh(w)=\bar{M}$.

We are going to assign data values to the nodes of $t'$ (thus, also to those of $t$)
such that $t \in \L_{data}(\S)$.
The assignment is done as follows.
For each $S\subseteq \Gamma$,
let $V_w(S)$ be the set of positions of $w$ labeled with $S$.
Now for each $\alpha \in \Gamma$,
we assign the $\alpha$-nodes in $t'$ with the data values from $\bigcup_{\alpha\in S} V_w(S)$
such that $V_{t'}(\alpha) = \bigcup_{\alpha\in S} V_w(S)$.
This is possible due to the constraint
$x_{\alpha} \geq \sum_{\alpha\in S} x_S$.

With such assignment, we get $\V(t') = w$.
Thus, $\V(t') \in \L(\M)$.
Moreover, for every $\alpha\in\Gamma_0$,
all the data values in $\alpha$-nodes are different,
which follows from the constraint $x_{\alpha} = \sum_{\alpha\in S} x_S$.
Therefore, the resulting ordered-data tree $t \in \L_{data}(\S)$.
This concludes our proof.
\end{proof}

Next, we give the logical characterisation of weak ODTA.

\begin{theorem}
\label{t: logic weak S-automata}
A language $\L$ is accepted by a weak ODTA if and only if
$\L$ is expressible with a formula of the form:
$\exists X_1 \cdots \exists X_m \ \varphi \wedge \psi$,
where $\varphi$ is a formula from $\FO^2(\eda,\era)$,
and $\psi$ is a formula from $\FO(\sim,\prec,\dvSucc)$,
extended with the unary predicates $X_1,\ldots,X_m$.
\end{theorem}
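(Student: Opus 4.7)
The proof follows the pattern of Theorem~\ref{t: logic S-automata}, split into two directions, with the key simplification that the absence of the $\sim$ predicate from $\varphi$ mirrors exactly the fact that weak ODTA operates on $\sfProj(t)$ rather than on $\sfProfile(t)$.

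For the ``only if'' direction (weak ODTA $\Rightarrow$ formula), I would proceed exactly as in the second part of the proof of Theorem~\ref{t: logic S-automata}: given $\S=\langle\T,\M,\Gamma_0\rangle$ with transducer states $Q$, output alphabet $\Gamma$, and automaton states $P$, introduce existentially quantified unary predicates $X_{q_i}$, $X_{\alpha_j}$, $X_{p_h}$. Write $\varphi$ as the conjunction of the local transition constraints of $\T$, expressing that each node's guessed state and output are compatible with its children's guessed states via the transitions of $\T$. Since $\T$ operates on $\sfProj(t)$, these constraints only mention labels and the $\eda,\era$ relations, so $\varphi\in\FO^2(\eda,\era)$. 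The formula $\psi\in\FO(\sim,\prec,\dvSucc)$ conjoins (i) the $\Gamma_0$ condition $\bigwedge_{\alpha\in\Gamma_0}\forall x\forall y\,(X_\alpha(x)\wedge X_\alpha(y)\wedge x\sim y\to x=y)$, and (ii) the acceptance by $\M$, encoded precisely as in the construction of $\Psi_{\sM}$ in Subsection~\ref{ss: proof: t: logic S-automata}: use $\Phi_{[S]}(x)$ to identify which labels appear at $x$'s data value, force the data-minimum node into $X_{p_1}$, enforce transitions via $x\dvSucc y$, and require the data-maximum node to lie in a final state.

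For the ``if'' direction (formula $\Rightarrow$ weak ODTA) the strategy is: since $\EMSO^2(\eda,\era)$ coincides with $\MSO(\eda,\era)$ on trees and hence with the regular tree languages~\cite{thomas-handbook}, the part $\exists X_1\cdots\exists X_m\,\varphi$ is captured by an ordinary unranked tree automaton on labellings over $\Sigma\times 2^{\{X_1,\ldots,X_m\}}$, which we reformulate as a letter-to-letter transducer $\T$ from $\Sigma$ to an output alphabet $\Gamma$ that guesses and outputs the $X_i$-labelings. The remaining task is to realize $\psi$ as a finite-state automaton $\M$ over $2^\Gamma$ together with a choice of $\Gamma_0$. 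Using (the straightforward extension to $\dvSucc$ of) Lemma~\ref{l: aut for fo string rep}, we obtain from $\psi$ of quantifier rank $k$ an automaton $C_\psi$ over $2^\Gamma\times\F_{\Gamma,k}$ accepting exactly the $k$-extended string representations $\V^k_\Gamma(t')$ whose underlying ordered-data tree satisfies $\psi$.

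The main obstacle is precisely that $\M$ reads an alphabet of the form $2^\Gamma$, whereas $C_\psi$ reads $2^\Gamma\times\F_{\Gamma,k}$, i.e.\ it needs multiplicity information that $\V_\Gamma(t')$ does not expose. To bridge this gap I would enrich the output alphabet of $\T$ to $\Gamma'=\{(\alpha,i)\mid \alpha\in\Gamma,\,1\le i\le k-1\}\cup\{(\alpha,\ast)\mid\alpha\in\Gamma\}$: when an $\alpha$-node is emitted, $\T$ nondeterministically chooses either an index $i<k$ or the symbol $\ast$. Put $\Gamma_0'=\{(\alpha,i)\mid \alpha\in\Gamma,\,1\le i\le k-1\}$, so that for each data value and each $\alpha$ the indexed copies $(\alpha,1),\dots,(\alpha,k-1)$ are used at most once. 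The new automaton $\M'$ over $2^{\Gamma'}$ then reads, at each data value, a set $S'\subseteq\Gamma'$ from which the $k$-characteristic function of the original multiset of $\alpha$-labels is recoverable (count the $(\alpha,i)$'s present and promote to $k$ precisely when $(\alpha,\ast)\in S'$). $\M'$ is obtained by composing $C_\psi$ with this decoding, together with the sanity check ``$(\alpha,\ast)\in S'$ implies $(\alpha,1),\dots,(\alpha,k-1)\in S'$'' which forces $\T$ to exhaust the distinct indices before resorting to $\ast$, so that the decoded characteristic function genuinely equals $\min(\#\text{copies},k)$. The triple $\langle\T',\M',\Gamma_0'\rangle$ obtained this way is then readily verified to accept exactly the models of $\Psi$, by invoking Lemma~\ref{l: extended rep for data set} to see that $\psi$ depends only on $\V^k_\Gamma$.
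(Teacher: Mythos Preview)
Your proposal is correct and follows the same route the paper takes, which is literally ``proceed as in Theorem~\ref{t: logic S-automata}, dropping the profile information because $\varphi$ no longer uses $\sim$.'' Your ``only if'' direction matches the paper's construction exactly, with the appropriate observation that $\Psi_{\Gamma_0}$ and $\Psi_{\sM}$ land in $\psi$ (since they use $\sim,\prec,\dvSucc$) while $\Psi_{\sT}$ lands in $\varphi$.

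For the ``if'' direction you actually go further than the paper does. The paper's proof of Theorem~\ref{t: logic S-automata} simply says ``$\M$ expresses \ldots\ $\psi'$, obtained by applying Lemma~\ref{l: aut for fo string rep},'' glossing over the fact that Lemma~\ref{l: aut for fo string rep} yields an automaton over $2^{\Gamma}\times\F_{\Gamma,k}$, whereas the definition of (weak) ODTA requires $\M$ to run over $2^{\Gamma}$. You correctly flag this mismatch and resolve it by enriching the output alphabet to $\Gamma'$ with bounded multiplicity tags $(\alpha,i)$ and an overflow tag $(\alpha,\ast)$, putting the indexed tags into $\Gamma_0'$, and having $\M'$ decode the $k$-characteristic function. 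The sanity check ``$(\alpha,\ast)\in S'$ implies all $(\alpha,i)\in S'$'' together with the $\Gamma_0'$ distinctness constraint guarantees that the decoded count equals $\min(k,\#\{\alpha\text{-nodes at this data value}\})$, so the composition with $C_\psi$ is faithful. This is a clean and correct way to close a gap the paper leaves implicit.
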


The proof of Theorem~\ref{t: logic weak S-automata} 
is the same as the proof of Theorem~\ref{t: logic S-automata}.
The difference is that to simulate the $\FO^2(\eda,\era)$ formula $\varphi$,
the profile information is not necessary.
The complexity of the translation is still the same as in Theorem~\ref{t: logic S-automata}.

\subsection{Extending weak ODTA with Presburger constraints}
\label{ss: extending}

Like in the case of APC,
we can extend weak ODTA with Presburger constraints
without increasing the complexity of its non-emptiness problem.
Let $\S = \langle \T, \M, \Gamma_0 \rangle$
be a weak ODTA, where
$\Sigma$ and $\Gamma$ are the input and output alphabets of $\T$, respectively.
Let $\Gamma = \{\alpha_1,\ldots,\alpha_{\ell}\}$.

A weak ODTA $\S = \langle \T,\M,\Gamma_0\rangle$ extended with Presburger constraint
is a tuple $\langle \S,\xi\rangle$,
where $\xi(x_1,\ldots,x_{\ell},y_1,\ldots,y_{2^{\ell}-1})$ 
is an existential Presburger formula
with the free variables $x_1,\ldots,x_{\ell},y_1,\ldots,y_{2^{\ell}-1}$.
An ordered-data tree $t$ is accepted by $\langle \S,\xi\rangle$,
if there exists an output $t'$ of $\T$ on $t$,
the automaton $\M$ accepts $\V_{\Gamma}(t')$,
for each $a\in \Gamma_0$, all $a$-nodes in $t'$
have different data values and
$\xi(\Parikh(t'),\Parikh(\V_{\Gamma}(t')))$ holds.
We write $\L_{data}(\S,\xi)$ to denote
the set of languages accepted by $\langle \S,\xi\rangle$.

We claim that the non-emptiness problem
of weak ODTA extended with Presburger constraint
is still decidable in $\np$.
The reason is as follows.
The non-emptiness of a weak ODTA $\S$ is checked by
converting $\S$ into an APC $(\A,\varphi)$,
where $\varphi$ expresses linear constraints on
the number of nodes labeled with symbols from $\Sigma$ and $\Gamma$
as well as those labeled with $Q$ in the accepting run.
The formula $\xi$ can be appropriately ``inserted'' into $\varphi$,
and hence, the non-emptiness of $(\S,\xi)$
is reducible to non-emptiness of APC,
which is in $\np$.

\subsection{Comparison with other known decidable formalisms}
\label{ss: comparison}

We are going to compare 
the expressiveness of weak ODTA with other known
models with decidable non-emptiness.

\subsubsection{DTD with integrity constraints}

An XML document is typically viewed as a data tree.
The most common XML formalism is Document Type Definition (DTD).
In short, a DTD is a context free grammar and a tree $t$ conforms to a DTD $D$,
if it is a derivation tree of a word accepted by the context free grammar.

The most commonly used XML constraints are integrity constraints
which are of two types.
\begin{itemize}
\item
The {\em key constraint} $key(a)$ are the following constraint:
$$
\forall x \forall y (a(x) \wedge a(y) \wedge x\sim y \to x=y).
$$
\item
The {\em inclusion constraint} $V(a)\subseteq V(b)$ are the following constraint:
$$
\forall x \exists y (a(x) \to  b(y) \wedge x\sim y ).
$$
\end{itemize}
The satisfiability problem of a given DTD $D$ and a collection $\C$ of integrity constraints
asks whether there exists an ordered-data tree $t$ that conforms to the DTD
that satisfies all the constraints in $\C$. 
In~\cite{FL-jacm} it is shown that this problem is $\np$-complete.

\begin{theorem}
\label{t: precise dtd}
Given a DTD $D$ and a collection $\C$ of integrity constraints,
one can construct a weak ODTA $\S$
such that $\L_{data}(\S)$ is precisely the set of ordered-data trees
that conforms to $D$ and satisfies all constraints in $\C$.
\end{theorem}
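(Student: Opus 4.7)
The plan is to construct the weak ODTA $\S = \langle \T, \M, \Gamma_0\rangle$ directly from the DTD $D$ and the collection of integrity constraints $\C$, translating each ingredient into the corresponding component of $\S$. Since a weak ODTA ignores the profile information, I would take the output alphabet of $\T$ to be $\Gamma = \Sigma$, and let $\T$ be the identity letter-to-letter transducer underlying the standard unranked tree automaton equivalent to $D$. Converting a DTD into an unranked tree automaton is routine: the right-hand side of each production $a \to r_a$ is a regular expression over $\Sigma$, which we use directly in the transition function $\delta(q, a) = r_a'$ for a suitable $q$. This guarantees that the $\Sigma$-projection of any output of $\T$ is precisely a tree conforming to $D$.

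Next, I would handle the key constraints by letting
\[
\Gamma_0 = \{a \in \Sigma \mid key(a) \in \C\},
\]
so that the weak ODTA semantics forces all $a$-nodes to carry distinct data values for each such $a$, matching exactly the semantics of $\forall x \forall y (a(x) \wedge a(y) \wedge x \sim y \to x = y)$.

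The remaining work is to realise the inclusion constraints inside $\M$. The key observation is that an ordered-data tree $t$ satisfies $V(a) \subseteq V(b)$ if and only if every symbol $S$ occurring in the string representation $\V_{\Gamma}(t) = S_1 \cdots S_m$ with $a \in S_i$ also satisfies $b \in S_i$; equivalently, no position $S_i$ with $a \in S_i$ and $b \notin S_i$ may appear in $\V_{\Gamma}(t)$. Thus I would take $\M$ to be the finite state automaton over $2^{\Gamma}$ accepting the language
\[
\Bigl(2^{\Gamma} \;\setminus\; \bigcup_{V(a)\subseteq V(b) \,\in\, \C}\{S \subseteq \Gamma \mid a \in S \text{ and } b \notin S\}\Bigr)^{\ast}.
\]
This is a very simple regular (in fact, locally testable) language, and $\M$ has only one state.

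Finally I would verify correctness by unfolding the definitions. For the forward direction, if $t$ conforms to $D$ and satisfies $\C$, then $\T$ has an accepting run on $\mathsf{Proj}(t)$ outputting $t$ itself; every key constraint is matched by the $\Gamma_0$ condition; and every inclusion constraint $V(a) \subseteq V(b)$ forces each $S_i$ in $\V_{\Gamma}(t)$ to avoid the forbidden shape, so $\V_{\Gamma}(t) \in \L(\M)$. For the converse, any $t \in \L_{data}(\S)$ has its $\Sigma$-projection accepted by the automaton underlying $\T$, so it conforms to $D$; the $\Gamma_0$ condition enforces all key constraints; and the condition $\V_{\Gamma}(t) \in \L(\M)$ immediately yields all inclusion constraints. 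There is no real obstacle here; the only mild care is in translating the DTD into a tree automaton in the weak ODTA format, but this is a standard construction.
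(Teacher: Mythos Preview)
Your proposal is correct and matches the paper's proof essentially line for line: the paper also takes $\T$ to be the identity transducer checking conformance to $D$, sets $\Gamma_0 = \{a \mid key(a)\in \C\}$, and lets $\M$ accept $\P^{\ast}$ where $\P = 2^{\Sigma} \setminus \{S \mid a\in S,\ b\notin S \text{ for some } V(a)\subseteq V(b)\in\C\}$. The correctness justification via the equivalence ``$V_t(a)\subseteq V_t(b)$ iff $[S]_t=\emptyset$ for every $S$ with $a\in S$, $b\notin S$'' is exactly the one the paper invokes.
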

\begin{proof}
Let $\Sigma$ be the alphabet of the given DTD $D$.
Consider the following weak ODTA $\S = \langle \T,\M,\Sigma_0 \rangle$.
\begin{itemize}\itemsep=0pt
\item
$\T$ is an identity transducer that checks whether the input tree conforms to DTD $D$.
\item
$\M$ is an automaton that accepts $\P^{\ast}$, where
$\P = 2^{\Sigma} - \{ S \mid a \in S \ \mbox{and} \ b \notin S
\ \mbox{for some} \ V(a)\subseteq V(b) \in \C \}$.
\item
$\Sigma_0 = \{a \mid key(a)\in \C\}$.
\end{itemize}
That $\S$ is the desired ODTA follows immediately from the fact
that for every ordered-data tree $t$,
$V_t(a)\subseteq V_t(b)$ if and only if
$[S]_t = \emptyset$ for all $S$ where $a\in S$, but $b\notin S$.
\end{proof}

The size of the automaton $\M$,
hence the size of $\S$,
produced by our construction in Theorem~\ref{t: precise dtd}
is of exponential size.
This blow-up is tight, as the following example shows.
Consider the case where $\C$ does not contain inclusion constraints.
That is, $\C$ contains only key constraints.
Then any equivalent ODTA $\S = \langle \T,\M,\Sigma_0\rangle$
will have $\L(\M) = (2^{\Sigma}-\{\emptyset\})^{\ast}$.
Thus, we have exponential blow-up in the size of $\M$.
Nevertheless, if we are concerned only with
satisfiability, then we can lower the complexity to $\np$
as stated in the following theorem.

\begin{theorem}
\label{t: dtd}
Given a DTD $D$ and a collection $\C$ of integrity constraints,
one can construct a weak ODTA $\S$ in non-deterministic polynomial time
such that $\L_{data}(\S)\neq \emptyset$ if and only if
there exists an ordered-data tree $t$ that 
conforms to $D$ and satisfies all the constraints in $\C$.
\end{theorem}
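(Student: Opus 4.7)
The plan is to adapt the construction of Theorem~\ref{t: precise dtd} by replacing the exponential-size automaton $\M$, whose alphabet $2^{\Sigma}$ is exponentially large, with a polynomial-size automaton for $(\P')^{\ast}$, where $\P'$ is a non-deterministically guessed small sub-collection of the subsets allowed by the inclusion constraints. The non-determinism is what buys us the complexity drop: for mere satisfiability it is enough to restrict attention to a single guessed family of ``types'' of data values, rather than enumerate all exponentially many legal types.

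Concretely, I would first non-deterministically guess a number $k$ bounded by a fixed polynomial in the input size, and a list of subsets $S_1, \ldots, S_k \subseteq \Sigma$, each described by its $|\Sigma|$ bits. In polynomial time I verify that every $S_i$ respects every inclusion constraint of $\C$, i.e., for each $V(a) \subseteq V(b) \in \C$, that $a \in S_i$ implies $b \in S_i$. Setting $\P' = \{S_1, \ldots, S_k\}$, I output $\S = \langle \T, \M, \Sigma_0 \rangle$, where $\T$ is the identity transducer checking conformance with $D$ (polynomial in $|D|$), $\M$ is a single-state NFA over $2^{\Sigma}$ with $k$ self-loops accepting exactly $(\P')^{\ast}$, and $\Sigma_0 = \{a \mid key(a) \in \C\}$.

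The forward implication is immediate by the same reasoning as in Theorem~\ref{t: precise dtd}: any $t \in \L_{data}(\S)$ conforms to $D$ via $\T$, satisfies every key constraint via $\Sigma_0$, and satisfies every inclusion constraint because only subsets from $\P'$ can appear in $\V_{\Sigma}(t)$ and $\P'$ was forced to respect inclusions. The converse is the crux of the argument: I must show that whenever DTD+$\C$ is satisfiable, some non-deterministic branch of the construction produces a $\P'$ that makes $\L_{data}(\S)$ non-empty. It suffices to exhibit a witness tree $t_0$ satisfying DTD+$\C$ for which $\P'_{t_0} = \{S \mid [S]_{t_0} \neq \emptyset\}$ has polynomial cardinality, for then the branch that guesses $\P' = \P'_{t_0}$ puts $t_0$ into $\L_{data}(\S)$.

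The main obstacle is precisely this small-model property. I expect to obtain it by invoking the polynomial-size certificate implicit in the $\np$ upper bound of~\cite{FL-jacm}: since satisfiability of a DTD with integrity constraints is in $\np$, there is a polynomial-size representation of a witness tree, and in particular the number of distinct $[S]_{t_0}$-classes it induces is polynomial. Failing a direct appeal, a self-contained pumping argument works: starting from any witness, iteratively collapse subtrees sharing the same DTD state and the same set of labels, and identify data values whose $[S]$-type is duplicated; neither operation affects conformance with $D$, keys, or inclusions, since all three depend only on which labels share data values rather than on the actual values. The resulting tree has size, and thus $|\P'_{t_0}|$, polynomial in $|D|+|\Sigma|+|\C|$.
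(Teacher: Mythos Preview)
Your overall architecture is sound and genuinely different from the paper's: you guess an arbitrary polynomial-size family $\P'\subseteq 2^{\Sigma}$ of ``legal'' types and let $\M$ accept $(\P')^{\ast}$, whereas the paper guesses an \emph{ordered partition} $(H_1,\ldots,H_k)$ of $\Sigma$ compatible with the inclusion constraints, and takes $\M$ to accept exactly the descending chains $S_1^{\ast}S_2^{\ast}\cdots S_k^{\ast}$ where $S_i=\Sigma\setminus(H_1\cup\cdots\cup H_{i-1})$.  Both constructions yield a polynomial-size $\S$ and make the forward direction immediate.

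The one soft spot in your argument is the justification of the small-model property you need for the converse.  Appealing to the $\np$ bound of~\cite{FL-jacm} is not quite enough: that proof goes through Presburger arithmetic, so its polynomial certificate is a vector of counts, not a polynomial-size tree, and hence does not directly bound the number of occurring $[S]$-types.  Your pumping sketch is too vague to stand as is.  The clean and self-contained fix is exactly the reassignment trick the paper uses: given any witness $t$, set $N_a=|V_t(a)|$ and reassign data values so that the $a$-nodes carry precisely $\{1,\ldots,N_a\}$.  This preserves conformance with $D$ (labels and structure are untouched), preserves each $key(a)$ (the number of $a$-nodes equals $N_a$), and preserves each inclusion (since $N_a\le N_b$ implies $\{1,\ldots,N_a\}\subseteq\{1,\ldots,N_b\}$).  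In the resulting tree the occurring types are exactly the sets $\{a\mid N_a\ge i\}$, which form a chain of length at most $|\Sigma|$; so you may take the polynomial bound on $k$ to be $|\Sigma|$.  Once you see this, the paper's construction is simply your construction specialised to the case where $\P'$ is guessed as a chain, with the chain encoded by the ordered partition $(H_1,\ldots,H_k)$; this buys a slightly more structured $\M$ but is otherwise equivalent.
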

\begin{proof}
Let $\Sigma$ be the alphabet of the DTD $D$.
We non-deterministically construct a weak ODTA 
$\S = \langle \T,\M,\Sigma_0\rangle$ as follows.
\begin{itemize}
\item
$\T$ is an identity transducer that checks whether the input tree conforms to DTD $D$.
\item
Guess a sequence $(H_1,\ldots,H_k)$ of {\em some} subsets of $\Sigma$ 
such that
\begin{itemize}\itemsep=0pt
\item
$\Sigma$ is partitioned into $H_1\cup\cdots\cup H_k$;
\item
for every two different symbols $a,b \in \Sigma$,
$a,b$ are in the same set $H_i$ if and only if 
both $V(a)\subseteq V(b)$ and $V(b)\subseteq V(a)$ are in $\C$;
\item
if $V(a)\subseteq V(b) \in \C$ and
$V(b)\subseteq V(a) \not\in \C$,
then $a \in H_i$ and $b \in H_j$ and $i \leq j$.
\end{itemize}
Intuitively, the sequence $(H_1,\ldots,H_k)$
tells us the ordering of the elements in $\Sigma$
that respect the inclusion constraints in $\C$,
where if both $V(a)\subseteq V(b)$ and $V(b)\subseteq V(a)$ are in $\C$,
then $a$ and $b$ are tie and they must be in the same set $H_i$.
\item
Let $S_1,\ldots,S_k\subseteq \Sigma$ be such that $S_i = \Sigma - (H_1\cup \cdots \cup H_{i-1})$,
where $S_1=\Sigma$ and $S_k=H_k$.
\item
$\M$ is a non-deterministic automaton over the alphabet $\{S_1,\ldots,S_k\}$, where
the set of states is $\{q_1,\ldots,q_k\}$,
all $q_1,\ldots,q_k$ are the initial states and the final states,
and the transitions are: $(q_i,S_j,q_j)$ for every $1\leq i \leq j \leq k$.
\item
$\Sigma_0 = \{a \mid key(a) \in \C\}$.
\end{itemize}
We claim that
$\L_{data}(\S)\neq \emptyset$ if and only if
there exists an ordered-data tree $t$ that conforms to $D$ and 
satisfies all the constraints in $\C$.

We start with the ``if'' direction.
Suppose $t$ conforms to the DTD $D$ and satisfies all the constraints in $\C$.
For each $a\in \Sigma$,
let $N_a$ be the number of data values found in the $a$-nodes in $t$.
Let $(H_1,\ldots,H_k)$ be a sequence of {\em some} subsets of $\Sigma$ 
such that
\begin{itemize}\itemsep=0pt
\item
$\Sigma$ is partitioned into $H_1\cup\cdots\cup H_k$;
\item
for every two different symbols $a,b \in \Sigma$,
$a,b$ are in the same set $H_i$ if and only if 
$N_a = N_b$;
\item
$a \in H_i$ and $b \in H_j$ and $i \leq j$
if and only if
$N_a \leq N_b$.
\end{itemize}

Consider the following ordered-data tree $t'$ over $\Sigma$, where
$t'$ is obtained from $t$ by reassigning the data values on the nodes in $t$ 
as follows.
For each $a\in \Sigma$,
we assign the set of integers $\{d \mid 1 \leq d \leq N_a\}$
as the data values of $a$-nodes in $t'$.
Such assignment is possible since $N_a$ is no more than
the number of $a$-nodes in $t'$.
With such assignment $t'$ still obeys the constraints in $\C$,
as shown below.
\begin{itemize}
\item
If $key(a)\in \C$, then
$N_a$ is precisely the number of $a$-nodes in $t$, thus, also in $t'$.
Thus, with the data values $\{1,\ldots,N_a\}$,
the data values on the $a$-nodes in $t'$ are all different.
\item
If $V(a)\subseteq V(a') \in \C$,
then obviously, $N_a \leq N_{a'}$.
Thus, $t'$ still satisfies the constraint $V(a)\subseteq V(a')$,
since the data values in $a$-nodes in $t'$ are $\{1,2,\ldots,N_a\}$,
while those in $a'$-nodes are $\{1,2,\ldots,N_{a'}\}$.
\end{itemize}
Now the string $\V(t')$ is of the form $R_1\cdots R_m$, where $m=\max_{a\in \Sigma}(N_a)$
where $R_1 \supseteq R_2 \supseteq \cdots \supseteq R_m$,
and if $R_i \neq R_{i+1}$,
then $R_{i+1}-R_i = H_j$ for some $H_j$ in the sequence $(H_1,\ldots,H_k)$.
By the definition of $\M$,
$\V(t')$ is accepted by $\M$.
That $t$ is accepted by $\T$ is trivial and so is the fact that 
all the data values found in $a$-nodes in $t'$ for each $a \in \Sigma_0$.
Thus, $t' \in \L_{data}(\S)$.

For the ``only if'' direction,
it is sufficient to observe that
for every sequence $(H_1,\ldots,H_k)$ that ``respects''
the inclusion constraints in $\C$ as explained above,
if $\V(t) \in \L(\M)$,
then $t$ satisfies all the inclusion constraints in $\C$.
This completes our proof.
\end{proof}

\subsubsection{Set and linear constraints for data trees}

In the paper~\cite{edt} the {\em set and linear constraints} 
are introduced for data trees.
As argued there, those constraints, together with automata,
 are able to capture many interesting properties commonly used in XML practice.
We review those constraints and show how they can be captured by 
weak ODTA extended with Presburger constraints.

{\em Data-terms} (or just terms) are given by the grammar
$$
\tau := V(a)\ |\ \tau\cup\tau \ |\  \tau \cap \tau \ |\ \overline{\tau}\quad
\quad \mbox{for} \ a\in\Sigma.
$$
The semantics of $\tau$ is defined with respect to a data tree $t$:
$$
\begin{array}{lll}
\semt {{V(a)}} = V_t(a) & \quad & \semt{{\overline{\tau}}}=V_t-\semt{\tau}
\\
\semt {{\tau_1\cap\tau_2}}=\semt{{\tau_1}}\cap\semt{{\tau_2}}
& \quad &
\semt{{\tau_1\cup\tau_2}}=\semt{{\tau_1}}\cup\semt{{\tau_2}}
\end{array}
$$
Recall that $V_t=\bigcup_{a\in\Sigma}V_t(a)$ -- the set of data values
found in the data tree $t$.

A {\em set constraint} is either $\tau = \emptyset$ or
$\tau\neq\emptyset$, where $\tau$ is a term. 
A data tree $t$ satisfies $\tau=\emptyset$,
 written as $t\models\tau=\emptyset$,
if and only if $\semt {\tau}=\emptyset$ (and likewise for $\tau\neq\emptyset$). 

A {\em linear constraint} $\xi$ over the alphabet $\Sigma$
is a linear constraint on the variables $x_a$, for each $a\in \Sigma$
and $z_S$, for each $S \subseteq \Sigma$.
A data tree $t$ satisfies $\xi$, if $\xi$ holds
by interpreting $x_a$ as the number of $a$-nodes in $t$,
and $z_S$ the cardinality $|[S]_t|$.

\begin{theorem}
\label{t: constraints into S-automata}
Given a tree automaton $\A$
and a set $\C$ of set and linear constraints,
there exists a weak ODTA $\langle\S,\varphi\rangle$ extended with Presburger constraints
such that $\L_{data}(\S,\varphi)$ is precisely
the set of ordered-data trees accepted by $\A$
that satisfies all the constraints in $\C$.
Moreover, the construction of $\langle S,\varphi\rangle$
takes exponential time in the size of $\A$ and $\C$.
\end{theorem}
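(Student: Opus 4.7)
The plan is to let $\T$ be an identity transducer (so $\Gamma=\Sigma$) whose underlying tree automaton is $\A$, let $\M$ be the trivial automaton accepting $(2^{\Sigma}\setminus\{\emptyset\})^{\ast}$, let $\Gamma_0=\emptyset$, and push the entire semantics of $\C$ into the Presburger formula $\varphi$. Recall that in a weak ODTA extended with a Presburger constraint the formula $\varphi$ is evaluated with $x_a$ interpreted as the number of $a$-nodes in the output tree $t'$ and $y_S$ as the number of positions of $\V_{\Sigma}(t')$ labelled by $S$, which by the definition of the string representation is exactly $|[S]_t|$. Hence the variables $(x_a,y_S)$ of $\varphi$ are in immediate correspondence with the variables $(x_a,z_S)$ of the linear constraints, and any linear constraint in $\C$ can be conjoined verbatim into $\varphi$ after renaming $z_S$ to $y_S$.

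For the set constraints, the key observation is that for every data-term $\tau$ one can compute, in time $O(|\tau|\cdot 2^{|\Sigma|})$, a family $\F_{\tau}\subseteq 2^{\Sigma}$ such that $\semt{\tau}=\bigcup_{S\in\F_{\tau}}[S]_t$ for every data tree $t$. The family is defined by viewing $\tau$ as a propositional formula over the atoms $\{V(a)\}_{a\in\Sigma}$ and letting $\F_{\tau}=\{S\subseteq\Sigma\mid S\text{ satisfies }\tau\text{ when }V(a)\text{ is read as the atom }a\in S\}$; the identity $\semt{\tau}=\bigcup_{S\in\F_{\tau}}[S]_t$ then follows by a straightforward structural induction on $\tau$, using $V_t(a)=\bigcup_{S\ni a}[S]_t$ and the pairwise disjointness of the blocks $[S]_t$. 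Consequently the set constraint $\tau=\emptyset$ is translated into $\bigwedge_{S\in\F_{\tau}}y_S=0$ and $\tau\neq\emptyset$ into $\bigvee_{S\in\F_{\tau}}y_S\geq 1$, both of which are existential Presburger formulas.

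Take $\varphi$ to be the conjunction of the translations of all constraints in $\C$. Correctness $\L_{data}(\S,\varphi)=\{t\mid\sfProj(t)\in\L(\A)\text{ and }t\models\C\}$ then follows at once: the transducer $\T$ enforces $\sfProj(t)\in\L(\A)$, the automaton $\M$ imposes no restriction beyond forbidding the empty label (which never appears in $\V_{\Sigma}(t)$ anyway), and $\varphi$ captures each constraint of $\C$ by virtue of the interpretations $x_a=\#\,a\text{-nodes}$ and $y_S=|[S]_t|$ together with the decomposition $\semt{\tau}=\bigcup_{S\in\F_{\tau}}[S]_t$. For the complexity bound, $\T$ is polynomial in $\A$; the alphabet $2^{\Sigma}$ of $\M$ has $2^{|\Sigma|}$ symbols; each $\F_{\tau}$ is enumerated in time $O(|\tau|\cdot 2^{|\Sigma|})$; and the length of $\varphi$ is $O(2^{|\Sigma|}\cdot|\C|)$. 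The whole construction is therefore exponential in the combined size of $\A$ and $\C$. There is no real obstacle to the proof; the theorem is essentially a translation exercise, and the only mildly delicate point is the propositional characterisation of data-terms by $\F_{\tau}$, whose correctness rests entirely on the disjoint decomposition $V_t=\bigcup_{S\subseteq\Sigma}[S]_t$ that turns Boolean combinations of the $V_t(a)$'s into unions of the corresponding blocks.
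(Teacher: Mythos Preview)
Your proof is correct and rests on the same key observation as the paper's: the propositional decomposition of a data-term $\tau$ into a family of label-sets (your $\F_{\tau}$, the paper's $\bbS(\tau)$) so that $\semt{\tau}=\bigcup_{S\in\F_{\tau}}[S]_t$, together with the disjointness of the blocks $[S]_t$.

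The one point of departure is where the set constraints are encoded. The paper keeps $\varphi$ for the linear constraints only and builds a non-trivial $\M$ that rejects any word containing a symbol from $\bbS(\tau)$ when $\tau=\emptyset\in\C$, and requires at least one symbol from $\bbS(\tau)$ when $\tau\neq\emptyset\in\C$. You instead take $\M$ to be trivial and push the set constraints into $\varphi$ via $\bigwedge_{S\in\F_{\tau}} y_S=0$ and $\bigvee_{S\in\F_{\tau}} y_S\geq 1$. Both encodings are equivalent because $y_S=|[S]_t|$ counts exactly the occurrences of the symbol $S$ in $\V_{\Sigma}(t)$, so ``$S$ does not appear'' and ``$y_S=0$'' coincide. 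Your route avoids constructing a bespoke $\M$ at the price of a longer Presburger formula; the paper's route keeps $\varphi$ small but shifts the exponential blow-up into $\M$. Either way the overall construction is exponential, and the correctness argument is the same.
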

\begin{proof}
The proof is simply a restatement of the proof in~\cite{edt}
into a language of weak ODTA.
We need the following notation.
For a data term $\tau$, 
we define a family $\bbS(\tau)$ of subsets of ${\Sigma}$ as follows.
\begin{itemize}
\item
If $\tau = V(a)$, then
$\bbS(\tau) = \{S \mid a \in S \ \mbox{and} \ S \subseteq \Sigma\}$.
\item
If $\tau = \overline{\tau}_1$,
then $\bbS(\tau) = 2^{\Sigma} - \bbS(\tau_1)$.
\item
If $\tau = \tau_1 \star \tau_2$,
then $\bbS(\tau) = \bbS(\tau_1) \star \bbS(\tau_2)$, where $\star$ is
$\cap$ or $\cup$.
\end{itemize}
It follows that for every data tree $t$, we have
$\semt{{\tau}} = \bigcup_{S\in \sbbS(\tau)} [S]_t$.
Recall that the sets $[S]_t$'s are disjoint. 

The desired $\S = \langle \T,\M,\Sigma_0\rangle$ is defined as follows.
The transducer $\T$ is the identity transducer $\A$, and
$\Sigma_0 = \emptyset$.
The automaton $\M$ accepts a word $v \in (2^{\Sigma})^{\ast}$
if and only if
\begin{itemize}
\item[C1.]
for every set constraint $\tau = \emptyset$,
$v$ does not contain any symbol from $\bbS(\tau)$;
\item[C2.]
for every set constraint $\tau \neq \emptyset$,
$v$ contains at least one symbol from $\bbS(\tau)$.
\end{itemize}
The formula $\xi$ is the conjunction of all the linear constraints in $\C$.

That $\L_{data}(\S,\xi)$ is indeed precisely
the set of ordered-data trees accepted by $\A$
that satisfies all the constraints in $\C$
follows immediately from the definition of $\bbS$.
The exponential upper-bound occurs while constructing the automaton $\M$
which requires the enumeration of each element of $2^{\Sigma}$
and checking both conditions C1 and C2 are satisfied.
This completes the proof of Theorem~\ref{t: constraints into S-automata}.
\end{proof}

\subsubsection{$\FO^2(+1,\dvSucc)$ over text}
\label{ss: fo2 successor data value}

Here we focus our attention on ordered-data words,
which can be viewed as trees where 
each node has at most one child.
We write $w = {a_1 \choose d_1}\cdots {a_n \choose d_n}$
to denote ordered-data word
in which position $i$ has label $a_i$ and data value $d_i$.
It is called a {\em text}, if
all the data values are different and
the set of data values $\{d_1,\ldots,d_n\}$
is precisely $\{1,\ldots,n\}$.

It is shown in~\cite{fo2-amal} that
the satisfiability problem for $\FO^2(+1,\dvSucc)$
over text is decidable.\footnote{The definition of text in~\cite{fo2-amal}
is slightly different, but it is equivalent 
to our definition. However, it turns out that
the key lemma proved in~\cite{fo2-amal} has a gap
which is filled later on in~\cite{figueira-arxiv}.
The final result is still correct though.}
The following theorem shows that
this decidability can be obtained via weak ODTA.

\begin{theorem}
\label{t: text automata to S-automata}
For every formula $\varphi \in \FO^2(+1,\dvSucc)$,
one can construct effectively a weak ODTA $\S$ such that 
\begin{itemize}
\item
for every text $w$, if $w \in \L_{data}(\varphi)$,
then $w \in \L_{data}(\S)$;
\item
for every ordered-data word $w \in \L_{data}(\S)$,
there exists a text $w' \in \L_{data}(\varphi)$ such that
$\sfProj(w)=\sfProj(w')$.
\end{itemize}
The construction of $\S$ takes double exponential time in the size of $\varphi$.
\end{theorem}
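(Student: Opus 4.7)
The strategy is to build $\S$ explicitly, bypassing the non-elementary formula-to-automaton conversion hidden in Theorem~\ref{t: logic weak S-automata}, so that the construction fits within the claimed double-exponential bound. The main idea is to enrich each position's label with enough local 1-type information that every clause of $\varphi$ becomes verifiable by (i) local checks along positional adjacency (performed by $\T$), (ii) local checks along data-value adjacency (performed by $\M$ reading $\V_{\Gamma}$), and (iii) bounded counting of 1-type occurrences (also performed by $\M$).

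First I would put $\varphi$ into Scott normal form, in polynomial time, obtaining
\[
\varphi' \;=\; \exists X_1\cdots\exists X_k\ \Bigl(\forall x\forall y\,\alpha(x,y)\;\wedge\;\bigwedge_{i=1}^{p}\forall x\exists y\,\beta_i(x,y)\Bigr),
\]
with $\alpha,\beta_i$ quantifier-free over $\Sigma\cup\{X_1,\ldots,X_k\}\cup\{+1,\dvSucc,=\}$. Let $T_1$ be the set of 1-types; since a 1-type fixes one label in $\Sigma$ and a subset of $\{X_1,\ldots,X_k\}$, $|T_1|=|\Sigma|\cdot 2^k$ is single-exponential in $|\varphi|$. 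The output alphabet $\Gamma$ consists of tuples $(\tau_0,\tau_L,\tau_R,\tau_D^-,\tau_D^+,\bar w)$ where $\tau_0$ is the position's own 1-type, the next four coordinates (from $T_1\cup\{\bot\}$) record the 1-types of the two positional and the two data-value neighbors (with $\bot$ for the missing ones at extrema), and $\bar w=(w_1,\ldots,w_p)$ guesses, for each $\beta_i$, a realized 2-type together with a flag specifying whether the witness $y$ equals $x$, is $x\pm 1$, is $\dvSucc^{\pm 1}(x)$, or is \emph{distant}. Then $|\Gamma|$ is still single-exponential in $|\varphi|$.

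The transducer $\T$ is a letter-to-letter identity on $\Sigma$ with a nondeterministic guess of the enrichment, checking that (a) adjacent positions agree on the shared positional-neighbor 1-type, (b) the 2-type induced by each positional adjacency satisfies $\alpha$, and (c) whenever some $w_i$ requires a witness at $x\pm 1$, the corresponding neighbor indeed realizes the promised 1-type. Symmetrically, setting $\Gamma_0:=\Gamma$ forces all data values distinct, so $\V_{\Gamma}(w)$ is a sequence of singletons, one per position, listed in data-value order. The automaton $\M$ reads $\V_{\Gamma}(w)$ and verifies that (a) the fields $\tau_D^\pm$ of each symbol are consistent with its $\V_{\Gamma}$-neighbors, (b) the $\dvSucc$-induced 2-types satisfy $\alpha$, (c) data-value witnesses promised by some $w_i$ are realized, and finally (d) $\M$ maintains in its state, for each $\tau\in T_1$, the count of $\tau$-positions already seen, saturated at the threshold $6$. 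These counts are used to discharge the two \emph{distant} obligations: for each existential clause demanding a distant $\tau_2$-witness at some $x$, the global count of $\tau_2$-positions must strictly exceed the $\le 5$ $\tau_2$-positions that $x$'s enrichment already marks as close; and for each pair of 1-types $(\tau_1,\tau_2)$ whose ``no atomic relation'' 2-type is rejected by $\alpha$, the counts of $\tau_1$ and $\tau_2$ positions must be small enough that every $\tau_1$/$\tau_2$ pair is forced close. The state space of $\M$ is bounded by $7^{|T_1|}$, which is double-exponential in $|\varphi|$; every other ingredient is at most single-exponential, yielding the claimed bound.

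Correctness then reduces to two routine directions. If a text $w$ satisfies $\varphi$, read off a witnessing $X_i$-assignment and for each $x$ and each $i$ pick a $y$ realizing $\beta_i$; this uniquely determines an enrichment accepted by $\S$. Conversely, any $w\in\L_{data}(\S)$ has distinct data values, and projecting the enrichment recovers the $X_i$'s; the transducer and automaton checks directly validate every conjunct of $\varphi'$ (positional and data-value locals by $\T$ and $\M$ respectively, and all distant clauses by the counts in $\M$), so the text obtained from $w$ by compressing its data values to $\{1,\ldots,n\}$ in order satisfies $\varphi$. The principal technical hurdle is the distant case, both for the universal quantifier and for the existential witnesses; handling it uniformly by saturated counting inside $\M$, rather than by attempting a global logical translation, is what keeps the construction within double-exponential time.
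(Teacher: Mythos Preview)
Your approach is genuinely different from the paper's. The paper does not re-derive any $\FO^2$-to-automaton translation; it invokes as a black box the text-automaton construction of \cite{fo2-amal}, which already runs in double-exponential time and produces a pair $(T_1,T_2)$: a letter-to-letter word transducer $T_1$ reading the marked string projection $msp(w)$ (each position tagged by whether its right positional neighbor carries the data value $\pm 1$) and a finite automaton $T_2$ over the output alphabet. The weak ODTA is then immediate: $\T$ simulates $T_1$ by guessing the $\{-1,1,*\}$ marks, $\M$ is literally $T_2$, and $\Gamma_0=\Gamma$. What you are sketching is in effect an independent proof of the \cite{fo2-amal} result, carried out directly in the ODTA vocabulary; that buys self-containment, but at the cost of redoing the hard part.

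The substantive gap is your treatment of the universal distant case. You write that when the ``no atomic relation'' $(\tau_1,\tau_2)$ 2-type violates $\alpha$, ``the counts of $\tau_1$ and $\tau_2$ positions must be small enough that every $\tau_1/\tau_2$ pair is forced close.'' It is true that the counts are then bounded (each $\tau_1$-position has at most four close neighbours), but bounded counts do not by themselves \emph{force} closeness, and your enrichment records only the 1-\emph{types} of the four neighbours, not their identities. With anonymous neighbour types, $\M$ cannot decide whether two specific positions are positionally adjacent: two distinct $\tau_1$-positions can each have $\tau_R=\tau_2$ without being adjacent to the \emph{same} $\tau_2$-position. The usual repair is a second enrichment pass: once the count of a rare type is known to be bounded, give each occurrence a bounded identifier and record identifiers (not just 1-types) in the neighbour fields; then $\M$ can accumulate the exact set of close pairs and compare it against the required full product. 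A smaller version of the same issue bites your existential distant check: when two of the four neighbour slots coincide (e.g.\ the left positional neighbour is also the data predecessor), counting $\tau_2$'s among the neighbour \emph{fields} overcounts the distinct close $\tau_2$-positions, so your threshold test can wrongly reject a text that does have a distant witness. Finally, note that $\Gamma_0=\Gamma$ alone does not force all data values distinct---only positions sharing a $\Gamma$-label are separated---so you also need $\M$ to reject non-singleton letters in $\V_\Gamma$. All of these fixes fit within your double-exponential budget, but as written the argument does not go through.
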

\begin{proof}
In~\cite{fo2-amal}, 
the decidability is proved by constructing its so called text automata,
also defined in~\cite{fo2-amal}.
We review the precise definition here.
Let $w={a_1 \choose d_1} \cdots {a_n \choose d_n}$ be a text over the alphabet $\Sigma$.
Therefore, $\V(w) = S_1\cdots S_n$ is such that each $S_i$ is a singleton.

We define $msp(w)$, the marked string projection of $w$,
as the word $(a_0,b_0)\ldots (a_n,b_n)$, where $b_i \in \{-1,1,*\}$
and
$$
b_i =
\left\{
\begin{array}{ll}
-1 & \mbox{if} \ 1 \leq i < n \ \mbox{and} \ d_{i+1} + 1 = d_i
\\
1 & \mbox{if} \ 1 \leq i < n \ \mbox{and} \ d_{i} + 1 = d_{i+1}
\\
* & \mbox{otherwise}
\end{array}
\right.
$$
A text automaton over the alphabet $\Sigma$ is pair $(T_1,T_2)$,
where
\begin{itemize}
\item 
$T_1$ is a non-deterministic letter-to-letter 
word transducer with the input alphabet $\Sigma\times\{-1,1,*\}$
and the output alphabet $\Gamma$.
\item
$T_2$ is a non-deterministic finite state automaton over $\Gamma$.
\end{itemize}
A text $w = {a_1 \choose d_1}\cdots {a_n \choose d_n}$
is accepted by the text automaton $(T_1,T_2)$,
if
\begin{itemize}
\item
$msp(w)$ is accepted by $T_1$, yielding a string $\alpha_1 \cdots \alpha_n$;
\item
the string $\alpha_{i_0}\cdots \alpha_{i_n}$ is accepted by $T_2$,
where the indexes $i_1,\ldots,i_n$ are such that
$1=d_{i_1}< d_{i_2} < \cdots < d_{i_n} = n$.
\end{itemize}
It is shown in~\cite{fo2-amal} that
for every $\varphi \in \FO^{2}(+1,\dvSucc)$,
one can construct effectively a text automaton $\A$
such that for every text $w$, $w \in \L_{data}(\varphi)$
if and only if $w \in \L_{data}(\A)$.

Now we are going to show how to get the desired ODTA $\S = \langle \T,\M,\Gamma\rangle$.
Let $(T_1,T_2)$ be the text automaton as above.
On input ordered-data word $w = {a_1 \choose d_1}\cdots {a_n \choose d_n}$,
$\S$ performs the following.
\begin{itemize}
\item 
The automaton $\T$ simulates $T_1$, 
by guessing $msp(w)$ and outputs its $\Gamma$-projection,
while store its $\{-1,1,*\}$-projection in its states.
\item
The automaton $\M$ is simply $T_2$.
\end{itemize}
It is straightforward to see that
such $\S$ is the desired weak ODTA.
The analysis of the complexity is as follows.
The construction of the text automaton $(T_1,T_2)$ takes double exponential time
in the size of $\varphi$.
See~\cite[Lemmas~5 and~6]{fo2-amal}.
The construction of ODTA $\S$ takes polynomial time in the size of $(T_1,T_2)$.
Altogether, it takes double exponential time to construct $\S$ from the original formula $\varphi$.
This completes the proof of Theorem~\ref{t: text automata to S-automata}.
\end{proof}

\section{An Undecidable Extension}
\label{s: undecidable}

In this section we would like to remark on an undecidable extension of ODTA.
Recall the language in Example~\ref{eg: two a-nodes}.
It has already noted in the proof of Proposition~\ref{p: boolean closure}
that its complement is not accepted by any ODTA.
Formally, the complement of the language in Example~\ref{eg: two a-nodes}
can be expressed with formula of the form:
\begin{equation}
\label{eq: monotone}
\forall x \;\forall y \;
\bigvee_{a\in \Sigma_0} a(x)
\wedge
\bigvee_{a\in \Sigma_0} a(y)
\wedge
\eda^*(x,y)
\to
x \prec y,
\end{equation}
where $\Sigma_0 \subseteq \Sigma$ and $\eda^*$ denotes the transitive closure of $\eda$.
In the following we are going to show that
given an ODTA and a collection $\C$ of formulas of the form~(\ref{eq: monotone}),
it is undecidable to check whether 
there is an ordered-data tree $t \in \L_{data}(\S)$ such that 
$t \models \psi$, for all $\psi \in \C$.

The proof is simply an observation that the proof of~\cite[Proposition 29]{BDMSS11}
can be applied directly here.
In~\cite[Proposition 29]{BDMSS11} it is proved that
the satisfiability of $\FO^2(\eda,\eda^*,\sim,\prec)$ is undecidable.\footnote{Technically,
the undecidability in~\cite[Proposition 29]{BDMSS11}
is proved on data strings over the logic $\FO^2(+1,<,\sim,\prec)$,
which of course, is equivalent to $\FO^2(\eda,\eda^*,\sim,\prec)$.}
The reduction is from Post Correspondence Problem (PCP),
where given an instance of PCP,
one can effectively construct a formula of the form $\varphi\wedge\psi$,
where $\varphi \in \FO^2(\eda,\eda^*,\sim)$ and 
$\psi$ is a formula of the form~(\ref{eq: monotone}).
Since $\varphi$ can be captured by ODTA, 
the undecidability of ODTA extended with formulas of the form~(\ref{eq: monotone})
follows immediately.

At this point we would also like to point out that 
extending ODTA with operation such as addition on data values
will immediately yield undecidability.
This can be deduced immediately from~\cite{halpern} 
where we know that together with unary predicates,
addition yields undecidability.

\section{When the Data Values are Strings}
\label{s: string data value}

In this section we discuss data trees where
the data values are strings from $\{0,1\}^{\ast}$, instead of natural numbers.
We call such trees {\em string data trees}.
There are two common kinds of order for strings:
the prefix order, and the lexicographic order.
Strings with lexicographic order are simply linearly ordered domain,
thus, ODTA can be applied directly in such case.

For the prefix order, we have to modify the definition of ODTA.
Consider a string data tree $t$ over the alphabet $\Sigma$.
Let $V_t$ be the set of data values found in $t$.
We define $\V_{\Sigma}(t)$ as a {\em tree} over the alphabet $2^{\Sigma}$,
where
\begin{itemize}
\item
$\Domain(\V_{\Sigma}(t))$ is $V_t \cup \{\epsilon\}$;
\item
for $u,v \in \Domain(\V_{\Sigma}(t))$,
$u$ is a parent of $v$ if $u$ is a prefix of $v$
and there is no $w\in \Domain(\V_{\Sigma}(t))$
such that $u$ is a prefix of $w$ and $w$ is a prefix of $v$;
\item
for $u \in \Domain(\V_{\Sigma}(t))$
the label of $u$ is $S$, if $u \in [S]_t$;
and $\scroot$, if $u = \epsilon$.
\end{itemize}
We call $\V_{\Sigma}(t)$ the {\em tree representation}
of the data values in $t$.
Consider an example of a string data tree in Figure~\ref{fig: string data tree}.
We have 
$$
\begin{array}{lll}
~[\{a\}]_t = \{0101\}  &  & [\{b\}]_t = \{0100\}
\\
~[\{c\}]_t = \{01011\} & & [\{a,b\}]_t = \{01\}
\\
~[\{b,c\}]_t = \{01000\} & & [\{a,b,c\}]_t = \{010011\}.
\end{array}
$$
So $\Domain(\V_{\Sigma}(t)) = \{01,0100,0101,010011,010000,01011\}$,
and 
\begin{itemize}\itemsep=0pt
\item
$01$ is the parent of $0100$ and $0101$;
\item
$0100$ is the parent of $010011$ and $010000$; and
\item
$0101$ is the parent of $01011$.
\end{itemize}

\begin{figure*}
\begin{picture}(300,175)(-150,-120)

\put(-52,40){${a \choose 01}$}
\put(-44,32){\vector(-2,-1){60}}
\put(-44,32){\vector(-3,-4){24}}
\put(-44,32){\vector(3,-4){24}}
\put(-44,32){\vector(2,-1){60}}

\put(-118,-10){${b \choose 0100}$}
\put(-80,-10){${c \choose 01011}$}
\put(-35,-10){${a \choose 010011}$}
\put(5,-10){${a \choose 0101}$}

\put(-69,-18){\vector(-3,-4){24}}
\put(-69,-18){\vector(3,-4){24}}
\put(-69,-18){\vector(2,-1){64}}

\put(-102,-62){${b \choose 01}$}
\put(-60,-62){${b \choose 010011}$}
\put(-15,-62){${a \choose 0101}$}

\put(-93,-70){\vector(0,-1){30}}
\put(-45,-70){\vector(0,-1){30}}
\put(-4,-70){\vector(0,-1){30}}

\put(-110,-110){${c \choose 010011}$}
\put(-62,-110){${c \choose 010000}$}
\put(-20,-110){${b \choose 010000}$}


\put(134,40){\small $\scroot$}

\put(145,32){\vector(0,-1){30}}

\put(135,-10){\small $\{a,b\}$}
\put(146,-18){\vector(-3,-4){24}}
\put(146,-18){\vector(3,-4){24}}

\put(115,-62){\small $\{b\}$}
\put(165,-62){\small $\{a\}$}

\put(121,-70){\vector(-3,-4){24}}
\put(121,-70){\vector(3,-4){24}}
\put(173,-70){\vector(0,-1){32}}

\put(80,-110){\small $\{a,b,c\}$}
\put(132,-110){\small $\{b,c\}$}
\put(166,-110){\small $\{c\}$}

\end{picture}

\caption{An example of a string data tree (on the left) and the tree representation of its data values 
(on the right).}
\label{fig: string data tree}
\end{figure*}
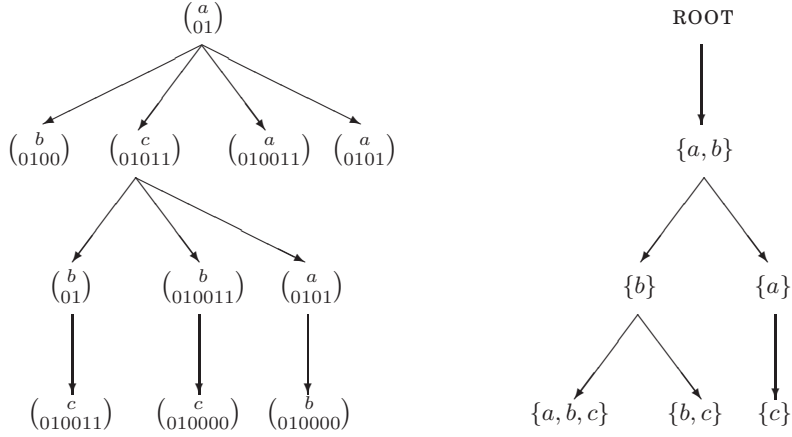

Now an ODTA for string data trees 
is  $\S = \langle \T,\A,\Gamma_0\rangle$, where 
$\T$ is a letter-to-letter transducer from $\Sigma\times\{\top,\bot,*\}^3$
to $\Gamma$;
$\A$ is an unranked tree automaton over the alphabet $2^{\Gamma}$;
$\Gamma_0 \subseteq \Gamma$.
The requirement for acceptance is the same as in 
Section~\ref{s: S-automata},
except that $\A$ takes a tree over the alphabet $2^{\Gamma}$ as the input.

We observe that in the proof of the decidability of
the non-emptiness of ODTA $\S= \langle \T,\M,\Gamma_0\rangle$,
the automaton $\M$ is converted in polynomial time into a Presburger formula
by applying Proposition~\ref{p: aut-to-presburger},
which actually holds for tree automata.
Hence, the decision procedures in Sections~\ref{s: S-automata} and~\ref{s: weak S-automata}
can also be applied to string data trees.

\section{Concluding Remarks}
\label{s: conclusion}

In this paper we study data trees in which the data values come from a linearly ordered domain, 
where in addition to equality test, 
we can test whether the data value in one node is greater than the other. 
We introduce ordered-data tree automata (ODTA), provide its logical characterisation,
and prove that its non-emptiness problem is decidable. 
We also show the logic $\EMSO^2(\eda,\era,\sim)$
can be captured by ODTA.

Then we define weak ODTA,
which essentially are ODTA without the ability
to perform equality test on data values on two adjacent nodes. 
We provide its logical characterisation. 
We show that a number of existing formalisms and models studied in the literature so far 
can be captured already by weak ODTA.
We also show that the definition of ODTA can be easily modified,
to the case where the data values come from a partially ordered domain, such as strings.

We believe that the notion of ODTA provides
new techniques to reason about ordered-data values on unranked trees,
and thus, can find potential applications in practice.
We also prove that ODTA capture
various formalisms on data trees studied so far in the literature.
As far as we know this is the first formalism for data trees
with neat logical and automata characterisations.

\vspace{0.5 cm}
\noindent
{\bf Acknowledgement.}
The author would like to thank FWO for their 
generous financial support under the scheme
FWO Marie Curie fellowship. 
The author also thanks Egor V.~Kostylev for careful proof reading of this paper,
as well as Nadime Francis for pointing out the reference~\cite{halpern}.
Finally, the author thanks the anonymous referees, 
both the conference and the journal versions,
for their careful reading and comments which greatly improve the paper.

\vspace{0.5 cm}

\bibliography{tocl-aut-fo2--2012-bib}
\bibliographystyle{acmtrans}



\end{document}